\newtheorem{theorem}{Theorem}
\newtheorem{example}{Example}
\newtheorem{proposition}{Proposition}
\newtheorem{lemma}{Lemma}
\newtheorem{corollary}{Corollary}
\newtheorem{remark}{Remark}
\theoremstyle{definition}
\newtheorem{definition}{Definition}
\begin{document}
	
	\title{Data Disclosure with Non-zero Leakage and Non-invertible Leakage Matrix}
\vspace{-5mm}
\author{
		\IEEEauthorblockN{Amirreza Zamani, Tobias J. Oechtering, Mikael Skoglund \vspace*{0.5em}
			\IEEEauthorblockA{\\
                              Division of Information Science and Engineering, KTH Royal Institute of Technology \\
                              %$^\dagger$Nokia Bell Labs, Holmdel, NJ, USA\\
				Email: \protect amizam@kth.se, oech@kth.se, mikael.skoglund@ee.kth.se }}%\vspace*{-2.1em}
\thanks{This work was funded in
	part by the Swedish research council under contract 2019-03606. A. Zamani , M. Skoglund and T. J. Oechtering are with the division of information science and engineering, School of Electrical Engineering and
	Computer Science, KTH Royal Institute of Technology, 100 44 Stockholm,
	Sweden (e-mail: amizam@kth.se; oech@kth.se; mikael.skoglund@ee.kth.se).}}
	%\author{
%		\IEEEauthorblockN{Behrad Soleymani}
%		\IEEEauthorblockA{School of Electrical and\\Computer Engineering\\
%			University of Tehran\\
%			Tehran, Iran, 14174\\
%			Email: b.soleymani@ut.ac.ir}
%		\and
%		\IEEEauthorblockN{Amirreza Zamani}
%		\IEEEauthorblockA{School of Electrical and\\Computer Engineering\\
%			University of Tehran\\
%			Tehran, Iran, 14174\\
%			Email: a.zamani@ut.ac.ir}
%		\and
%	\IEEEauthorblockN{Vahid Shahmansouri}
%	\IEEEauthorblockA{School of Electrical and\\Computer Engineering\\
%		University of Tehran\\
%		Tehran, Iran, 14174\\
%		Email: vmansouri@ut.ac.ir}
%	
%	}
	\maketitle

\begin{abstract}
	We study a statistical signal processing privacy problem, where an agent observes useful data $Y$ and wants to reveal the information to a user. Since the useful data is correlated with the private data $X$, the agent employs a privacy mechanism to generate data $U$ that can be released. We study the privacy mechanism design that maximizes the revealed information about $Y$ while satisfying a strong $\ell_1$-privacy criterion.
	When a sufficiently small leakage is allowed, we show that the optimizer vectors of the privacy mechanism design problem have a specific geometry, i.e., they are perturbations of fixed vector distributions. This geometrical structure allows us to use a local approximation of the conditional entropy. By using this approximation the original optimization problem can be reduced to a linear program so that an approximate solution for privacy mechanism can be easily obtained. 
	The main contribution of this work is to consider non-zero leakage with a non-invertible leakage matrix.
	In an example inspired by water mark application, we first investigate the accuracy of the approximation. Then, we employ different measures for utility and privacy leakage to compare the privacy-utility trade-off using our approach with other methods. %In this example, probability of error based on MAP decision making rule and normalized MMSE$(Y|U)$ has been used for measuring the utility and for the privacy leakage normalized MMSE$(X|U)$ has been used.
	In particular, it has been shown that by allowing small leakage, significant utility can be achieved using our method compared to the case where no leakage is allowed.    
	%We study an information-theoretic privacy problem, where an agent observes useful data $Y$ and wants to reveal the information to a user. Since the useful data is correlated with sensitive data $X$, the agent employs a privacy mechanism to produce data $U$ that can be disclosed. Thus, we study the privacy mechanism design that maximizes the revealed information about $Y$ while satisfying a $l_1$-privacy criterion under the Markov chain $X-Y-U$.
	
	%When a sufficiently small leakage is allowed, we show that optimizers of the design problem have specific structure (perturbed vectors around fixed distribution vectors) which allows us to use local approximation of mutual information. By using this approximation the original optimization problem can be reduced to a linear programming problem and the privacy mechanism design can be approximately obtained.
	
	%In one extension Euclidean norm is used for the privacy criterion and we show that the privacy problem can be approximated by a convex problem. Furthermore, when $|\mathcal{X}|=2$ it can be shown that approximated problem is a linear programming.  
\end{abstract}

\section{Introduction}
The amount of data generated by software system, interconnected sensors that record and process signals from the physical world, robots and humans is growing rapidly. Direct disclosure of raw data can cause privacy breaches through illegitimate inferences. Therefore, privacy mechanisms are needed to control the disclosure of the data.

Privacy mechanism design from a statistical signal processing perspective is receiving increasing attention recently and related results can be found in \cite{yamamoto, sankar, makhdoumi, dwork1, oech, issa, issa3, khodam, Calmon2, asoodeh1, houi, borz,Total, gun, Wyner, courtade, sankar2, deniz4, asoodeh3, Calmon1, 7888175, nekouei2, Johnson, aamir,jende}.
Specifically, the concept of maximal leakage has been introduced in \cite{issa} and used in \cite{issa2} for the Shannon cipher system. Furthermore, some bounds on the privacy utility trade-off are derived. In \cite{makhdoumi}, the concept of a privacy funnel is introduced where the privacy utility trade-off considering log-loss as privacy metric and distortion metric for utility has been studied. The concept of differential privacy is introduced in \cite{dwork1, dwork2}, which aims to answer queries in a privacy preserving approach using minimizing the chance of identifying the membership in an statistical database.
 In \cite{oech}, the hypothesis test performance of an adversary is used to measure the privacy leakage.
Fundamental limits of privacy utility trade-off are studied in \cite{Calmon2}, measuring the leakage using estimation-theoretic guarantees.
A source coding problem with secrecy is studied in \cite{yamamoto}.
In both \cite{yamamoto} and \cite{sankar}, the privacy-utility trade-off is considered using expected distortion as a measure of utility and equivocation as measure of privacy.
%The concept of wire-tap channel has been introduced in \cite{Wyner}, which aims to design an encoder and decoder to disclose information and keeping the private data protected, using error probability of the decoder to measure utility and mutual information for privacy.
%In \cite{courtade}, a multi-terminal source coding problem with log-loss as a measure of distortion is studied.
 In \cite{asoodeh1}, maximal correlation either mutual information are used for measuring the privacy and properties of rate-privacy functions are studied.
 \begin{figure}[]
 	\centering
 	\includegraphics[width = 0.4\textwidth]{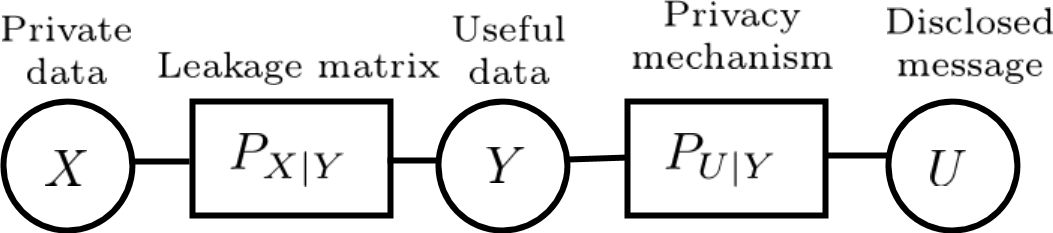}
 	\caption{In this system model, disclosed data $U$ is designed by a privacy mechanism which maximizes the information disclosed about $Y$ and satisfies the strong $l_1$-privacy criterion.}
 	\label{fig:sysmodel1}
 \end{figure}
In this paper, an agent tries to reveal some useful information to a user as shown in Fig.~\ref{fig:sysmodel1}. Random variable (rv) $Y$ denotes the useful data and is dependent on private data denoted by rv $X$, which may not be accessible by the agent. The useful data can not be disclosed directly since the agent wants to protect the private data. Thus, the agent employs a privacy mechanism to design disclosed data denoted by rv $U$. $U$ should disclose as much information about $Y$ as possible and fulfill the privacy criterion simultaneously. Here we extend \cite{borz} and our previous work \cite{khodam} dealing with an open issue, i.e., considering non-zero leakage and non-invertible leakage matrix. In this work we consider a point-wise $\ell_1$-privacy criterion instead of point-wise $\chi^2$-privacy criterion considered in \cite{khodam}, which is a small variation that will be discussed. 
In the conference version \cite{aamir}, an initial brief study of the privacy mechanism design has been presented. Here, a complete study is presented that includes the studies of the geometrical properties of optimizing vectors, range of permissible leakage, numerical examples to illustrate the design, a numerical example that compares our approximate solution with the exact solution and finally numerical studies where employ different measures for privacy and utility to evaluate further our method.
%Furthermore, in the conference version \cite{aamir}, we study the mechanism design briefly, however here we add a numerical example to summarize the design, we study the geometrical properties of optimizing vectors, we investigate the range of permissible leakage and finally in an example the exact and approximate solutions are compared, then, we employ different measures for privacy and utility to evaluate our method.
In \cite{borz}, the problem of maximizing utility, i.e., $I(U;Y)$, under the leakage constraint $I(U;X)\leq \epsilon$ and Markov chain $X-Y-U$ is studied and it is shown that under perfect privacy assumption, i.e., $\epsilon=0$, the privacy mechanism design can be reduced to standard linear programming. This work is extended in \cite{gun}, considering the privacy utility trade-off with a rate constraint on released data. %In \cite{jende}, the concept of information geometry is adopted for the problem of privacy preserving data disclosure with specific utility task under perfect obfuscation constraint. 
Next, in \cite{khodam}, the problem of maximizing $I(U;Y)$ under a strong $\chi^2$-privacy criterion and Markov chain $X-Y-U$ is studied. It has been shown that for small $\epsilon$, the original highly challenging optimization problem can be approximated by a problem of finding the principal right-singular vector of a matrix using the concept of Euclidean information geometry. Furthermore, a geometrical interpretation of the privacy mechanism design has been provided. %Also in \cite{jende}, the concept of information geometry is adopted for the problem of privacy preserving data disclosure with specific utility task under perfect obfuscation constraint.
%In this paper, we generalize the problem considered in \cite{khodam} as shown in Fig.~\ref{fig:sysmodel1}, where an agent tries to reveal some useful information to a user. Random variable (rv) $Y$ denotes the useful data and is dependent on private data denoted by rv $X$, which may not be accessible by the agent. The useful data can not be disclosed directly since the agent wants to protect the private data. Thus, the agent employs a privacy mechanism to design disclosed data denoted by rv $U$. $U$ should disclose as much information about $Y$ as possible and fulfill the privacy criterion simultaneously.

 In this work, we consider an element-wise $\ell_1$-privacy criterion which we call \emph{"strong $\ell_1$-privacy criterion"}. In \cite{Total}, an average total variation is used as privacy measure, which can be obtained by taking the average of the strong $\ell_1$-privacy criterion. A $\chi^2$-privacy criterion is considered in \cite{Calmon2}, where an upper bound and a lower bound on the privacy utility trade-off have been derived. Likewise, the strong $\chi^2$-privacy criterion used in \cite{khodam}, is a point-wise criterion, while, in \cite{Calmon2}, an average criterion is employed. 
In \cite{khodam}, the leakage matrix, i.e, $P_{X|Y}$, is assumed to be invertible, however, in this work we generalize the scenario by assuming $P_{X|Y}$ to be a full row rank, also known as \emph{fat matrix}.

In order to simplify the design problem, we use similar concepts as used in \cite{borade, huang}.
In more detail we use methods from Euclidean information theory to approximate KL divergence and mutual information in order to simplify the optimization problem. More specifically,
we show that the optimizing distributions have special geometry and therefore we can approximate the mutual information using information geometry concepts. We show that the main problem can be approximated by a standard linear programming when small leakage is allowed. Furthermore, when the leakage tends to zero we obtain the same linear programming as in \cite{borz}. 

 Our contribution can be summarized as follows:\\
\textbf{(i)} In Section \ref{sec:system}, we present a statistical signal processing privacy problem using a strong $\ell_1$-privacy criterion.\\
\textbf{(ii)} in Section \ref{result}, we show that the distributions $P_{Y|U=u}$ can be decomposed into two parts, where the first part refers to the null space of $P_{X|Y}$ and the second part refers to the non-singular part of $P_{X|Y}$. Then, by using this decomposition we characterize the optimizing vectors of the main problem, which have a special geometry. By utilizing concepts from information geometry, we then approximate the main problem by a standard linear programming.\\% Finally, one example is provided to summerize the privacy mechanism design.\\ 
%We use the properties of Null$(P_{X|Y})$ to find the optimizer vectors of the main problem and we show that the optimizer vectors have special geometry. Thus, by utilizing concepts from information geometry we can approximate the main problem by a standard linear programming in Section \ref{result}.\\
\textbf{(iii)} In Section \ref{disc}, we first illustrate the geometrical properties of the optimizer vectors, we then investigate the range of permissible leakage and derive upper bounds on the leakage.\\ 
\textbf{(iv)} In Section \ref{example}, we present an example inspired by water mark application. First, we summarize our mechanism design in a numerical example, we then compare the approximate solution with the optimal solution found by exhaustive search. Later, we employ different metrics for utility and privacy to illustrate the privacy-utility trade-off. For measuring utility we use probability of error using MAP decision making rule and normalized MMSE$(Y|U)$, for privacy leakage measure we employ normalized MMSE($X|U$). It is shown that by allowing a small information leakage, we can achieve significant utility compared to the case where we have $\epsilon=0$.\\
The paper is concluded in Section \ref{concul1}.
\section{system model and Problem Formulation} \label{sec:system}
Let $P_{XY}$ denote the joint distribution of discrete random variables $X$ and $Y$ defined on the finite alphabets $\cal{X}$ and $\cal{Y}$ with $|\cal{X}|<|\cal{Y}|$.
 We represent $P_{XY}$ by a matrix defined on $\mathbb{R}^{|\mathcal{X}|\times|\mathcal{Y}|}$ and %We assume that $X$ and $Y$are defined on spaces that have the same cardinality, i.e., $|\cal{X}|=|\cal{Y}|=\mathcal{K}$. 
marginal distributions of $X$ and $Y$ by vectors $P_X$ and $P_Y$ defined on $\mathbb{R}^{|\mathcal{X}|}$ and $\mathbb{R}^{|\mathcal{Y}|}$ given by row and column sums of $P_{XY}$. We assume that each element in vectors $P_X$ and $P_Y$ is nonzero. Furthermore, we represent the leakage matrix $P_{X|Y}$ by a matrix defined on $\mathbb{R}^{|\mathcal{X}|\times|\cal{Y}|}$, which is assumed to be of full row rank. Thus, without loss of generality we assume that $P_{X|Y}$ can be represented by two submatrices where the first submatrix is invertible, i.e., $P_{X|Y}=[P_{X|Y_1} , P_{X|Y_2}]$ such that $P_{X|Y_1}$ defined on $\mathbb{R}^{|\mathcal{X}|\times|\mathcal{X}|}$ is invertible.
%Let $P_{X|Y}$ denotes the conditional distribution of random variables $(X,Y)$, which is represented by a matrix on
%Let $P_{X|Y}$ be represented by a matrix on  
%$\mathbb{R}^{\cal{K}\times\cal{K}}$, called kernel. Furthermore, we assume that the kernel is invertible. The key point to have a solution in privacy problem with perfect privacy described in \cite{
%1}, is that null space of the kernel should be non-empty, however here the null space of the kernel is empty, furthermore the kernel is invertible. The kernel denotes the privacy dependency between $X$ and $Y$.
In the privacy problem with perfect privacy \cite{borz}, it has been shown that positive amount of information can be only revealed if $P_{X|Y}$ is not invertible. This result was also proved in \cite{berger} in a source coding setup.
In \cite{khodam}, we designed a privacy mechanism for revealing information for an invertible $P_{X|Y}$, where a small leakage is allowed. The present work provides a framework where these two results are unified for a full row rank leakage matrix. 
 RVs $X$ and $Y$ denote the private data and the useful data. 
 
 In this work, privacy is measured by the \emph{strong $\ell_1$-privacy criterion} which we introduce next. It is called strong since it is an element-wise criterion and the used norm is the $\ell_1$ norm. 
 \begin{definition}
 	For private data $X\in\mathcal{X}$, the disclosed data $U\in\mathcal{U}$ satisfies the \emph{strong $\ell_1$-privacy criterion} with leakage $\epsilon\geq0$ if the joint pmf $P_{XU}$ satisfies for all $u\in\mathcal{U}$ the following condition
 	%Given two random variables $X\in\mathcal{X}$ and $U\in\mathcal{U}$ with joint pmf $P_{XU}$ where $X$ describes the private data and $U$ denotes the disclosed data, for $\epsilon\geq0$, the \textit{strong $l_1$-privacy criterion} is defined as follows
 	 \begin{align*}
 	& \left\lVert P_{X|U=u}-P_X \right\rVert_1\!=\!\sum_x|P_{X|U=u}(x)-P_X(x)|\leq \epsilon,\ \forall u\in\mathcal{U}.
 	\end{align*}
 	%The norm is the $\ell_1$-norm.
 \end{definition}
  Intuitively, for small $\epsilon$, the privacy criterion means that the two distributions (vectors) $P_{X|U=u}$ and $P_X$ are close to each other. This should hold for all $u\in\mathcal{U}$. Thus $X$ and $U$ are almost independent in the sense that $P_{X|U=u}$ almost does not depend on $U$.
The closeness of $P_{X|U=u}$ and $P_X$ allows us to locally approximate the conditional entropy $H(Y|U)$.
In the next proposition, the relation between the strong $\ell_1$-privacy criterion and the $\chi^2$-strong privacy criterion defined in \cite[Definition 1]{khodam} is studied.
\begin{proposition}\label{ineq}
	The $\ell_1$-privacy criterion implies the strong $\chi^2$-privacy criterion\footnote{$\chi^2(P_{X|U=u}||P_X)=
		\sum_{x\in\mathcal{X}}\frac{(P_{X|U=u}(x)-P_X(x))^2}{P_X(x)}\leq\epsilon^2,\ \forall u\in\mathcal{U}.$} with different $\epsilon$, i.e, if $\left\lVert P_{X|U=u}-P_X \right\rVert_1\leq \epsilon$, then we have $\chi^2(P_{X|U=u}||P_X)\leq (\epsilon')^2$, where $\epsilon'=\frac{\epsilon}{\sqrt{\min P_X}}$. For the other direction, if $\chi^2(P_{X|U=u}||P_X)\leq \epsilon^2$, then $\left\lVert P_{X|U=u}-P_X \right\rVert_1\leq \epsilon$.
\end{proposition}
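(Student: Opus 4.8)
The plan is to prove the two implications separately, fixing $u\in\mathcal{U}$ and working throughout with the difference vector $d(x)=P_{X|U=u}(x)-P_X(x)$, so that $\lVert P_{X|U=u}-P_X\rVert_1=\sum_x|d(x)|$ and $\chi^2(P_{X|U=u}\|P_X)=\sum_x d(x)^2/P_X(x)$. Once the two criteria are rewritten in terms of $d$, both directions reduce to elementary norm comparisons, and the only structural facts I need are that every entry of $P_X$ is nonzero (so $\min P_X>0$) and that $\sum_x P_X(x)=1$.

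For the first direction (from $\ell_1$ to $\chi^2$), I would first lower-bound the denominators by $P_X(x)\geq\min P_X>0$, which gives $\chi^2(P_{X|U=u}\|P_X)\leq\frac{1}{\min P_X}\sum_x d(x)^2$. The remaining step is the standard comparison $\sum_x d(x)^2\leq\left(\sum_x|d(x)|\right)^2$, i.e.\ $\lVert d\rVert_2^2\leq\lVert d\rVert_1^2$, which follows by expanding the right-hand square and discarding the nonnegative cross terms. Combining the two bounds with the hypothesis $\sum_x|d(x)|\leq\epsilon$ yields $\chi^2\leq\epsilon^2/\min P_X=(\epsilon')^2$ with $\epsilon'=\epsilon/\sqrt{\min P_X}$, as claimed.

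For the reverse direction (from $\chi^2$ to $\ell_1$), the key tool is the Cauchy--Schwarz inequality applied with the factorization $|d(x)|=\frac{|d(x)|}{\sqrt{P_X(x)}}\cdot\sqrt{P_X(x)}$. This produces $\sum_x|d(x)|\leq\sqrt{\sum_x d(x)^2/P_X(x)}\cdot\sqrt{\sum_x P_X(x)}$. Since $P_X$ is a probability distribution, the second factor equals $\sqrt{\sum_x P_X(x)}=1$, so the bound collapses to $\lVert P_{X|U=u}-P_X\rVert_1\leq\sqrt{\chi^2(P_{X|U=u}\|P_X)}\leq\epsilon$ under the hypothesis.

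I do not anticipate a genuine obstacle, since the statement is a pair of elementary inequalities; the only point requiring care is the asymmetry in the constants. The $\chi^2\Rightarrow\ell_1$ direction is clean with the same $\epsilon$ precisely because $\sum_x P_X(x)=1$ exactly, whereas the $\ell_1\Rightarrow\chi^2$ direction must pay the factor $1/\sqrt{\min P_X}$ arising from the smallest mass of $P_X$. This reflects that the $\chi^2$ divergence is more sensitive to deviations in the low-probability coordinates of $P_X$, and it is this sensitivity that the constant $\epsilon'$ encodes.
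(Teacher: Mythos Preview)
Your proposal is correct and follows essentially the same route as the paper. For the first direction, both you and the paper bound $\chi^2\leq\frac{1}{\min P_X}\lVert d\rVert_2^2$ and then use $\lVert d\rVert_2\leq\lVert d\rVert_1$. For the second direction, the paper simply invokes the known inequality $\chi^2(P\|Q)\geq 4\,\text{TV}(P\|Q)^2=\lVert P-Q\rVert_1^2$, whereas you derive that same inequality on the spot via Cauchy--Schwarz with the factorization $|d(x)|=\tfrac{|d(x)|}{\sqrt{P_X(x)}}\sqrt{P_X(x)}$; this is in fact the standard one-line proof of the cited inequality, so the two arguments coincide.
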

\begin{proof}
	The first direction holds since 
	\begin{align*}
	\chi^2(P_{X|U=u}||P_X)&=
	\left\lVert [\sqrt{P_X}^{-1}]\left(P_{X|U=u}-P_X\right)\right\rVert_2^2\\&\leq\frac{1}{\min P_X}\left\lVert P_{X|U=u}-P_X\right\rVert_2^2\stackrel{(a)}{\leq} (\epsilon')^2,
	\end{align*}
	where (a) comes from the fact that the $\ell_2$-norm is upper-bounded by $\ell_1$-norm. The other direction is true since we have $\chi^2(P_{X|U=u}||P_X)\geq 4\text{TV}(P_{X|U=u}||P_X)^2$, where TV$(.||.)$ is the total variation distance.
	%where $\epsilon'=\frac{\epsilon}{\min P_X}$ and $\min P_X=\min_{x\in\mathcal{X}} P_X(x)$.
\end{proof}
\begin{remark}
	As shown in \cite{khodam} by using an inequality between KL-divergence and $\chi^2$-privacy criterion, the strong $\ell_1$-privacy criterion implies bounded mutual information, i.e., $I(U,X)\leq(\epsilon')^2$. \footnote{Furthermore, by using the concept of Euclidean information theory $(\epsilon')^2$ can be strengthen and replaced by $\frac{1}{2}(\epsilon')^2+o((\epsilon')^2)$ for small $\epsilon'$ \cite[Proposition 2]{khodam}. Thus, \eqref{local} implies $I(X;U)\leq\frac{1}{2}(\epsilon')^2+o((\epsilon')^2)$ for small $\epsilon'$.}
\end{remark}
\begin{remark}
	If we take average of the strong $\ell_1$-privacy criterion over $U$, we get the bounded $f$-information for $f(x)=|x-1|$ similarly as $\chi^2$-information is the average of the privacy measure used in \cite{khodam}, see also \cite[Definition 7]{cal}.
\end{remark}
\iffalse
 \begin{remark}
 	\normalfont
 	The relation between the strong $\ell_1$-privacy criterion and the $\chi^2$-strong privacy criterion defined in \cite[Definition 1]{Amir} is as follows: The $\ell_1$-privacy criterion implies the strong $\chi^2$-privacy criterion with a different $\epsilon$, since by using the fact that the $\ell_2$-norm is upper-bounded by $\ell_1$-norm we have
 	\begin{align*}
 	\left\lVert P_{X|U=u}-P_X\right\rVert_2&\leq\left\lVert P_{X|U=u}-P_X\right\rVert_1\leq \epsilon\\
 	\left\lVert [\sqrt{P_X}^{-1}]\left(P_{X|U=u}-P_X\right)\right\rVert_2^2&\leq\frac{1}{\min P_X}\left\lVert P_{X|U=u}-P_X\right\rVert_2^2\\
 	\rightarrow \chi^2(P_{X|U=u}||P_X)&\leq (\epsilon')^2,\ \forall u\in\mathcal{U},
 	\end{align*}
 	where $\epsilon'=\frac{\epsilon}{\min P_X}$ and $\min P_X=\min_{x\in\mathcal{X}} P_X(x)$.
 	Thus, as shown in \cite{Amir} by using an inequality between KL-divergence and $\chi^2$-privacy criterion, the mutual information $I(U,X)$ is bounded by $(\epsilon')^2$. Furthermore, by using the concept of Euclidean information theory $(\epsilon')^2$ can be strengthen and replaced by $\frac{1}{2}(\epsilon')^2+o((\epsilon')^2)$ for small $\epsilon'$ \cite[Proposition 2]{Amir}. Thus, \eqref{local} implies $I(X;U)\leq\frac{1}{2}(\epsilon')^2+o((\epsilon')^2)$ for small $\epsilon'$.
 \end{remark}
\fi
Our goal is to design a privacy mechanism that produces disclosed data $U$ that maximizes $I(U;Y)$ and satisfies the strong $\ell_1$-privacy criterion. The relation between $U$ and $Y$ is described by the kernel $P_{U|Y}$ defined on $\mathbb{R}^{|\mathcal{U}|\times|\mathcal{Y}|}$.
The privacy problem can be stated as follows%\footnote{By using Fenchel-Eggleston-Carath\'{e}odory's Theorem \cite{el2011network}, it can be shown that it suffices to consider $U$ such that $|\mathcal{U}|\leq |\mathcal{Y}|$. This ensures that the maximum is always achieved so that we use maximum instead of supremum.}
\begin{subequations}\label{prob1}
\begin{align}
\sup_{P_{U|Y}} \ \ &I(U;Y),\label{privacy}\\
\text{subject to:}\ \ &X-Y-U,\label{Markov}\\
%&I(U;X)\leq \frac{1}{2}\epsilon^2,\label{condition1}\\
& \left\lVert P_{X|U=u}-P_X \right\rVert_1\leq \epsilon,\ \forall u\in\mathcal{U}.\label{local}
\end{align}
\end{subequations}
%We show that the $\chi^2$-privacy criterion for small $\epsilon$ results in closedness of $P_{Y|U=u}$ and $P_Y$ in the output distributions space, which allows us to transfer the main problem into a linear algebra problem, specifically, finding the largest singular value of a matrix. 
%We refer to problem \eqref{privacy} as \emph{$\epsilon$-private data disclosure}. % allows us to use a Euclidean geometrical structure on the input and output distributions, i.e., $P(X|U=u)$ and $P(Y|U=u)$, for all $u\in \mathcal{U}$.
%\begin{remark}
	%Although for a small $\epsilon$ the $\chi^2$-privacy criterion corresponds to closedness of $X$ and $U$, they can not be independent, since in this case by using invertibility of the kernel, we can conclude that $Y$ and $U$ are independent, which results in $I(U:Y)=0$.  
%	Although we are interested in small $\epsilon$, we do not allow $\epsilon=0$ since this conflicts with our assumption that $P_{X|Y}$ is invertible [Theorem 4, \cite{berger}].% In other words, $\epsilon=0$ results in $I(U;Y)=0$.
%\end{remark}
\begin{remark}
	By letting $\epsilon=0$, \eqref{prob1} leads to the perfect privacy design problem studied in \cite{borz}.
\end{remark}
\iffalse
\begin{remark}
	\normalfont
	By using an inequality between KL divergence and our $\chi^2$-privacy criterion [page 130, \cite{wu2017lecture}], we have
	\begin{align*}
	D(P_{X|U=u}||P_X)\leq\chi^2(P_{X|U=u}||P_X)\leq\epsilon^2,\ \forall u,
	\end{align*} 
	where $D(P_{X|U=u}||P_X)$ denotes KL divergence between distributions $P_{X|U=u}$ and $P_X$. Thus, we have
	\begin{align}\label{hir}
	I(U;X)=\sum_{u\in\mathcal{U}}P_U(u)D(P_{X|U=u}||P_X)\leq\epsilon^2.
	\end{align}
	%where $I(U;X)$ corresponds to information leakage from private data.
	Consequently, information leakage using the $\chi^2$-privacy criterion implies also a bound on the mutual information $I(U;X)$.  
	In the following we show that by using the Euclidean information theory method, we can strengthen \eqref{hir} and show that \eqref{local} implies $I(U;X)\leq\frac{1}{2}\epsilon^2+o(\epsilon^2)$ for small $\epsilon$.
\end{remark}
\fi
\begin{proposition}\label{prop1111}
	It suffices to consider $U$ such that $|\mathcal{U}|\leq|\mathcal{Y}|$. Since the supremum in \eqref{privacy} is achieved, we can replace the supremum by the maximum.
\end{proposition}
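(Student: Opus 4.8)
The plan is to prove both assertions by a support-lemma (Fenchel--Eggleston--Carath\'eodory) argument, but applied to the feasible set of output distributions rather than to the whole simplex, so that the per-letter privacy constraint survives the reduction. First I would note that under the Markov chain \eqref{Markov} each symbol $u$ enters the problem only through its conditional law $P_{Y|U=u}$, since $P_{X|U=u}=P_{X|Y}\,P_{Y|U=u}$ and the objective splits as $I(U;Y)=H(Y)-\sum_u P_U(u)H(Y|U=u)$ with $H(Y)$ fixed by the given marginal. Hence I would view each $u$ as a point $P_{Y|U=u}$ in the probability simplex over $\mathcal Y$ and collect \eqref{local} into the set
\begin{align*}
\mathcal F=\left\{Q\in\mathbb R^{|\mathcal Y|}:Q\geq 0,\ \textstyle\sum_y Q(y)=1,\ \lVert P_{X|Y}Q-P_X\rVert_1\leq\epsilon\right\}.
\end{align*}
Because $\lVert\cdot\rVert_1$ is convex and the simplex is convex, $\mathcal F$ is a compact convex set, and feasibility of a mechanism is exactly the requirement $P_{Y|U=u}\in\mathcal F$ for every $u$.

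Next I would apply the support lemma to the set $\mathcal C=\{(Q,H(Q)):Q\in\mathcal F\}\subseteq\mathbb R^{|\mathcal Y|}$, whose first $|\mathcal Y|-1$ coordinates encode the free entries of $Q$ (the last being fixed by normalization) and whose final coordinate records the entropy $H(Q)$. As the continuous image of the convex, compact set $\mathcal F$, the set $\mathcal C$ is connected and compact. For any feasible mechanism, the pair $(P_Y,H(Y|U))=\sum_u P_U(u)\,(P_{Y|U=u},H(Y|U=u))$ lies in the convex hull of $\mathcal C$, so the Fenchel--Eggleston strengthening of Carath\'eodory's theorem (valid since $\mathcal C$ is connected) expresses it as a convex combination of at most $|\mathcal Y|$ points $(Q_j,H(Q_j))$ of $\mathcal C$ with weights $\lambda_j$. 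Defining $U'$ by $P_{U'}(j)=\lambda_j$ and $P_{Y|U'=j}=Q_j$ then preserves the marginal $P_Y$, preserves $H(Y|U')=H(Y|U)$ and hence $I(U';Y)=I(U;Y)$, and keeps every $Q_j\in\mathcal F$ so that \eqref{local} still holds. This produces a feasible mechanism with $|\mathcal U'|\leq|\mathcal Y|$ and the same utility, giving the cardinality bound.

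The step I expect to require the most care is exactly this preservation of feasibility. A naive application of the support lemma on the full simplex would return representative distributions matching $P_Y$ and $H(Y|U)$ but not necessarily lying in $\mathcal F$, so the per-letter criterion \eqref{local} could fail. Restricting the support lemma to the convex feasible set $\mathcal F$ is what repairs this, and it is the connectedness of $\mathcal F$ (rather than the mere finiteness of the original support) that yields the sharp count $|\mathcal Y|$ instead of $|\mathcal Y|+1$.

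Finally, for the second assertion I would argue by compactness. Having fixed $|\mathcal U|\leq|\mathcal Y|$, the kernel $P_{U|Y}$ ranges over a product of simplices in $\mathbb R^{|\mathcal U|\times|\mathcal Y|}$, and each constraint $\lVert P_{X|U=u}-P_X\rVert_1\leq\epsilon$ is a closed condition; after discarding symbols of zero probability, which do not affect $I(U;Y)$, the feasible set is compact and nonempty while $I(U;Y)$ is continuous in $P_{U|Y}$ for the fixed $P_Y$. Since a continuous function on a nonempty compact set attains its maximum, the supremum in \eqref{privacy} is achieved and may be written as a maximum.
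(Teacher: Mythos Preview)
Your proposal is correct and follows essentially the same route as the paper: both restrict the support-lemma/Fenchel--Eggleston--Carath\'eodory argument to the feasible set of conditional laws $P_{Y|U=u}$ satisfying the per-letter $\ell_1$ constraint, use the $|\mathcal Y|-1$ free probability coordinates together with $H(Y|U=u)$ as the preserved functionals, and then invoke compactness and continuity for the attainment of the supremum. Your formulation is in fact slightly cleaner, since you observe directly that $\mathcal F$ is a convex (hence connected) compact set, whereas the paper parameterizes the same set through the perturbation vectors $J_u$ and verifies closedness via a Cauchy-sequence argument.
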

\begin{proof}
	The proof is provided in Appendix~A.
\end{proof}
In next corollary, by using Proposition~\ref{ineq} the relation between \eqref{prob1} and the case where $\chi^2$ measure is employed instead of $\ell_1$ is shown.
\begin{corollary}
	Let $g_{\epsilon}(X,Y)=\max\limits_{\substack{
		P_{U|Y}: X-Y-U\\\chi^2(P_{X|U=u}||P_X)\leq\epsilon^2,\ \forall u}} I(U;Y)$ and $f_{\epsilon}(X,Y)=\max\limits_{\substack{
		P_{U|Y}:X-Y-U\\\left\lVert P_{X|U=u}-P_X \right\rVert_1\leq\epsilon,\ \forall u}} I(U;Y)$. Then we have
	\begin{align*}
	g_{\epsilon}(X,Y)\leq f_{\epsilon}(X,Y) \leq g_{\epsilon'}(X,Y),
	\end{align*}
	where $\epsilon'=\frac{\epsilon}{\sqrt{\min P_X}}$.
\end{corollary}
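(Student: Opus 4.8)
The plan is to note that $g_\delta$ and $f_\delta$ maximize the \emph{same} objective $I(U;Y)$ over the \emph{same} Markov constraint $X-Y-U$, and differ only in the privacy constraint that carves out the feasible set. Consequently both inequalities reduce to a monotonicity principle: the maximum of a fixed objective over a larger feasible set dominates the maximum over a smaller one. The entire task is then to convert the two implications of Proposition~\ref{ineq} into inclusions between feasible sets.

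First I would fix notation, writing $\mathcal{A}_\delta$ for the set of kernels $P_{U|Y}$ satisfying $X-Y-U$ together with $\chi^2(P_{X|U=u}\|P_X)\leq\delta^2$ for all $u$ (so that $g_\delta=\max_{\mathcal{A}_\delta}I(U;Y)$), and $\mathcal{B}_\delta$ for those satisfying $X-Y-U$ together with $\|P_{X|U=u}-P_X\|_1\leq\delta$ for all $u$ (so that $f_\delta=\max_{\mathcal{B}_\delta}I(U;Y)$). For the lower bound I would invoke the second direction of Proposition~\ref{ineq}: $\chi^2(P_{X|U=u}\|P_X)\leq\epsilon^2$ for all $u$ implies $\|P_{X|U=u}-P_X\|_1\leq\epsilon$ for all $u$, hence $\mathcal{A}_\epsilon\subseteq\mathcal{B}_\epsilon$, and maximizing the common objective over the larger set gives $g_\epsilon\leq f_\epsilon$.

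For the upper bound I would use the first direction of Proposition~\ref{ineq}: $\|P_{X|U=u}-P_X\|_1\leq\epsilon$ for all $u$ implies $\chi^2(P_{X|U=u}\|P_X)\leq(\epsilon')^2$ for all $u$ with $\epsilon'=\epsilon/\sqrt{\min P_X}$, hence $\mathcal{B}_\epsilon\subseteq\mathcal{A}_{\epsilon'}$ and therefore $f_\epsilon\leq g_{\epsilon'}$ by the same argument. Chaining these two bounds yields the claimed sandwich $g_\epsilon\leq f_\epsilon\leq g_{\epsilon'}$.

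Since each step is merely a set inclusion supplied directly by Proposition~\ref{ineq}, there is no genuine analytic obstacle here; the only point deserving a line of care is confirming that the two problems share an identical objective and Markov constraint, so that feasibility transfers cleanly and any maximizer of the tighter problem is automatically admissible in the looser one.
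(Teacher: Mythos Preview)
Your proposal is correct and matches the paper's approach exactly: the paper simply states that the corollary follows from Proposition~\ref{ineq}, and your argument spells out precisely this, converting the two implications of that proposition into the feasible-set inclusions $\mathcal{A}_\epsilon\subseteq\mathcal{B}_\epsilon\subseteq\mathcal{A}_{\epsilon'}$ and then invoking monotonicity of the maximum. There is nothing to add.
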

\section{Privacy Mechanism Design}\label{result}
%In this section we follow the method used in \cite{borade,huang} and show that input and output spaces can be reduced to linear spaces where the kernel describes a linear mapping between these two spaces. Thus, the privacy problem can be reduced to a linear algebra problem. The solution to the linear problem is provided and elucidates the optimal mechanism for producing $U$.
%In this section, leakage constraint used in \cite{rassoul1} is obtained by \eqref{local}. Then we use \eqref{local} to reduce the original problem to a linear algebra problem. In the second part, we provide the solution of the linear problem and approximated solution of the main problem. 
%\subsection{Local Approximation}
In this section we show that the problem defined in \eqref{privacy} can be approximated by a quadratic problem. Furthermore, it is shown that the quadratic problem can be converted to a standard linear program.
For a sufficiently small $\epsilon$ by using \eqref{local},  we can rewrite the conditional distribution $P_{X|U=u}$ as a perturbation of $P_X$. Thus, for any $u\in\mathcal{U}$, similarly as in \cite{khodam,borade, huang}, we can write $P_{X|U=u}=P_X+\epsilon\cdot J_u$, where $J_u\in\mathbb{R}^{|\mathcal{X}|}$ is a perturbation vector that has the following three properties:
\begin{align}
\sum_{x\in\mathcal{X}} J_u(x)=0,\ \forall u,\label{prop1}\\
\sum_{u\in\mathcal{U}} P_U(u)J_u(x)=0,\ \forall x\label{prop2},\\
\sum_{x\in\mathcal{X}}|J_u(x)|\leq 1,\ \forall u\label{prop3}.
\end{align} 
The first two properties ensure that $P_{X|U=u}$ is a valid probability distribution and the third property follows from \eqref{local}. %The next proposition shows that $I(U;X)$ can be locally approximated by a squared Euclidean metric. %and results in leakage constraint used in \cite{rassoul1}.
\iffalse

In the following we use the Bachmann-Landau notation where $o(\epsilon)$ describes the asymptotic behaviour of a function $f:\mathbb{R}^+\rightarrow\mathbb{R}$ which satisfies that $\frac{f(\epsilon)}{\epsilon}\rightarrow 0$ as $\epsilon\rightarrow 0$.

\begin{proposition}
	 For a sufficiently small $\epsilon$, \eqref{local} results in a leakage constraint as follows:
	\begin{align}
	I(X;U)\leq\frac{1}{2}\epsilon^2+o(\epsilon^2),\label{approx1}
	\end{align}
	%where $L_u=[\sqrt{P_X}^{-1}]J_u$ and $[\sqrt{P_X}^{-1}]$ is a diagonal matrix with entries $\{\sqrt{P_X}^{-1},\ \forall x\in\mathcal{X}\}$.
\end{proposition}
\begin{proof}
	The proof is provided in Appendix A.
\end{proof}
\fi
%Now we show that the distribution $P_{Y|U=u}$ can be written as a linear perturbation of $P_Y$.
In Appendix~B, three lemmas are provided which are used to rewrite the main problem defined in \eqref{prob1}. In Lemma~\ref{null}, two properties of the null space Null$(P_{X|Y})$ are shown. Later, in Lemma~\ref{null2}, we show that for every $u\in\mathcal{U}$, the vector $P_{Y|U=u}$ lies in a convex polytope $\mathbb{S}_u$ defined as follows: 
\begin{align}\label{Su}
\mathbb{S}_{u} \triangleq \left\{y\in\mathbb{R}^{|\mathcal{Y}|}|My=MP_Y+\epsilon M\begin{bmatrix}
P_{X|Y_1}^{-1}J_u\\0
\end{bmatrix},\ y\geq \bm{0}\right\},
\end{align}
where $\begin{bmatrix}
P_{X|Y_1}^{-1}J_u\\0
\end{bmatrix}\in\mathbb{R}^{|\cal Y|}$ and $J_u$ satisfies \eqref{prop1}, \eqref{prop2}, and \eqref{prop3}. Furthermore, $M\in\mathbb{R}^{|\mathcal{X}|\times|\cal Y|}$ is defined in Lemma~1. Using Lemma~\ref{null2}, it can be seen that for every $u$ the vectors $P_{Y|U=u}-P_{Y}$ can be decomposed into two parts, where the first part refers to Null$(P_{X|Y})$ and the second part refers to the non-singular part of $P_{X|Y}$. In other words, information can be revealed using singular part and non-singular part of $P_{X|Y}$. In \cite{borz}, it has been shown that information is revealed using only the null space Null$(P_{X|Y})$ and in \cite{khodam}, information is disclosed using the non-singular part of $P_{X|Y}$. In this work, we unify these two results.
\begin{theorem}
	%By using Proposition~\ref{null2} and Proposition~\ref{44} 
	We have the following equivalency 
	\begin{align}\label{equi}
	\min_{\begin{array}{c} 
		\substack{P_{U|Y}:X-Y-U\\ \|P_{X|U=u}-P_X\|_1\leq\epsilon,\ \forall u\in\mathcal{U}}
		\end{array}}\! \! \! \!\!\!\!\!\!\!\!\!\!\!\!\!\!\!\!H(Y|U) =\!\!\!\!\!\!\!\!\! \min_{\begin{array}{c} 
		\substack{P_U,\ P_{Y|U=u}\in\mathbb{S}_u,\ \forall u\in\mathcal{U},\\ \sum_u P_U(u)P_{Y|U=u}=P_Y,\\ J_u \text{satisfies}\ \eqref{prop1},\ \eqref{prop2},\ \text{and}\ \eqref{prop3}}
		\end{array}} \!\!\!\!\!\!\!\!\!\!\!\!\!\!\!\!\!\!\!H(Y|U).
	\end{align}
\end{theorem}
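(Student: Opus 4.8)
The plan is to prove the two minimization problems coincide by exhibiting a value-preserving correspondence between their feasible sets, so that each minimum bounds the other. The crucial observation is that the objective $H(Y|U)=\sum_u P_U(u)H(P_{Y|U=u})$ depends on the mechanism only through the pair $(P_U,\{P_{Y|U=u}\}_u)$; hence it suffices to show that the set of such pairs arising from feasible kernels $P_{U|Y}$ on the left coincides with the feasible set described on the right, since then the two infima are taken over the same set of achievable objective values.

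For the forward inclusion (left $\geq$ right), I would take any $P_{U|Y}$ satisfying the Markov chain \eqref{Markov} and the privacy criterion \eqref{local}, and pass to the induced quantities $P_U(u)=\sum_y P_Y(y)P_{U|Y}(u|y)$ and $P_{Y|U=u}(y)=P_{U|Y}(u|y)P_Y(y)/P_U(u)$ by Bayes' rule (discarding any $u$ with $P_U(u)=0$). The law of total probability yields $\sum_u P_U(u)P_{Y|U=u}=P_Y$ directly. Writing $P_{X|U=u}=P_X+\epsilon J_u$ and using the Markov chain to obtain $P_{X|U=u}=P_{X|Y}P_{Y|U=u}$ together with $P_X=P_{X|Y}P_Y$, I get $P_{X|Y}(P_{Y|U=u}-P_Y)=\epsilon J_u$; properties \eqref{prop1} and \eqref{prop2} then follow because the conditionals are genuine distributions that average to $P_X$, while \eqref{prop3} is exactly the normalized form of \eqref{local}. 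Membership $P_{Y|U=u}\in\mathbb{S}_u$ is precisely the content of Lemma~\ref{null2}, since $[P_{X|Y_1}^{-1}J_u;\,0]$ is a particular solution of $P_{X|Y}v=J_u$ and Lemma~\ref{null} identifies the homogeneous part with $\mathrm{Null}(P_{X|Y})$. This produces a right-feasible point with the same value of $H(Y|U)$.

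For the reverse inclusion (right $\geq$ left), I would start from a right-feasible triple $(P_U,\{P_{Y|U=u}\},\{J_u\})$ and define $P_{U|Y}(u|y)=P_U(u)P_{Y|U=u}(y)/P_Y(y)$. Nonnegativity is inherited from $y\geq\bm{0}$ in \eqref{Su} and $P_U\geq 0$, while $\sum_u P_{U|Y}(u|y)=1$ is equivalent to the marginalization constraint $\sum_u P_U(u)P_{Y|U=u}=P_Y$; thus $P_{U|Y}$ is a valid kernel, and defining the joint through $P_{XY}P_{U|Y}$ makes \eqref{Markov} hold by construction. Applying Lemma~\ref{null2} in the opposite direction to the constraint $My=MP_Y+\epsilon M[P_{X|Y_1}^{-1}J_u;0]$ recovers $P_{X|Y}(P_{Y|U=u}-P_Y)=\epsilon J_u$, whence $\|P_{X|U=u}-P_X\|_1=\epsilon\|J_u\|_1\leq\epsilon$ by \eqref{prop3}, so \eqref{local} is satisfied. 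Again $H(Y|U)$ is unchanged.

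I expect the main obstacle to be the exact two-way translation between the $\ell_1$ and Markov constraints on the left and the polytope description $\mathbb{S}_u$ on the right, i.e. verifying that $My=MP_Y+\epsilon M[P_{X|Y_1}^{-1}J_u;0]$ together with $y\geq\bm{0}$ encodes neither more nor less than the statement that $P_{Y|U=u}$ is a distribution satisfying $P_{X|Y}(P_{Y|U=u}-P_Y)=\epsilon J_u$. This hinges on $\mathrm{Null}(M)=\mathrm{Null}(P_{X|Y})$ and on the decomposition of $P_{Y|U=u}-P_Y$ into its null-space component and the particular solution $\epsilon[P_{X|Y_1}^{-1}J_u;0]$, both supplied by Lemmas~\ref{null} and \ref{null2}. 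Once these are in hand, the correspondence is a bijection preserving the objective, and the equality of the two minima follows.
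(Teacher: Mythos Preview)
Your proposal is correct and follows essentially the same route as the paper: the paper's proof simply invokes Lemma~\ref{null2} for the forward inclusion and Lemma~\ref{44} for the reverse, and you have spelled out exactly the bijection these lemmas encode. The only cosmetic difference is that what you call ``Lemma~\ref{null2} in the opposite direction'' is packaged separately in the paper as Lemma~\ref{44}, which supplies both that every $y\in\mathbb{S}_u$ is a genuine probability vector and that membership in $\mathbb{S}_u$ recovers $P_{X|U=u}-P_X=\epsilon J_u$ via $\mathrm{Null}(M)=\mathrm{Null}(P_{X|Y})$.
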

\begin{proof}
	The proof follows directly from Lemma~\ref{null2} and Lemma~\ref{44} provided in Appendix~B.
\end{proof}
In next proposition we discuss how  $H(Y|U)$ is minimized over $P_{Y|U=u}\in\mathbb{S}_u$ for all $u\in\mathcal{U}$.
\begin{proposition}\label{4}
		Let $P^*_{Y|U=u},\ \forall u\in\mathcal{U}$ be the minimizer of $H(Y|U)$ over the set $\{P_{Y|U=u}\in\mathbb{S}_u,\ \forall u\in\mathcal{U}|\sum_u P_U(u)P_{Y|U=u}=P_Y\}$, then 
	$P^*_{Y|U=u}\in\mathbb{S}_u$ for all $u\in\mathcal{U}$ must belong to extreme points of $\mathbb{S}_u$.
\end{proposition}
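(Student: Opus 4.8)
The plan is to exploit the strict concavity of the entropy functional. First I would write the objective in the separable form $H(Y|U)=\sum_{u\in\mathcal{U}}P_U(u)\,H(P_{Y|U=u})$, and recall that $p\mapsto H(p)=-\sum_y p(y)\log p(y)$ is strictly concave on the probability simplex. Since each weight $P_U(u)$ is nonnegative, the map $(P_{Y|U=u})_u\mapsto H(Y|U)$ is concave in the tuple of conditional distributions, and strictly concave in every coordinate $u$ with $P_U(u)>0$ (coordinates with $P_U(u)=0$ do not affect the objective and may be placed at any extreme point without loss of generality). For the given $P_U$, the feasible set $\mathcal{F}=\{(P_{Y|U=u})_u: P_{Y|U=u}\in\mathbb{S}_u\ \forall u,\ \sum_u P_U(u)P_{Y|U=u}=P_Y\}$ is the intersection of the polytopes $\mathbb{S}_u$ of \eqref{Su} with the affine barycenter constraint, hence a nonempty compact convex polytope. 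I would then invoke the classical fact (Bauer's minimum principle) that a concave function attains its minimum over a compact convex set at an extreme point.

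Second, I would convert this into the per-coordinate statement by a perturbation argument. Suppose, towards a contradiction, that at the minimizer some $P^*_{Y|U=u_0}$ with $P_U(u_0)>0$ is not an extreme point of $\mathbb{S}_{u_0}$. Because $\mathbb{S}_{u_0}$ is a standard-form polytope $\{y:My=b_{u_0},\,y\ge\bm{0}\}$, non-extremality yields a nonzero direction $d$ with $Md=\bm{0}$ and $\mathrm{supp}(d)\subseteq\mathrm{supp}(P^*_{Y|U=u_0})$, so that $P^*_{Y|U=u_0}\pm t d\in\mathbb{S}_{u_0}$ for all small $t>0$. The difficulty is that moving a single coordinate destroys the coupling $\sum_u P_U(u)P_{Y|U=u}=P_Y$, so I would instead construct a coupled direction $D=(D_u)_u$ with $D_{u_0}=d$, with $MD_u=\bm{0}$ and $\mathrm{supp}(D_u)\subseteq\mathrm{supp}(P^*_{Y|U=u})$ for all $u$, and with $\sum_u P_U(u)D_u=\bm{0}$, by absorbing the residual $-P_U(u_0)d\in\mathrm{Null}(M)$ into one (or more) of the remaining coordinates. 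Along the feasible segment $P^*_{Y|U=u}\pm tD_u$, strict concavity in coordinate $u_0$ gives $H(Y|U)\big|_{\text{center}}>\tfrac12\big(H(Y|U)\big|_{+tD}+H(Y|U)\big|_{-tD}\big)$, so at least one endpoint strictly lowers $H(Y|U)$ while staying feasible, contradicting optimality. Hence every $P^*_{Y|U=u}$ must be an extreme point of $\mathbb{S}_u$.

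The hard part will be precisely the construction of this compensating direction: I must show that $-P_U(u_0)d\in\mathrm{Null}(M)$ can be written as $\sum_{u\ne u_0}P_U(u)D_u$ with each $D_u\in\mathrm{Null}(M)$ supported on $\mathrm{supp}(P^*_{Y|U=u})$. Here I would lean on the decomposition supplied by Lemma~\ref{null2}, which isolates the null-space component of $P_{Y|U=u}-P_Y$ shared by every $\mathbb{S}_u$, and on the freedom to split mass of $P_U$ granted by Proposition~\ref{prop1111} (which only caps $|\mathcal{U}|\le|\mathcal{Y}|$); in the simplest case a single compensating coordinate whose support contains $\mathrm{supp}(d)$ already suffices. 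Should no compatible compensating direction exist, the point would already be extreme in $\mathcal{F}$, and I would fall back on applying the concavity step to $\mathcal{F}$ directly; I would keep both routes available and reconcile them through the null-space decomposition of Lemma~\ref{null2}.
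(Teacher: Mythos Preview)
Your perturbation route via a compensating direction $D=(D_u)_u$ is unnecessarily delicate and, as stated, has a gap. The construction requires absorbing $-P_U(u_0)d\in\mathrm{Null}(M)$ into the other coordinates while keeping each $D_u$ supported on $\mathrm{supp}(P^*_{Y|U=u})$; but if every other $P^*_{Y|U=u}$ is already an extreme point of its $\mathbb{S}_u$, its support has exactly $|\mathcal{X}|$ entries and the corresponding columns of $M$ are linearly independent, so the only null-space vector supported there is $0$. In that case no compensating direction exists, yet the tuple is certainly not forced to be extreme in $\mathcal{F}$ (extremality of $\mathcal{F}$ is not what you need anyway). Your fallback to Bauer's principle on $\mathcal{F}$ only gives an extreme point of the coupled polytope, which need not project to extreme points of the individual $\mathbb{S}_u$.

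The paper sidesteps all of this with a one-line trick that you mention only in passing: instead of moving mass across coordinates, it \emph{splits the atom}. If $P^*_{Y|U=u_0}$ is not extreme in $\mathbb{S}_{u_0}$, write it as a convex combination $P^*_{Y|U=u_0}=\sum_l \lambda_l\psi_l^*$ of extreme points of $\mathbb{S}_{u_0}$, replace the single atom $u_0$ (with weight $P_U(u_0)$) by the atoms $\psi_l^*$ (with weights $\lambda_l P_U(u_0)$), and leave the other coordinates untouched. This preserves $\sum_u P_U(u)P_{Y|U=u}=P_Y$ automatically, keeps every new atom in the \emph{same} polytope $\mathbb{S}_{u_0}$ (same $J_{u_0}$, so \eqref{prop1}--\eqref{prop3} and the privacy constraint are intact), and by strict concavity of $H$ strictly lowers $\sum_u P_U(u)H(P_{Y|U=u})$. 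The temporary increase in $|\mathcal{U}|$ is harmless because $P_U$ and $|\mathcal{U}|$ are part of the minimization in \eqref{equi}; Proposition~\ref{prop1111} then lets you prune back to $|\mathcal{U}|\le|\mathcal{Y}|$ afterward. This is exactly the ``freedom to split mass of $P_U$'' you allude to---make it the argument, not the aside.
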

\begin{proof}
	The proof builds on the concavity property of the entropy. For more details see Appendix~C.
\end{proof}

In order to solve the minimization problem found in \eqref{equi}, we propose the following procedure:
In the first step, we find the extreme points of $\mathbb{S}_u$ denoted by $\mathbb{S}_u^*$, which is an easy task. Since the extreme points of the sets $\mathbb{S}_u$ have a particular geometry, in the second step, we locally approximate the conditional entropy $H(Y|U)$ so that we end up with a quadratic problem with quadratic constraints over $P_U(.)$ and $J_u$ that can be easily solved.
\subsection{Finding $\mathbb{S}_u^*$ (Extreme points of $\mathbb{S}_u$)}\label{seca}
In this part we find the extreme points of $\mathbb{S}_u$ for each $u\in\mathcal{U}$. %Let $\eta_u=MP_Y+\epsilon M\begin{bmatrix}
%P_{X|Y_1}^{-1}J_u\\0\end{bmatrix}$.
As argued in \cite{borz}, the extreme points of $\mathbb{S}_u$ are the basic feasible solutions of $\mathbb{S}_u$. Basic feasible solutions of $\mathbb{S}_u$ can be found in the following manner. Let $\Omega$ be the set of indices which correspond to $|\mathcal{X}|$ linearly independent columns of $M$, i.e., $|\Omega|=|\mathcal{X}|$ and $\Omega\subset \{1,..,|\mathcal{Y}|\}$. Let $M_{\Omega}\in\mathbb{R}^{|\mathcal{X}|\times|\mathcal{X}|}$ be the submatrix of $M$ with columns indexed by the set $\Omega$. It can be seen that $M_{\Omega}$ is an invertible matrix since $rank(M)=|\cal X|$. Then, if all elements of the vector $M_{\Omega}^{-1}(MP_Y+\epsilon M\begin{bmatrix}
P_{X|Y_1}^{-1}J_u\\0\end{bmatrix})$ are non-negative, the vector $V_{\Omega}^*\in\mathbb{R}^{|\mathcal{Y}|}$, which is defined in the following, is a basic feasible solution of $\mathbb{S}_u$. 
Assume that $\Omega = \{\omega_1,..,\omega_{|\mathcal{X}|}\}$, where $\omega_i\in\{1,..,|\mathcal{Y}|\}$ and all elements are arranged in an increasing order. The $\omega_i$-th element of $V_{\Omega}^*$ is defined as $i$-th element of $M_{\Omega}^{-1}(MP_Y+\epsilon M\begin{bmatrix}
P_{X|Y_1}^{-1}J_u\\0\end{bmatrix})$, i.e., for $1\leq i \leq |\mathcal{X}|$ we have
\begin{align}\label{defin}
V_{\Omega}^*(\omega_i)= \left(M_{\Omega}^{-1}MP_Y+\epsilon M_{\Omega}^{-1}M\begin{bmatrix}
P_{X|Y_1}^{-1}J_u\\0\end{bmatrix}\right)(i).
\end{align}
Other elements of $V_{\Omega}^*$ are set to be zero. In the next proposition we show two properties of each vector inside $S_{u}^*$.
\begin{proposition}\label{kos}
	%Sum elements of the vector $M_{\Omega}^{-1}MP_Y$ is equal to one, where $\Omega\subset\{1,..,|\mathcal{Y}|\},\ |\Omega|=|\cal X|$. Furthermore, sum elements of the vector $ M_{\Omega}^{-1}M\begin{bmatrix}
	%P_{X|Y_1}^{-1}J_u\\0\end{bmatrix}$ is equal to zero.
	Let $\Omega\subset\{1,..,|\mathcal{Y}|\},\ |\Omega|=|\cal X|$. For every $\Omega$ we have $1^T\left(M_{\Omega}^{-1}MP_Y \right)=1$. Furthermore, $1^T\left(M_{\Omega}^{-1}M\begin{bmatrix}
	P_{X|Y_1}^{-1}J_u\\0\end{bmatrix}\right)=0$.
\end{proposition}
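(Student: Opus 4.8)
The plan is to reduce both identities to the single structural fact that every column of $M$ sums to one, i.e. $\mathbf{1}^T M = \mathbf{1}^T$, where $\mathbf{1}$ denotes the all-ones vector of the dimension required by context. This is the only property of $M$ the proof needs, and it is inherited from the column-stochasticity of $P_{X|Y}$ through the construction of $M$ in Lemma~1: since $\mathbf{1}^T P_{X|Y} = \mathbf{1}^T$ (each column of $P_{X|Y}$ is a conditional law of $X$ and therefore sums to one), and since $P_{X|Y_1}$ is likewise column-stochastic and invertible — so that $\mathbf{1}^T P_{X|Y_1} = \mathbf{1}^T$ gives $\mathbf{1}^T P_{X|Y_1}^{-1} = \mathbf{1}^T$ upon right-multiplication by $P_{X|Y_1}^{-1}$ — any admissible form of $M$ built from these factors satisfies $\mathbf{1}^T M = \mathbf{1}^T$.

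First I would push this property onto the submatrix $M_\Omega$. Because the columns of $M_\Omega$ are exactly those of $M$ indexed by $\Omega$, each still sums to one, so $\mathbf{1}^T M_\Omega = \mathbf{1}^T$; right-multiplying by $M_\Omega^{-1}$, which exists since $\Omega$ indexes $|\mathcal{X}|$ linearly independent columns, yields the key identity $\mathbf{1}^T M_\Omega^{-1} = \mathbf{1}^T$. Combining this with $\mathbf{1}^T M = \mathbf{1}^T$ gives $\mathbf{1}^T M_\Omega^{-1} M = \mathbf{1}^T M = \mathbf{1}^T$; in words, left-multiplication by $M_\Omega^{-1} M$ preserves the coordinate sum of any vector in $\mathbb{R}^{|\mathcal{Y}|}$.

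Both assertions then follow by evaluating this on the two relevant vectors. For the first, $\mathbf{1}^T(M_\Omega^{-1} M P_Y) = \mathbf{1}^T P_Y = \sum_{y} P_Y(y) = 1$, using that $P_Y$ is a probability vector. For the second, $\mathbf{1}^T M_\Omega^{-1} M \begin{bmatrix} P_{X|Y_1}^{-1}J_u \\ 0 \end{bmatrix} = \mathbf{1}^T \begin{bmatrix} P_{X|Y_1}^{-1}J_u \\ 0 \end{bmatrix}$, which is just the sum of the top $|\mathcal{X}|$ entries, namely $\mathbf{1}^T P_{X|Y_1}^{-1} J_u = \mathbf{1}^T J_u = \sum_{x} J_u(x) = 0$, where I again invoke $\mathbf{1}^T P_{X|Y_1}^{-1} = \mathbf{1}^T$ and then property \eqref{prop1}.

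The argument is essentially bookkeeping, so I do not expect a genuine obstacle; the only step needing care is the transfer $\mathbf{1}^T M_\Omega^{-1} = \mathbf{1}^T$, where one must verify both that restricting to the column-subset $M_\Omega$ preserves the unit-column-sum property and that $M_\Omega^{-1}$ indeed exists, the latter being guaranteed by the linear independence of the columns indexed by $\Omega$. Should Lemma~1 already record $\mathbf{1}^T M = \mathbf{1}^T$ explicitly, the first paragraph collapses to a one-line citation.
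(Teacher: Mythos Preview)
Your argument rests on the claim $\mathbf{1}^T M = \mathbf{1}^T$, and this claim is false for the matrix $M$ actually used in the paper. Lemma~\ref{null} defines $M$ as the matrix whose rows are the first $|\mathcal{X}|$ right singular vectors of $P_{X|Y}$; there is no reason those columns should sum to one. The explicit $M$ computed in Example~1, namely $M=\begin{bmatrix}-0.5556 & -0.6016 & 0.574\\ 0.6742 & -0.73 & 0.1125\end{bmatrix}$, has column sums $0.1186,\ -1.3316,\ 0.6865$, so $\mathbf{1}^T M\neq\mathbf{1}^T$ and the chain $\mathbf{1}^T M_\Omega=\mathbf{1}^T\Rightarrow\mathbf{1}^T M_\Omega^{-1}=\mathbf{1}^T$ breaks at the first step. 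Your hedge (``any admissible form of $M$ built from these factors'') does not cover the SVD construction.

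The paper proceeds differently: it pads $M_\Omega^{-1}MP_Y$ with zeros to a vector $W_\Omega^*\in\mathbb{R}^{|\mathcal{Y}|}$, observes $MW_\Omega^*=MP_Y$, hence $W_\Omega^*-P_Y\in\text{Null}(M)=\text{Null}(P_{X|Y})$, and then invokes the second assertion of Lemma~\ref{null} (that every null-space vector has zero coordinate sum) to get $\mathbf{1}^T W_\Omega^*=\mathbf{1}^T P_Y=1$. The second identity is handled the same way via $\mathbb{S}_u$. Your route can be rescued with one extra observation: because $\text{Null}(M)=\text{Null}(P_{X|Y})$ the two matrices share their row space, so $M=AP_{X|Y}$ for some invertible $A$, which gives $M_\Omega^{-1}M=(P_{X|Y})_\Omega^{-1}P_{X|Y}$; now $\mathbf{1}^T P_{X|Y}=\mathbf{1}^T$ \emph{does} hold, and your column-sum bookkeeping goes through verbatim with $P_{X|Y}$ in place of $M$. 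Without that bridging step, however, the proposal has a genuine gap.
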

\begin{proof}
	The proof is provided in Appendix~D.
\end{proof}
Next we define a set $\mathcal{H}_{XY}$ which includes all joint distribution matrices $P_{XY}$ having a property that allows us to approximate the conditional entropy $H(Y|U)$.
\begin{definition}
	Let $\mathbb{R}_{+}$ and $\mathbb{R}_{++}$ denote the set of non-negative and positive reals. Let $\mathcal{P}(\cal X)$ be the standard probability simplex defined as
	$\mathcal{P}(\mathcal {X})=\{x\in\mathbb{R}^{|\mathcal{X}|}_{+}|1^Tx=1\}$
	and let $\mathcal{P'}(\cal X)$ be the subset of $\mathcal{P}(\cal X)$ defined as
	$\mathcal{P'}(\mathcal {X})=\{x\in\mathbb{R}^{|\mathcal{X}|}_{++}|1^Tx=1\}$.
	For every $\Omega\subset\{1,..,|\cal Y|\}$, $|\Omega|=|\mathcal{X}|$, $\mathcal{H}_{XY}$ is the set of all joint distributions $P_{XY}$ defined as follows
	\begin{align*}
	\mathcal{H}_{XY} = \{P_{XY}\!\in\! \mathbb{R}^{|\cal X|\times |\cal Y|}|\text{if}\ t_{\Omega}\in\mathcal{P}(\mathcal {X})\Rightarrow t_{\Omega}\in\mathcal{P'}(\mathcal {X})\},
	\end{align*}
	where $t_{\Omega}=M_{\Omega}^{-1}MP_Y$. In other words, for any $P_{XY}\in\mathcal{H}_{XY}$, if $t_{\Omega}$ is a probability vector, then it includes positive elements for every $\Omega$.
\end{definition}
In the following we provide an example to clarify the definition of $\mathcal{H}_{XY}$.
\begin{example}
	Let $P_{X|Y}=\begin{bmatrix}
	0.3   &0.8  &0.5 \\0.7 &0.2 &0.5 
	\end{bmatrix} $ and $P_Y=[\frac{2}{3},\frac{1}{6},\frac{1}{6}]^T$. By using SVD of $P_{X|Y}$, $M$ can be found as $\begin{bmatrix}
	-0.5556   &-0.6016  &0.574 \\0.6742 &-0.73 &0.1125 
	\end{bmatrix}$. Possible sets $\Omega$ are $\Omega_1=\{1,2\},\ \Omega_2=\{1,3\},$ and $\Omega_3=\{2,3\}$. By calculating $M_{\Omega_{u_i}}^{-1}MP_Y$ we have
	\begin{align*}
	&M_{\Omega_{1}}^{-1}MP_Y =\begin{bmatrix}
	0.7667\\0.2333
	\end{bmatrix},\ M_{\Omega_{2}}^{-1}MP_Y =\begin{bmatrix}
	0.4167\\0.5833
	\end{bmatrix},\\
	&M_{\Omega_{3}}^{-1}MP_Y =\begin{bmatrix}
	-0.2778\\1.2778
	\end{bmatrix}.
	\end{align*}
	Since the elements of the first two vectors are positive, $P_{XY}\in \mathcal{H}_{XY}$. On the other hand, let $P_{X|Y}=\begin{bmatrix}
	0.2   &0.1  &0.5 \\0.8 &0.9 &0.5 
	\end{bmatrix} $ and $P_Y=[\frac{1}{3},\frac{1}{2},\frac{1}{6}]^T$. We have
	\begin{align*}
	M_{\Omega_{1}}^{-1}MP_Y \!=\!\begin{bmatrix}
	1\\0
	\end{bmatrix},\ M_{\Omega_{2}}^{-1}MP_Y \!=\!\begin{bmatrix}
	1\\0
	\end{bmatrix},\
	M_{\Omega_{3}}^{-1}MP_Y \!=\!\begin{bmatrix}
	0.75\\0.25
	\end{bmatrix}.
	\end{align*}
	Since the first two vectors have zero element, $P_{XY}\notin \mathcal{H}_{XY}$.
\end{example}
\begin{remark}\label{ann}
	For $P_{XY}\in\mathcal{H}_{XY}$, $M_{\Omega}^{-1}MP_Y\in \mathbb{R}_{++}^{|\cal X|}$ implies $M_{\Omega}^{-1}MP_Y+\epsilon M_{\Omega}^{-1}M\begin{bmatrix}
	P_{X|Y_1}^{-1}J_u\\0\end{bmatrix}\in \mathbb{R}_{++}^{|\cal X|}$, furthermore, if the vector $M_{\Omega}^{-1}MP_Y$ contains a negative element, the vector $M_{\Omega}^{-1}MP_Y+\epsilon M_{\Omega}^{-1}M\begin{bmatrix}
	P_{X|Y_1}^{-1}J_u\\0\end{bmatrix}$ contains a negative element as well (i.e., not a feasible distribution), since we assumed $\epsilon$ is sufficiently small. %Thus, the set $\Omega$ corresponds to a basic feasible solution.
\end{remark}
From Proposition~\ref{kos} and Remark~\ref{ann}, we conclude that each basic feasible solution of $\mathbb{S}_u$, i.e., $V_{\Omega}^*$, can be written as summation of a standard probability vector (built by $M_{\Omega}^{-1}MP_Y$) and a perturbation vector (built by $\epsilon M_{\Omega}^{-1}M\begin{bmatrix}
P_{X|Y_1}^{-1}J_u\\0\end{bmatrix}$). This is the key property to locally approximate $H(Y|U)$. 
\begin{remark}
	The number of basic feasible solutions of $\mathbb{S}_u$ is at most $\binom{|\mathcal{Y}|}{|\mathcal{X}|}$, thus we have at most $\binom{|\mathcal{Y}|}{|\mathcal{X}|}^{|\mathcal{Y}|}$ optimization problems with variables $P_U(.)$ and $J_u$.% which should be optimized in order to find the minimum of $H(Y|U)$. Furthermore, not all optimization problems are feasible.% which is explained later.% We explain more about feasibility of  
\end{remark}
\subsection{Quadratic Optimization Problem}
In this part, we approximate the main problem in \eqref{equi} to form a new quadratic problem. Using \eqref{defin} and Proposition~\ref{kos} shows that the extreme points of $\mathbb{S}_u$ are perturbations of fixed distributions. Thus, the entropy of the extreme points, i.e., $H(V_{\Omega}^*)$, can be approximated using the local approximation of the entropy. 
In more detail, the approximation of $H(V_{\Omega}^*)$ is studied in Lemma~\ref{5} which is provided in Appendix~E. Furthermore, the geometrical properties of the extreme points are studied in Section~\ref{disc}. In next theorem we approximate \eqref{equi} using Lemma~\ref{5}. For simplicity for all $u\in\{1,..,|\cal U|\}$, we use $P_u$ instead of $P_U(u)$. 
%In this part, we approximate $H(Y|U)$ for the obtained basic feasible solutions and form the new quadratic problem. In Lemma~\ref{5} provided in Appendix E, the approximation of $H(V_{\Omega}^*)$ is studied, where we use the Bachmann-Landau notation where $o(\epsilon)$ describes the asymptotic behaviour of a function $f:\mathbb{R}^+\rightarrow\mathbb{R}$ which satisfies that $\frac{f(\epsilon)}{\epsilon}\rightarrow 0$ as $\epsilon\rightarrow 0$.

\begin{theorem}\label{th11}
	Let $P_{XY}\in\mathcal{H}_{XY}$ and $V_{\Omega_u}^*\in\mathbb{S}_u^*,\ u\in\{1,..,|\mathcal{Y}|\}$. For sufficiently small $\epsilon$, the minimization problem in \eqref{equi} can be approximated as follows
	\begin{align}\label{minmin}
	&\min_{P_U(.),\{J_u, u\in\mathcal{U}\}} -\left(\sum_{u=1}^{|\mathcal{Y}|} P_ub_u+\epsilon P_ua_uJ_u\right)\\\nonumber
	&\text{subject to:}\\\nonumber
	&\sum_{u=1}^{|\mathcal{Y}|} P_uV_{\Omega_u}^*=P_Y,\ \sum_{u=1}^{|\mathcal{Y}|} P_uJ_u=0,\ P_U\in \mathbb{R}_{+}^{|\cal Y|},\\\nonumber
	&\sum_{i=1}^{|\mathcal{X}|} |J_u(i)|\leq 1,\  \sum_{i=1}^{|\mathcal{X}|}J_u(i)=0,\ \forall u\in\mathcal{U},
	\end{align} 
	where $a_u$ and $b_u$ are defined in Lemma~\ref{5}.
\end{theorem}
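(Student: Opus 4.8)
The plan is to start from the equivalent reformulation \eqref{equi} and reduce the minimization of $H(Y|U)$ to a finite-dimensional program in the variables $P_U(\cdot)$ and $\{J_u\}$ by combining the extreme-point structure of $\mathbb{S}_u$ with the local entropy expansion of Lemma~\ref{5}. First I would invoke Theorem~1 to replace the original constrained problem by $\min H(Y|U)$ over $P_U$ and $P_{Y|U=u}\in\mathbb{S}_u$ subject to $\sum_u P_U(u)P_{Y|U=u}=P_Y$ and constraints \eqref{prop1}--\eqref{prop3} on $J_u$. By Proposition~\ref{4} the minimizing $P^*_{Y|U=u}$ must be an extreme point of $\mathbb{S}_u$, so I may substitute $P_{Y|U=u}=V_{\Omega_u}^*$ for a basic feasible solution of the form \eqref{defin}; by Proposition~\ref{prop1111} the index $u$ ranges over at most $|\mathcal{Y}|$ values. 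For a fixed assignment $u\mapsto\Omega_u$ this gives $H(Y|U)=\sum_{u}P_u\,H(V_{\Omega_u}^*)$, and the claim for \eqref{equi} follows by additionally optimizing over the finitely many such assignments, as noted in the Remark after Section~\ref{seca}.

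The key step is the local approximation of each $H(V_{\Omega_u}^*)$, and here I would lean on Proposition~\ref{kos} and Remark~\ref{ann}. Because $P_{XY}\in\mathcal{H}_{XY}$, the fixed part $M_{\Omega_u}^{-1}MP_Y$ of $V_{\Omega_u}^*$ is a probability vector with strictly positive entries, so $V_{\Omega_u}^*$ is a genuine perturbation $M_{\Omega_u}^{-1}MP_Y+\epsilon\,M_{\Omega_u}^{-1}M[\,P_{X|Y_1}^{-1}J_u;\,0\,]$ of an interior simplex point. This interiority is exactly what keeps the entropy smooth at the base point, so Lemma~\ref{5} applies and, after collecting terms, yields $H(Y|U)=-\big(\sum_{u}P_u b_u+\epsilon\,P_u a_u J_u\big)+o(\epsilon)$ with $a_u,b_u$ as defined there. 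Hence, for sufficiently small $\epsilon$, minimizing $H(Y|U)$ is equivalent up to the $o(\epsilon)$ remainder to minimizing the stated objective $-\big(\sum_u P_u b_u+\epsilon P_u a_u J_u\big)$.

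Finally I would translate the constraints of \eqref{equi} into those of \eqref{minmin}. The consistency constraint $\sum_u P_U(u)P_{Y|U=u}=P_Y$ becomes $\sum_u P_u V_{\Omega_u}^*=P_Y$ after the substitution; property \eqref{prop2} gives $\sum_u P_u J_u=0$; requiring $P_U$ to be a pmf gives $P_U\in\mathbb{R}_+^{|\mathcal{Y}|}$; and properties \eqref{prop1} and \eqref{prop3} give $\sum_i J_u(i)=0$ and $\sum_i|J_u(i)|\le1$ for every $u$. Since $V_{\Omega_u}^*$ is affine in $J_u$ through \eqref{defin}, both the consistency constraint and the cross term $\epsilon P_u a_u J_u$ are bilinear in $(P_u,J_u)$, which is precisely why \eqref{minmin} is a quadratic program with quadratic constraints. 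I expect the main obstacle to be making the approximation rigorous: one must argue that the $o(\epsilon)$ remainder from Lemma~\ref{5} is uniform over the finitely many extreme points $V_{\Omega_u}^*$ and that discarding it does not alter the leading-order optimizer, which rests on the strict positivity guaranteed by $P_{XY}\in\mathcal{H}_{XY}$ and on $\epsilon$ being small enough that every retained $V_{\Omega_u}^*$ remains feasible by Remark~\ref{ann}.
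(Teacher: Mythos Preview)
Your proposal is correct and follows essentially the same route as the paper: the paper's proof simply invokes Proposition~\ref{prop1111}, Proposition~\ref{4}, and Lemma~\ref{5}, writes $H(Y|U)=\sum_u P_u H(V_{\Omega_u}^*)\cong \sum_u P_u b_u+\epsilon P_u a_u J_u$, and leaves the constraint translation implicit. Your write-up is more explicit---you also spell out the role of Theorem~1, Proposition~\ref{kos}, Remark~\ref{ann}, and the bilinear structure of the resulting program---but these are elaborations of the same argument rather than a different approach.
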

\begin{proof}
	The proof directly follows from Proposition~\ref{prop1111}, Proposition~\ref{4} and Lemma~\ref{5}. For $P_{Y|U=u}=V_{\Omega_u}^*,\ u\in\{1,..,|\mathcal{Y}|\}$, $H(Y|U)$ can be approximated as follows
	\begin{align*}
	H(Y|U) = \sum_u P_uH(P_{Y|U=u})\cong \sum_{u=1}^{|\mathcal{Y}|} P_ub_u+\epsilon P_ua_uJ_u.
	\end{align*} 
	%where $b_u$ and $a_u$ are defined as in Proposition~\ref{5}.
\end{proof}
\begin{remark}
 The weights $P_u$, which satisfy the constraints $\sum_{u=1}^{|\mathcal{Y}|} P_uV_{\Omega_u}^*=P_Y$ and $P_U(.)\geq 0$, form a standard probability distribution, since the sum of elements in each vector $V_{\Omega_u}^*\in\mathbb{S}_u^*$ equals to one.
\end{remark}
\begin{remark}
	If we set $\epsilon=0$, \eqref{minmin} becomes the same linear program as presented in \cite{borz}, since the term with $J_u$ disappears.
\end{remark}
\iffalse
\begin{corollary}
	The privacy mechanism design can be summarized as follows:\\
	\textbf{Step~1:}
	Determine all extreme points of the sets $\mathbb{S}_u$ for $u\in\{1,..,|\mathcal{Y}|\}$, i.e., $\mathbb{S}_u^*$ as described before.\\
	\textbf{Step~2:}
	Choose one extreme point named $V_{\Omega_u}^*$ from each set $\mathbb{S}_u^*$ for $u\in\{1,..,|\mathcal{Y}|\}$. Assume $\{V_{\Omega_1}^*,..,V_{\Omega_{|\mathcal{Y}|}}^*\}$ are chosen.\\
	\textbf{Step~3:}
	Check the feasibility condition $\sum_{u=1}^{|\mathcal{Y}|} P_uV_{\Omega_u}^*=P_Y$, in other words, check if there exist $P_U(.)$ and $J_u$ which satisfies this constraint. If the constraint is not feasible go to Step 2.\\
	\textbf{Step~4:}
	For each extreme point $V_{\Omega_u}^*$ find scalar $b_u$ and vectors $a_u$ and $\log_u$. Solve the quadratic problem defined in \eqref{minmin}. Go to Step 2.\\
	\textbf{Step~5:}
	After trying all possible combinations of extreme points (at most $\binom{|\mathcal{Y}|}{|\mathcal{X}|}$ different combinations), choose the combination which has the lowest cost in \eqref{minmin}. \\
	\textbf{Step~6:}
	For small enough $\epsilon$ and for extreme points found in previous step, the maximum of \eqref{privacy} is approximated as $\text{max}\ I(U;Y)\cong H(Y)-\{\text{Cost in previous step}\}$. 
\end{corollary}
\fi
\begin{proposition}
	The feasible set of \eqref{minmin} is non-empty. 
\end{proposition}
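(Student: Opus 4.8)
The plan is to prove non-emptiness constructively, by exhibiting an explicit element of the feasible set rather than arguing abstractly. The natural candidate is the perturbation-free point obtained by setting $J_u=\bm{0}$ for every $u\in\mathcal{U}$. With this choice the three constraints on the perturbation vectors become trivial: $\sum_i J_u(i)=0$ and $\sum_i|J_u(i)|\le 1$ hold because $J_u=\bm{0}$, and the coupling constraint $\sum_u P_uJ_u=0$ holds automatically for any weights $P_U(\cdot)$. Hence the entire task reduces to producing a valid weight vector $P_U\in\mathbb{R}_{+}^{|\mathcal{Y}|}$ satisfying the averaging constraint $\sum_{u} P_u V_{\Omega_u}^*=P_Y$.

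First I would observe that when $J_u=\bm{0}$ the defining formula \eqref{defin} collapses the extreme point to $V_{\Omega}^*(\omega_i)=(M_{\Omega}^{-1}MP_Y)(i)$; these are exactly the basic feasible solutions (equivalently, the extreme points) of the polytope $\mathbb{S}\triangleq\{y\in\mathbb{R}^{|\mathcal{Y}|}\mid My=MP_Y,\ y\ge\bm{0}\}$, i.e. the set $\mathbb{S}_u$ of \eqref{Su} specialized to $J_u=\bm{0}$. This is precisely the feasible region of the perfect-privacy linear program of \cite{borz}. The key remark is then that $P_Y$ itself lies in $\mathbb{S}$, since $y=P_Y$ manifestly satisfies $My=MP_Y$ and $P_Y\ge\bm{0}$.

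Next I would invoke the representation theorem for polytopes: as $\mathbb{S}$ is bounded, every point of $\mathbb{S}$, and $P_Y$ in particular, can be written as a convex combination of its extreme points. Boundedness follows from Proposition~\ref{kos}, which gives $1^T V_{\Omega}^*=1$ for each basic feasible solution once the perturbation term vanishes; together with the fact that $\mathbf{1}$ lies in the row space of $M$ (a consequence of Lemma~\ref{null}, reflecting that the columns of $P_{X|Y}$ are probability vectors), this forces every $y\in\mathbb{S}$ to satisfy $1^Ty=1$, so $\mathbb{S}$ sits inside the probability simplex and is bounded. Writing $P_Y=\sum_u P_u V_{\Omega_u}^*$ with $P_u\ge 0$ and $\sum_u P_u=1$ then yields weights that meet the averaging constraint and the non-negativity constraint at once.

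The pair $(P_U,\{J_u=\bm{0}\})$ therefore satisfies every constraint of \eqref{minmin}, which establishes the claim. The only genuinely delicate step is guaranteeing that $P_Y$ admits such a convex representation; I would secure this either through the boundedness argument above, or more directly by appealing to the already-established feasibility of the perfect-privacy program in \cite{borz}, of which the present set is exactly the $\epsilon=0$ instance. Everything else is routine verification.
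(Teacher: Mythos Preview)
Your proposal is correct and follows essentially the same approach as the paper: set $J_u=\bm{0}$ for all $u$, observe that every $\mathbb{S}_u$ collapses to the common polytope $\mathbb{S}$ containing $P_Y$, and write $P_Y$ as a convex combination of extreme points of $\mathbb{S}$ to obtain the required weights. Your treatment is actually more careful than the paper's, spelling out via Lemma~\ref{null} and Proposition~\ref{kos} why $\mathbb{S}$ is bounded (hence a genuine polytope) so that the Krein--Milman/Carath\'eodory representation applies.
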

\begin{proof}
	Let $J_u=0$ for all $u\in{\mathcal{U}}$. In this case all sets $\mathbb{S}_u$ become the same sets named $\mathbb{S}$. Since the set $\mathbb{S}$ is an at most $|\mathcal{Y}|-1$ dimensional polytope and $P_Y\in\mathbb{S}$, $P_Y$ can be written as convex combination of some extreme points in $\mathbb{S}$. Thus, the constraint $\sum_{u=1}^{|\mathcal{Y}|} P_uV_{\Omega_u}^*=P_Y$ is feasible. Furthermore, by choosing $J_u=0$ for all $u\in{\mathcal{U}}$ other constraints in \eqref{minmin} are satisfied. 
\end{proof}
\iffalse
\begin{lemma}
	The cost function in \eqref{minmin} attains its minimum conditioned to the constraints. 
\end{lemma}
\begin{proof}
	The cost function is bounded from below and above, since we have
	\begin{align*}
	&\left|\left(\sum_{u=1}^{|\mathcal{Y}|} P_ub_u+ \epsilon P_ua_uJ_u\right)\right|\leq\sum_u\left|b_u\right|+\epsilon\left|\sum_u a_uJ_u\right|\\
	&\leq\sum_u\left|b_u\right|+\epsilon\sum_u\sqrt{\left(\sum_{i} a_u(i)^2\right)\left(\sum_{i} J_u(i)^2\right)}\\
	&\leq\sum_u\left|b_u\right|+\epsilon\sum_u\sqrt{\left(\sum_{i} a_u(i)^2\right)}.
	\end{align*}
	Furthermore, each of the constraints $\sum_{u=1}^{|\mathcal{Y}|} P_uV_{\Omega_u}^*=P_Y$ and $\sum_{u=1}^{|\mathcal{Y}|} P_uJ_u=0$ is a quadratic curve which forms a compact set. Also other constraints form compact sets and thus the intersection of these constraints are compact. By previous Lemma we know that the intersection of these constraints are nonempty, thus we can say by extreme value theorem that the cost function achieves its minimum with the constraints.
\end{proof}
\fi
%In next section we show that \eqref{minmin} can be converted to a linear programming problem.
\subsection{Equivalent linear programming problem}\label{c}
In this part, we show that \eqref{minmin} can be rewritten as a linear program. To do so, consider the vector
$\eta_u=P_u\left(M_{\Omega_u}^{-1}MP_Y\right)+\epsilon \left(M_{\Omega_u}^{-1}M(1:|\mathcal{X}|)P_{X|Y_1}^{-1}\right)(P_uJ_u)$ for all $u\in \mathcal{U}$, where $\eta_u\in\mathbb{R}^{|\mathcal{X}|}$. The vector $\eta_u$ corresponds to multiple of non-zero elements of the extreme point $V_{\Omega_u}^*$. Furthermore, $P_u$ and $J_u$ can be uniquely found as
\begin{align*}
P_u&=\bm{1}^T\cdot \eta_u,\\
J_u&=\frac{P_{X|Y_1}M(1:|\mathcal{X}|)^{-1}M_{\Omega_u}[\eta_u-(\bm{1}^T \eta_u)M_{\Omega_u}^{-1}MP_Y]}{\epsilon(\bm{1}^T\cdot \eta_u)}.
\end{align*}
\begin{proposition}
\eqref{minmin} can be rewritten as a linear program using the vector $\eta_u$.	
\end{proposition}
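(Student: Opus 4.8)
The plan is to eliminate the bilinear terms in \eqref{minmin} by the change of variables $\eta_u$. The only source of nonlinearity in \eqref{minmin} is the product $P_uJ_u$, which enters the objective through $\epsilon P_ua_uJ_u$, the equality $\sum_u P_uJ_u=0$, and the feasibility constraint $\sum_u P_uV_{\Omega_u}^*=P_Y$, since by \eqref{defin} the nonzero entries of $V_{\Omega_u}^*$ depend affinely on $J_u$. I would therefore introduce the auxiliary variable $K_u:=P_uJ_u\in\mathbb{R}^{|\mathcal{X}|}$ and observe that, by its definition, $\eta_u$ is an affine image of the pair $(P_u,K_u)$, because $M\begin{bmatrix}P_{X|Y_1}^{-1}J_u\\0\end{bmatrix}=M(1:|\mathcal{X}|)P_{X|Y_1}^{-1}J_u$, so the perturbation term of $\eta_u$ equals $\epsilon M_{\Omega_u}^{-1}M(1:|\mathcal{X}|)P_{X|Y_1}^{-1}K_u$.

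The first step is to show that the map $(P_u,K_u)\mapsto\eta_u$ is invertible, so that $(P_u,J_u)$ is recoverable from $\eta_u$. Applying $\bm{1}^T$ to $\eta_u$ and invoking Proposition~\ref{kos}, which gives $\bm{1}^T M_{\Omega_u}^{-1}MP_Y=1$ and $\bm{1}^T M_{\Omega_u}^{-1}M(1:|\mathcal{X}|)P_{X|Y_1}^{-1}K_u=0$, yields $P_u=\bm{1}^T\eta_u$, linear in $\eta_u$. Subtracting $P_u(M_{\Omega_u}^{-1}MP_Y)$ from $\eta_u$ and inverting the invertible factors $M_{\Omega_u}$, $M(1:|\mathcal{X}|)$ and $P_{X|Y_1}$ then recovers $K_u$ linearly in $\eta_u$, which is exactly the stated formula for $J_u=K_u/P_u$. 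Hence $(P_u,K_u)$ is an invertible linear function of $\eta_u$ on the region $P_u>0$; when $P_u=0$ the $u$-th term contributes nothing and may simply be dropped with $\eta_u=0$.

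With this substitution every ingredient of \eqref{minmin} becomes affine in the $\eta_u$'s. The objective $-\sum_u(P_ub_u+\epsilon a_uK_u)$ is linear since $a_u,b_u$ are constants for each fixed $\Omega_u$; the constraint $\sum_u P_uV_{\Omega_u}^*=P_Y$ is linear because the nonzero entries of $P_uV_{\Omega_u}^*$, placed at the positions indexed by $\Omega_u$, are precisely the components of $\eta_u$; and the remaining constraints $\sum_u P_uJ_u=\sum_u K_u=0$, $\sum_i J_u(i)=\bm{1}^TK_u=0$, and $P_u=\bm{1}^T\eta_u\ge 0$ are all linear in $\eta_u$.

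The one nontrivial step, which I expect to be the main obstacle, is the $\ell_1$ constraint $\sum_i|J_u(i)|\le 1$, which under the substitution becomes $\sum_i|K_u(i)|\le P_u$ and remains nonlinear because of the absolute values. I would linearize it by the standard epigraph trick: introduce slack variables $t_{u,i}\ge 0$ together with the linear inequalities $-t_{u,i}\le K_u(i)\le t_{u,i}$ and $\sum_i t_{u,i}\le P_u$. Collecting the variables $\{\eta_u,t_{u,i}\}$, the objective and all constraints are then linear, so \eqref{minmin} is an equivalent linear program. The remaining care is to verify that the change of variables is a genuine bijection, i.e.\ that no feasible point of the original problem is lost or spuriously added; the $P_u=0$ boundary case is the only delicate point, and it is handled by noting that such terms vanish identically from both the objective and every constraint.
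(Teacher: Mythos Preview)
Your proposal is correct and follows essentially the same route as the paper: both exploit that $(P_u,P_uJ_u)$ is an invertible affine image of $\eta_u$ (via Proposition~\ref{kos}) to render the objective and all equality/nonnegativity constraints linear, and then linearize the remaining $\ell_1$ constraint. The only cosmetic difference is that the paper handles the absolute values by case-splitting on signs rather than your epigraph slack variables, and it does not explicitly discuss the $P_u=0$ boundary case that you (correctly) flag as harmless.
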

\begin{proof}
	For the cost function we have
	\begin{align*}
	&-\left(\sum_{u=1}^{|\mathcal{Y}|} P_ub_u+\epsilon P_ua_uJ_u\right)=
	-\sum_u b_u(\bm{1}^T\eta_u)\\&-\epsilon\sum_u \!\!a_u\! \left[P_{X|Y_1}M(1:|\mathcal{X}|)^{-1}M_{\Omega_u}\![\eta_u\!-\!(\bm{1}^T \eta_u)M_{\Omega_u}^{-1}MP_Y]\right]\!,
	\end{align*}
	which is a linear function of elements of $\eta_u$ for all $u\in\mathcal{U}$.
	Non-zero elements of the vector $P_uV_{\Omega_u}^*$ equal to the elements of $\eta_u$, i.e., we have
	$
	P_uV_{\Omega_u}^*(\omega_i)=\eta_u(i).
	$
	Thus, the constraint $\sum_{u=1}^{|\mathcal{Y}|} P_uV_{\Omega_u}^*=P_Y$ can be rewritten as linear function of elements of $\eta_u$. For the constraints $\sum_{u=1}^{|\mathcal{Y}|} P_uJ_u=0$, $P_u\geq 0,\forall u$ and $\sum_{i=1}^{|\mathcal{X}|}J_u(i)=0$ we have
	\begin{align*}
	&\sum_u P_uJ_u=0\Rightarrow\\
	&\sum_uP_{X|Y_1}M(1:|\mathcal{X}|)^{-1}M_{\Omega_u}\left[\eta_u-(\bm{1}^T\cdot\eta_u)M_{\Omega_u}^{-1}MP_Y\right]=0,\\
	&P_u\geq 0,\ \forall u \Rightarrow \sum_i \eta_u(i)\geq 0,\ \forall u,\\
	&\bm{1}^TJ_u=0 \Rightarrow \\
	&\bm{1}^T P_{X|Y_1}M(1:|\mathcal{X}|)^{-1}M_{\Omega_u}\left[\eta_u-(\bm{1}^T\cdot\eta_u)M_{\Omega_u}^{-1}MP_Y\right]=0.
	\end{align*}
	Furthermore, for the last constraint we have 
	\begin{align*}
	&\sum_{i=1}^{|\mathcal{X}|} |J_u(i)|\leq 1\Rightarrow \\&\sum_i\! \left|\left(P_{X|Y_1}M(1\!:\!|\mathcal{X}|)^{-1}M_{\Omega_u}\!\left[\eta_u\!-\!(\bm{1}^T\!\eta_u)M_{\Omega_u}^{-1}MP_Y\right]\right)(i)\right|\! \leq\\&\epsilon(\bm{1}^T\eta_u),\ \forall u.
	\end{align*}
	The last constraint includes absolute values that can be handled by considering two cases for each absolute value. Thus, all constraints can be rewritten as linear function of elements of $\eta_u$ for all $u$.
\end{proof}
In the following, we provide an example where the procedure of finding the mechanism to produce $U$ is explained.

\begin{example}\label{ex2}
	Consider the leakage matrix $P_{X|Y}=\begin{bmatrix}
	0.3  \ 0.8 \ 0.5 \ 0.4\\0.7 \ 0.2 \ 0.5 \ 0.6
	\end{bmatrix}$  
	and $P_Y=[\frac{1}{2},\ \frac{1}{4},\ \frac{1}{8},\ \frac{1}{8}]^T$, furthermore, $\epsilon = 10^{-2}$.
	By using our method, the approximate solution to \eqref{minmin} is as follows
	\begin{align*}
	&P_U = \begin{bmatrix}
	0.7048 \\ 0.1492 \\ 0.146 \\ 0
	\end{bmatrix},\
	J_1 = \begin{bmatrix}
	-0.0023 \\ 0.0023
	\end{bmatrix},\ J_2 = \begin{bmatrix}
	0.5 \\ -0.5
	\end{bmatrix}\\
	&J_3 = \begin{bmatrix}
	-0.5 \\ 0.5
	\end{bmatrix},\
	J_4 = \begin{bmatrix}
	0 \\ 0
	\end{bmatrix},\ \min \text{(cost)} = 0.8239.
	\end{align*}
	Thus, we have $\max I(U;Y)\cong 0.9261$. For $\epsilon=0$, the maximum mutual information found in \cite{borz} is $0.9063$ which is less than our result. The details of the procedure on how to find the approximate solution is provided in Appendix~F.
\end{example}

\section{geometric interpretation and discussion}\label{disc}
Let $\mathbb{S} = \{y\in\mathbb{R}^{|\mathcal{Y}|}|My=MP_Y,\ y\geq 0\}$ and $\mathbb{S}^*$ be the set of the extreme points of $\mathbb{S}$. As shown in \cite{borz} $\mathbb{S}^*$ is found as follows:\\
	Let $\Omega = \{\omega_1,..,\omega_{|\mathcal{X}|}\}$, where $\omega_i\in\{1,..,|\mathcal{Y}|\}$ and all elements are arranged in an increasing order. If the vector $M_{\Omega}^{-1}MP_Y$ contains non-negative elements, then
\begin{align}\label{s^*}
V_{\Omega}^*(\omega_i)= (M_{\Omega}^{-1}MP_Y)(i).
\end{align}
and other elements of the extreme point $V_{\Omega}^*$ are zero. Note that if $P_{XY}\in\mathcal{H}_{XY}$ and by using Remark~3, if the set $\Omega$ produce an extreme point for $\mathbb{S}$, then it produces an extreme point for the set $\mathbb{S}_u$ as well. Now let $\mathbb{S}^*_u(i)\in\mathbb{S}_u^*$ and $\mathbb{S}^*(i)\in\mathbb{S}^*$. Also assume that they are both produced by the same set $\Omega$. By using \eqref{s_u^*} and \eqref{s^*} the relation between $\mathbb{S}^*(i)$ and $\mathbb{S}_u^*(i)$ is as follows
\begin{align}\label{ajab}
\mathbb{S}_u^*(i)= \mathbb{S}^*(i)+\epsilon V,
\end{align}
where
\begin{align}\label{v}
V(\omega_i)= M_{\Omega}^{-1}M\begin{bmatrix}
P_{X|Y_1}^{-1}J_u\\0\end{bmatrix}(i)
\end{align}
and other elements of $V$ are zero. Thus, the extreme points $\mathbb{S}_u^*$ are perturbed vectors of the extreme points  $\mathbb{S}^*$. 
\begin{remark}
	$\mathbb{S}^*$ are the extreme points of $\mathbb{S}_u$ when we have zero leakage, i.e., $\epsilon=0$. In this case, $\mathbb{S}_u=\mathbb{S}$ for all $u$.
\end{remark}
Next we show that the $\ell_1$ norm of $V$ is bounded from above.
\begin{proposition}\label{l1}
	Let $A=M_{\Omega}^{-1}M(1:|\mathcal{X}|)
	P_{X|Y_1}^{-1}$ and $a_i$ be the $i$-th column of $A$, i.e., $A=[a_1,..,a_{|\mathcal{X}|}]$. Then, if $J_u$ satisfies \eqref{prop3}, we have
	\begin{align*}
	\left\lVert V\right\rVert_1\leq r 
	\end{align*}
	where $r = \max_i 1^T|a_i|$ and $|a_i|=\begin{bmatrix} |a_i(1)|\\.\\.\\|a_i(|\mathcal{X}|)|\end{bmatrix}$. 
\end{proposition}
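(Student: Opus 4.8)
The plan is to first reduce $\|V\|_1$ to the $\ell_1$ norm of the short vector $A J_u\in\mathbb{R}^{|\mathcal{X}|}$, and then apply the triangle inequality together with property \eqref{prop3}. The whole argument is essentially a single Hölder/triangle-inequality estimate, so the only real work is to unwind the structure of $V$ correctly.

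First I would observe that $V$ has at most $|\mathcal{X}|$ nonzero entries, located at the indices $\omega_1,\ldots,\omega_{|\mathcal{X}|}$, and by \eqref{v} these entries are exactly the components of the vector $M_{\Omega}^{-1} M \begin{bmatrix} P_{X|Y_1}^{-1} J_u \\ 0 \end{bmatrix}$. Since the lower block of $\begin{bmatrix} P_{X|Y_1}^{-1} J_u \\ 0 \end{bmatrix}$ vanishes, the product $M \begin{bmatrix} P_{X|Y_1}^{-1} J_u \\ 0 \end{bmatrix}$ only involves the first $|\mathcal{X}|$ columns of $M$, i.e.\ it equals $M(1:|\mathcal{X}|) P_{X|Y_1}^{-1} J_u$. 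Hence the nonzero part of $V$ is precisely $A J_u$ with $A = M_{\Omega}^{-1} M(1:|\mathcal{X}|) P_{X|Y_1}^{-1}$, and therefore $\|V\|_1 = \|A J_u\|_1$, since the extra zero entries of $V$ contribute nothing to the $\ell_1$ norm.

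Next I would expand $A J_u = \sum_{i=1}^{|\mathcal{X}|} a_i\, J_u(i)$ as a linear combination of the columns of $A$ and apply the triangle inequality for the $\ell_1$ norm, noting that $\|a_i\|_1 = 1^T|a_i|$:
\begin{align*}
\|V\|_1 = \left\lVert \sum_i a_i\, J_u(i) \right\rVert_1 \leq \sum_i |J_u(i)|\,\|a_i\|_1 = \sum_i |J_u(i)|\,\bigl(1^T|a_i|\bigr).
\end{align*}
Bounding each $1^T|a_i|$ by $r = \max_i 1^T|a_i|$ and pulling it out of the sum gives $\|V\|_1 \leq r \sum_i |J_u(i)|$. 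Finally, invoking property \eqref{prop3}, namely $\sum_i |J_u(i)| \leq 1$, yields $\|V\|_1 \leq r$, as claimed.

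I do not expect any serious obstacle here: once the structure of $V$ is made explicit, the estimate is routine. The only step requiring a little care is the bookkeeping that identifies the nonzero block of $V$ with $A J_u$ — in particular recognizing that the zero-padding in $\begin{bmatrix} P_{X|Y_1}^{-1} J_u \\ 0 \end{bmatrix}$ effectively restricts $M$ to its first $|\mathcal{X}|$ columns, so that the matrix $A$ defined in the statement appears exactly and the permutation of entries to positions $\omega_i$ leaves the $\ell_1$ norm unchanged.
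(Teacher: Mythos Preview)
Your proof is correct and follows essentially the same route as the paper: reduce $\|V\|_1$ to $\|AJ_u\|_1$ via the zero-padding structure, expand in columns of $A$, apply the triangle inequality, bound each $\|a_i\|_1$ by $r$, and finish with \eqref{prop3}. If anything, your version is slightly more carefully written, since you correctly state the triangle-inequality step as an inequality rather than an equality.
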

\begin{proof}
We have $\left\lVert V\right\rVert_1=\left\lVert AJ_u\right\rVert_1$, since other elements of $V$ are zero. 
\begin{align*}
\left\lVert V\right\rVert_1&\stackrel{(a)}{=}\left\lVert AJ_u\right\rVert_1=\sum_{i=1}^{|\mathcal{X}|} \left\lVert a_iJ_u(i)\right\rVert_1=\sum_{i=1}^{|\mathcal{X}|} |J_u(i)|\left\lVert a_i\right\rVert_1\\
&\stackrel{(b)}{\leq} \max_i |a_i| \sum_{i=1}^{|\mathcal{X}|} |J_u(i)|\stackrel{(c)}{\leq}\max_i |a_i|,
\end{align*}
where (a) follows from \eqref{v} and (b) comes from triangle inequality. Finally, (c) is due to \eqref{prop3}, i.e., privacy criterion. 
\end{proof}
\begin{remark}
	From \eqref{ajab} and Proposition~\ref{l1}, it can be seen that the extreme point $\mathbb{S}_u^*(i)$ is inside an $\ell_1$ ball of radius $r$ with center $\mathbb{S}^*(i)$.   
\end{remark}
Figure~\ref{pos} illustrates the possible positions of the extreme points $\mathbb{S}_u^*$. Since we assumed that $\epsilon$ is sufficiently small, if $\mathbb{S}^*$ is not inside the standard probability simplex, i.e. not feasible, $\mathbb{S}_u^*$ is not feasible as well.    
\begin{figure}[]
	\centering
	\includegraphics[width = 0.25\textwidth]{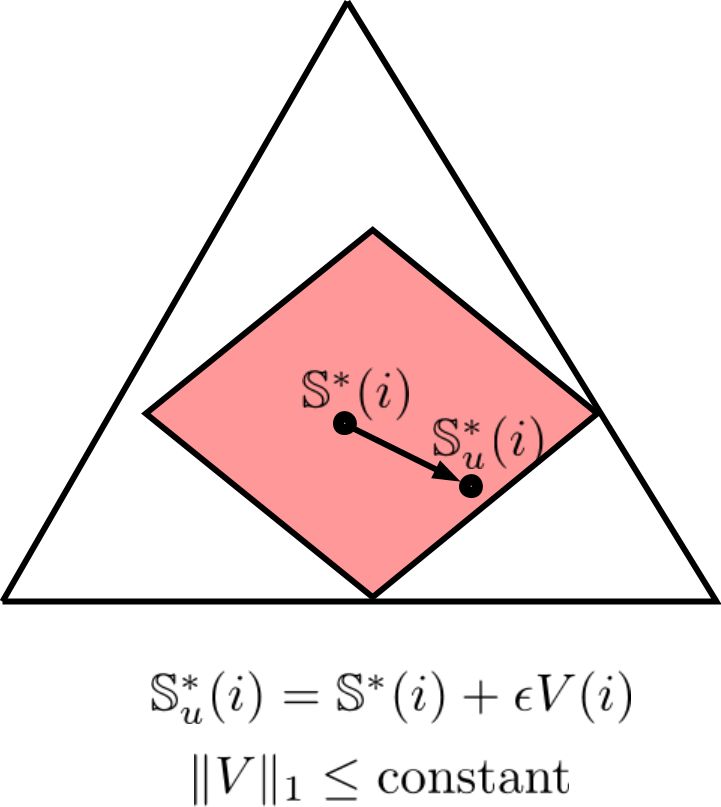}
	\caption{Possible positions of the extreme points. $\mathbb{S}_u^*$ is inside an $\ell_1$ ball of radius $r$ with center $\mathbb{S}^*(i)$.}
	\label{pos}
\end{figure}
Next, we investigate the range of permissible $\epsilon$. 
\begin{proposition}
	An upper bound for a valid $\epsilon$ in our work is as follows
	\begin{align*}
	\epsilon < \min\{\epsilon_1,\epsilon_2\},
	\end{align*} 
	where $\epsilon_1 = \frac{\min_{\Omega\in \Omega^2} \max_{y:M_{\Omega}^{-1}MP_Y(y)<0} |M_{\Omega}^{-1}MP_Y(y)|}{\max_{\Omega\in \Omega^2} |\sigma_{\max} (H_{\Omega})|}$, $\epsilon_2=\frac{\min_{y,\Omega\in \Omega^1} M_{\Omega}^{-1}MP_Y(y)}{\max_{\Omega\in \Omega^1} |\sigma_{\max} (H_{\Omega})|}$, $H_{\Omega}=M_{\Omega}^{-1}M(1:|\mathcal{X}|)P_{X|Y_1}^{-1}$ and $\sigma_{\max}$ is the largest right singular value.
\end{proposition}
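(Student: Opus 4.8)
The plan is to read ``validity of $\epsilon$'' through the feasibility test for the extreme points established in Section~\ref{seca}: an index set $\Omega$ yields a feasible extreme point of $\mathbb{S}_u$ exactly when every entry of $M_\Omega^{-1}MP_Y+\epsilon\,H_\Omega J_u$ is non-negative, where $H_\Omega=M_\Omega^{-1}M(1:|\mathcal{X}|)P_{X|Y_1}^{-1}$ as in the discussion around \eqref{v}. By Remark~\ref{ann}, the approximation is internally consistent only if small perturbations do not change which $\Omega$ are feasible. I would therefore split the index sets into $\Omega^1$, those for which the unperturbed vector $t_\Omega:=M_\Omega^{-1}MP_Y$ is entrywise non-negative (feasible), and $\Omega^2$, those for which $t_\Omega$ has at least one negative entry (infeasible), and then demand that the $\epsilon$-perturbation preserve each property.

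The first step is to bound the perturbation uniformly. Writing the perturbation of a coordinate as $\epsilon\,(H_\Omega J_u)(y)$, I would invoke the privacy criterion \eqref{prop3}, which gives $\|J_u\|_2\le\|J_u\|_1\le 1$, so that by the operator-norm inequality $\|H_\Omega J_u\|_2\le\sigma_{\max}(H_\Omega)\|J_u\|_2$ every coordinate satisfies $|\epsilon\,(H_\Omega J_u)(y)|\le\epsilon\,\|H_\Omega J_u\|_2\le\epsilon\,\sigma_{\max}(H_\Omega)$, where $\sigma_{\max}$ denotes the largest singular value.

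Next I would treat the two families separately. For $\Omega\in\Omega^1$, preserving feasibility means $t_\Omega(y)+\epsilon\,(H_\Omega J_u)(y)\ge 0$ for all $y$; since $t_\Omega(y)>0$ there, it suffices that $\epsilon\,\sigma_{\max}(H_\Omega)<t_\Omega(y)$, and taking the worst case over all $y$ and all $\Omega\in\Omega^1$ yields $\epsilon<\epsilon_2$. For $\Omega\in\Omega^2$, preserving infeasibility means keeping at least one coordinate negative; choosing the most negative coordinate $y^\star=\text{argmax}_{y:t_\Omega(y)<0}|t_\Omega(y)|$, it suffices that $\epsilon\,\sigma_{\max}(H_\Omega)<|t_\Omega(y^\star)|$, and the worst case over $\Omega\in\Omega^2$ gives $\epsilon<\epsilon_1$. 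Intersecting the two conditions delivers $\epsilon<\min\{\epsilon_1,\epsilon_2\}$.

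The main subtlety I anticipate is twofold. First, in the infeasible family $\Omega^2$ I only need a single coordinate to stay negative rather than controlling every coordinate, so the governing quantity is $\max_{y:t_\Omega(y)<0}|t_\Omega(y)|$ — a maximum, not a minimum — reflecting that the most negative coordinate is the most robust witness of infeasibility. Second, both $\epsilon_1$ and $\epsilon_2$ are presented as a ratio of a separately optimized numerator and denominator rather than as a per-$\Omega$ minimum of ratios; since $\frac{\min_\Omega a_\Omega}{\max_\Omega b_\Omega}\le\min_\Omega\frac{a_\Omega}{b_\Omega}$ for positive $a_\Omega,b_\Omega$, the stated quantities are conservative (hence still sufficient) versions of the exact thresholds, and I would flag this relaxation explicitly so the bound is not mistaken for tight.
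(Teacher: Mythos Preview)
Your proposal is essentially correct and mirrors the paper's proof in Appendix~G: the same split into $\Omega^1$ and $\Omega^2$, the same coordinatewise bound $|(H_\Omega J_u)(y)|\le\sigma_{\max}(H_\Omega)$ obtained via $\|J_u\|_2\le\|J_u\|_1\le 1$, and the same worst-case argument over $y$ and $\Omega$ to arrive at $\epsilon_1$ and $\epsilon_2$. Your observations about the $\max$ over negative coordinates in $\epsilon_1$ and about the conservative decoupling of numerator and denominator are accurate and in fact sharper than what the paper states explicitly.

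One point of motivation differs. You justify $\epsilon_2$ (the $\Omega^1$ bound) solely through \emph{feasibility preservation}, i.e., keeping $t_\Omega(y)+\epsilon(H_\Omega J_u)(y)\ge 0$. The paper's stated reason is instead the validity of the first-order Taylor expansion of $\log(1+x)$ used in Lemma~\ref{5}, which requires $\bigl|\epsilon\,(H_\Omega J_u)(y)/t_\Omega(y)\bigr|<1$ for every $y$ and every $\Omega\in\Omega^1$. Your sufficient condition $\epsilon\,\sigma_{\max}(H_\Omega)<t_\Omega(y)$ happens to imply both, so the bound is unchanged; but the Taylor-expansion rationale is the one that ties $\epsilon_2$ back to the approximation in Theorem~\ref{th11}, and you should mention it so the role of ``valid $\epsilon$'' is fully accounted for.
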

\begin{proof}
	The proof is provided in Appendix~G.
\end{proof}
In the next section, we study an example motivated by simple watermarking. First we solve the problem using our approach and then we investigate how good our approximation performs. Later, we consider different metrics as utility and privacy leakage such as probability of error and MMSE to evaluate our method. We compare our results with perfect privacy approach in \cite{borz} and the upper bound obtained in \cite{MMSE}.
\section{Example}\label{example}
	Let $(Z_1,Z_2)$ be two images which are correlated and $X$ be a label added to the images. Let $(Y_1,Y_2)=(f(Z_1,X),f(Z_2,X))$ be two images produced by placing the label $X$ on $Z_1$ and $Z_2$, where the function $f(.)$ denotes putting the label on each original image. 
	We assume that the label $X$ contains sensitive information, thus, describes the private data and $(Y_1,Y_2)$ denote the useful data. Furthermore, we assume that the original images $(Z_1,Z_2)$ are not accessible and the goal is to find $U$ which reveals as much information as possible about $(Y_1,Y_2)$ and fulfills the privacy criterion. The utility is measured by $I(U;Y_1,Y_2)$ and the strong $\ell_1$-privacy criterion is employed to measure the privacy leakage. 
	Thus, the problem can be summarized as follows
	\begin{align*}
	\max_{P_{U|(Y_1,Y_2)}} \ \ &I(U;Y_1,Y_2),\\
	\text{subject to:}\ \ &X-(Y_1,Y_2)-U,\\
	%&I(U;X)\leq \frac{1}{2}\epsilon^2,\label{condition1}\\
	& \left\lVert P_{X|U=u}-P_X \right\rVert_1\leq \epsilon,\ \forall u\in\mathcal{U}.\label{local}
	\end{align*}
	In order to study a numerical example consider the binary random variables $Z_1$, $Z_2$ and $X$ with probability distributions $p_X(x)=\begin{cases}
	\frac{3}{5},\ \ x=1\\\frac{2}{5},\ \ x=2
	\end{cases}$ and $p_{Z_i}(z)=\begin{cases}
	\frac{1}{3},\ \ z=1\\\frac{2}{3},\ \  z=2
	\end{cases}$ for $i=1,2$, have the patterns shown in Fig.~\ref{anna}, where each pixel is on or off. 
	\begin{figure}[H]
		\centering
		\includegraphics[scale=0.12]{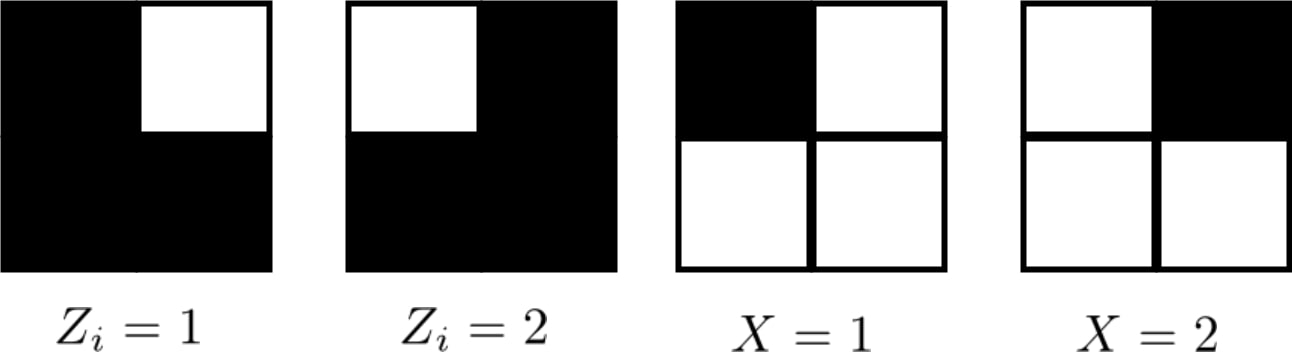}
		\caption{Possible patterns for $Z_1$, $Z_2$ and $X$.}
		\label{anna}
	\end{figure}   
	The correlation between $Z_1$ and $Z_2$ is given by the matrix $p_{Z_2|Z_1}=\begin{bmatrix}
	0.85 &0.3\\0.15 &0.7
	\end{bmatrix}$ and the function $f(.)$ is considered as XOR between the pixels. Thus, possible patterns for $(Y_1,Y_2)$ are as follows 
	\begin{figure}[H]
		\centering
		\includegraphics[scale = 0.12]{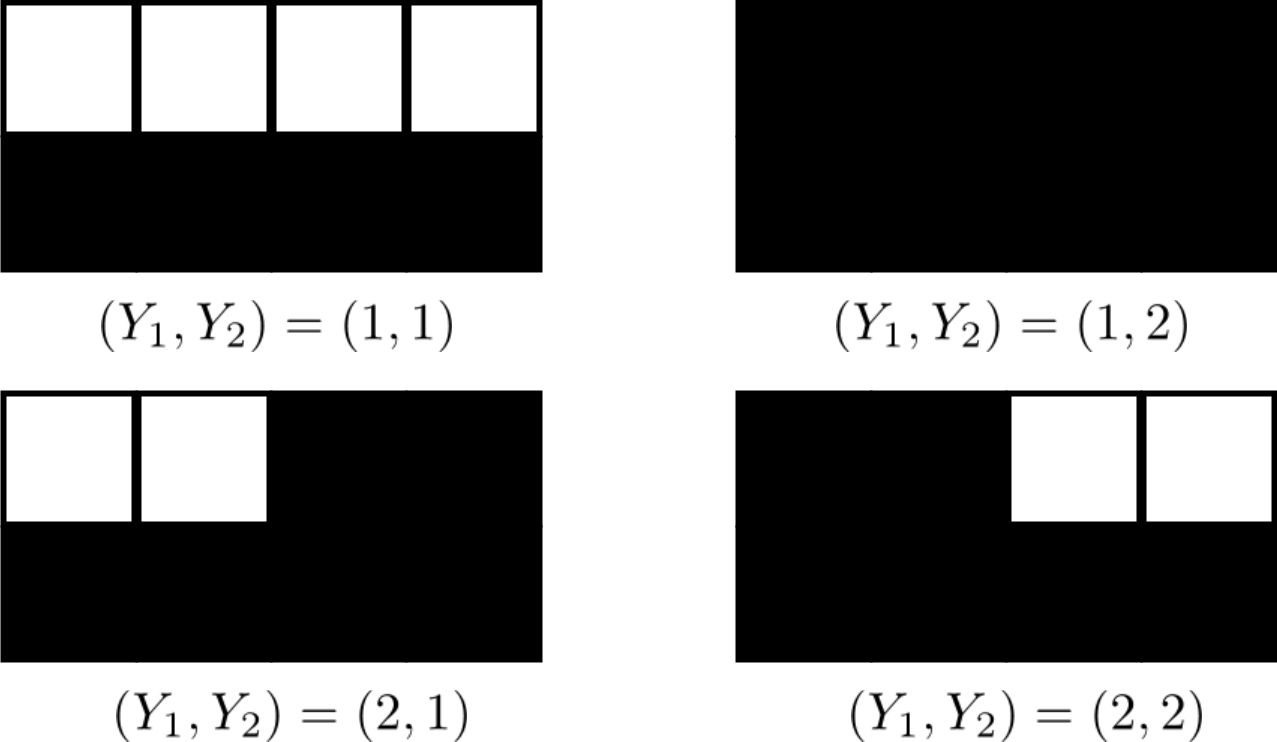}
		\label{f5}
		\caption{Possible patterns for the joint $(Y_1,Y_2)$.}
	\end{figure}
	with the marginal probability distribution 
	\begin{align*}
	p_{Y_1,Y_2}=\begin{cases}
	0.3567\ \ &(y_1,y_2)=(1,1)\\0.3933\ \ &(y_1,y_2)=(1,2)\\0.11\ \ &(y_1,y_2)=(2,1)\\0.14\ \ &(y_1,y_2)=(2,2)
	\end{cases}.
	\end{align*}
	Furthermore, the leakage matrix $P_{X|Y_1,Y_2}$ can be calculated as $P_{X|Y_1,Y_2}=\begin{bmatrix}
	0.4766 &0.7119 &0.2727 &0.8571\\0.5234 &0.2881 &0.7273 &0.1429
	\end{bmatrix}$ which is of full row rank. We use our method to find the approximate solution. 
	For $\epsilon=0.0562$, the approximate solution can be found and we obtain $P_U$ and $J_u$ as $P_U=[0,\ 0.4758,\ 0.486,\ 0.0382]^T$, $J_1=[0,\ 0]^T$, $J_2=[-0.5,\ 0.5]^T$, $J_3=[0.5,\ -0.5]^T$ and $J_4=[-0.1343, \ 0.1343]^T$. The minimum cost can be found as $0.5426$ and hence the maximum value of the main problem is approximately $H(Y)-0.4611=0.7102$. For $\epsilon=0$, we obtain the same linear program as in \cite{borz} and the maximum value is $0.6413$. %The details of finding the approximate solution can be found in Appendix~H. 
	Next, we evaluate our approximation for different leakages.
	\subsection{How good is the approximation?}
	In this part, we sweep $\epsilon$ to compare our approximate solution with the exact solution found by exhaustive search and the perfect privacy solution. 
	In Fig.~\ref{fdid}, conditional entropy for different values of $\epsilon$ is illustrated. In this figure, the exact solution of the main problem and the approximate solution achieved by our approach is shown, where we used the exhaustive search for finding the exact minimum value.  We can see that the approximation error becomes small in the high privacy regime as expected. Furthermore, it can be seen that in the high privacy regime, the approximate solution tends to the perfect privacy solution.    
	\begin{figure}[H]
		\centering
			\includegraphics[scale=0.14]{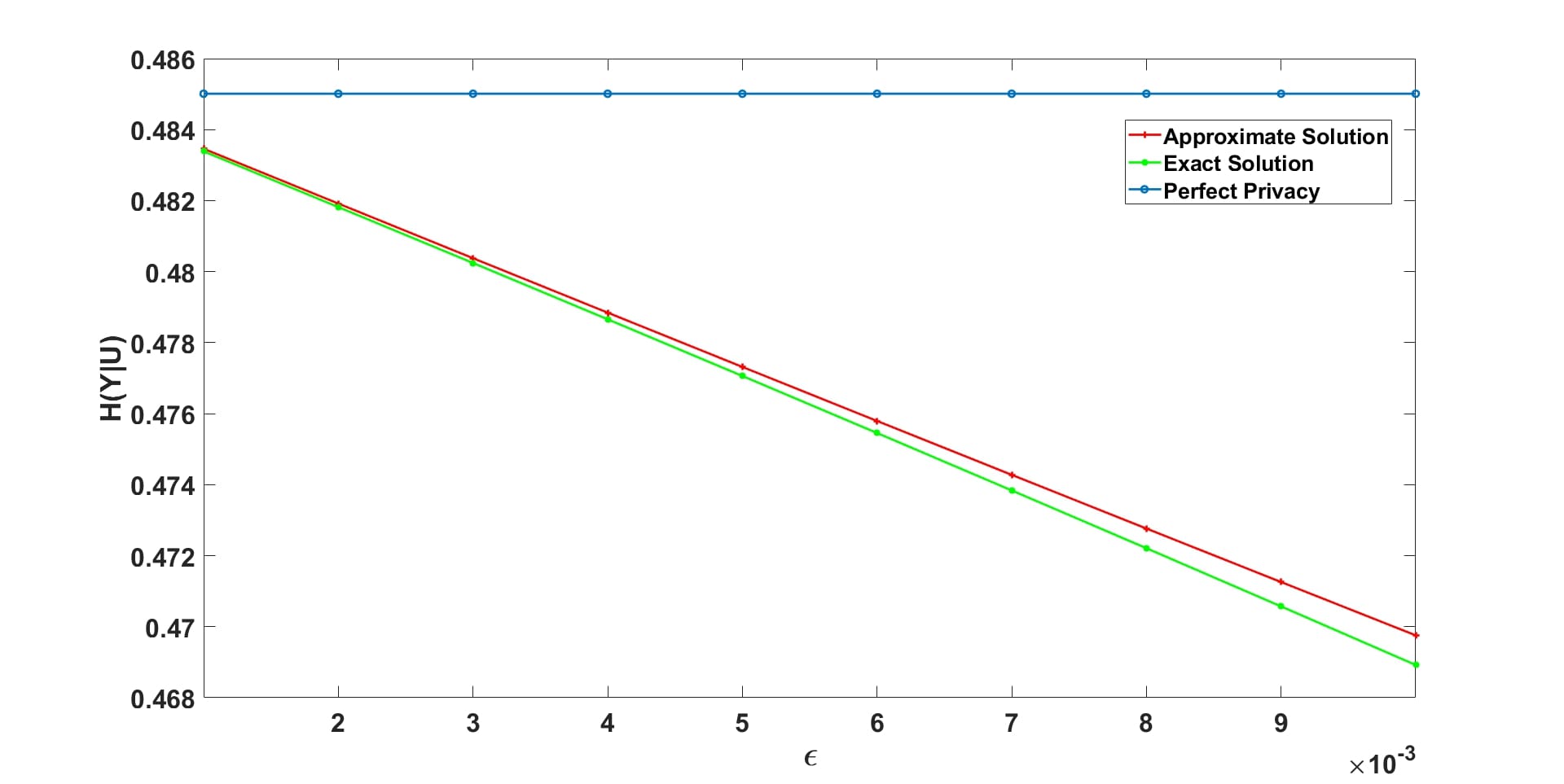}
		\caption{$H(Y|U)$ for different $\epsilon$, where we compare our proposed approximate solution with the exact solution found by an exhaustive search. Furthermore, in high privacy regime, our proposed solution tends to the solution for perfect privacy.}
		\label{fdid}
	\end{figure}
	Next, we investigate the role of correlation between $Z_1$ and $Z_2$ on the privacy-utility trade-off. Assume with probability $\alpha$, $Z_1=Z_2$ with mass function $p_{Z_i}(z)=\begin{cases}
	\frac{1}{3},\ \ z=1\\\frac{2}{3},\ \  z=2
	\end{cases}$ and with probability $1-\alpha$ we have $p_{Z_2|Z_1}=\begin{bmatrix}
	0.85 &0.3\\0.15 &0.7
	\end{bmatrix}$ and $p_{Z_1}(z)=\begin{cases}
	\frac{1}{3},\ \ z=1\\\frac{2}{3},\ \  z=2
	\end{cases}$. Here $\alpha$ corresponds to the correlation between $Z_1$ and $Z_2$ and hence $Y_1$ and $Y_2$. Thus, the kernel can be calculated as $p_{Z_2|Z_1}=\begin{bmatrix}
	0.85+0.15\alpha &0.3(1-\alpha)\\0.15(1-\alpha) &0.7+0.3\alpha 
	\end{bmatrix}$. Furthermore, we can calculate $P_{X|Y_1,Y_2}$ and $P_{Y_1Y_2}$ as
	\begin{align*}
	P_{Y_1Y_2}\!\!=\!\! \begin{bmatrix}
	\frac{11\alpha}{100}+\frac{107}{300}\\\frac{7\alpha}{50}+\frac{59}{150}\\\frac{11(1-\alpha)}{100}\\\frac{7(1-\alpha)}{50}
	\end{bmatrix}\!\!, 
	 P_{X|Y_1,Y_2}\!\!\!=\!\! \begin{bmatrix}
	\frac{9\alpha + 51}{33\alpha + 107} & \frac{18\alpha + 42}{21\alpha + 59}& \frac{3}{11} & \frac{6}{7}\\
	\frac{24\alpha + 56}{33\alpha + 107}& \frac{3\alpha + 17}{21\alpha + 59} &\frac{8}{11} &\frac{1}{7}
	\end{bmatrix}\!\!,
	\end{align*}
	Fig.~\ref{fdid1} illustrates the comparisons of $I(Y_1,Y_2;U)$ for different values of $\alpha$. We can see that the mutual information decreases with increasing $\alpha$. Intuitively, as $\alpha$ tends to $1$, $Y_1$ and $Y_2$ become more similar and the mutual information $I(U;Y_1,Y_2)$ tends to $I(U;Y_1)$. Furthermore, it can be seen that our approach achieves a better utility than perfect privacy for different values of $\alpha$. Furthermore, the utility achieved by exhaustive search is close to the utility attained by our approximation method.
	\begin{figure}[H]
		\centering
		\includegraphics[scale = 0.14]{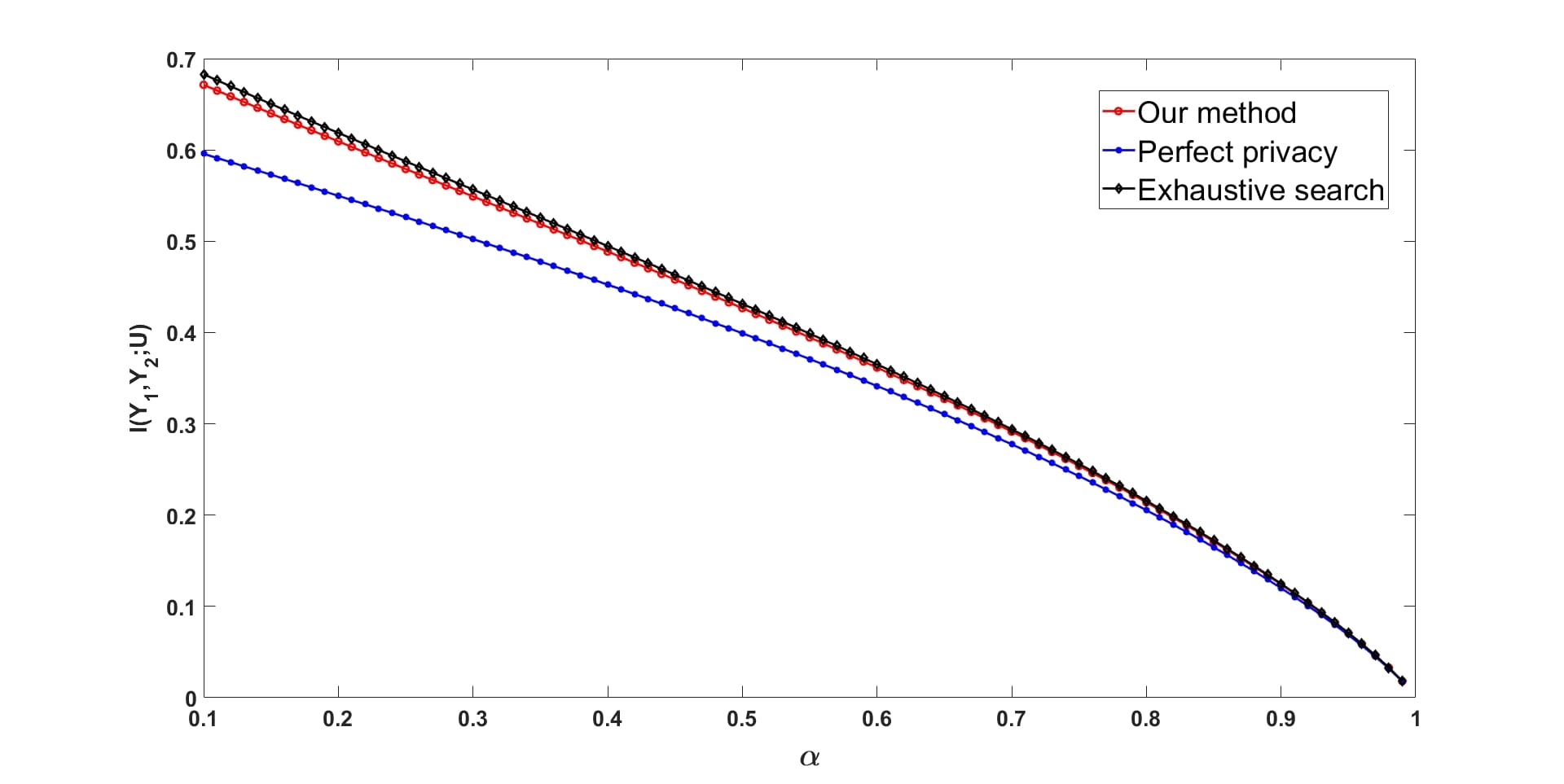}
		\caption{$I(Y|U)$ versus $\alpha$. As $\alpha$ increases, the mutual information (utility) decreases. As $\alpha$ increases $I(Y_1,Y_2;U)$ tends to $I(Y_1;U)$ and hence decreases.}
		\label{fdid1}
	\end{figure}
For the remaining parts we assume $Y=(Y_1,Y_2)$ with support $\{1,2,3,4\}$.
	\subsection{Privacy-utility trade-off for different measures}
	In this section, we consider two scenarios, where in the first scenario our approach is used to find the sub-optimal kernel $P_{U|(Y)}$ and in the second scenario the perfect privacy approach from \cite{borz} is used. Intuitively, $U$ and $Y$ should be as much dependent as possible under the privacy constraint. %Furthermore, mutual information increases with lower probability of error. 
	First, we consider the probability of error using the MAP decision making rule as a measure of utility followed by the  normalized MMSE$(Y|U)$. Furthermore, in second part we compare our results with the upper bound found in \cite[Corrolary~2]{MMSE}.

	\emph{Probability of error based on MAP v.s. $\frac{\text{MMSE}(X|U)}{\text{Var(X)}}$}:\\
In this part, we use the average probability of error between disclosed data $U$ and
	desired data $Y$ using the MAP decision making rule based on observation $U$ for estimating $Y$ as utility, and the normalized MMSE of estimating $X$ based on observation $U$ as privacy measure, i.e., $\frac{\text{MMSE}(X|U)}{\text{Var}(X)}$. In Fig.~\ref{farhad2}, the parameter $\alpha$, which corresponds
	to the correlation between $Z_1$ and $Z_2$ , is swept to illustrate the
	privacy-utility trade-off.
	\iffalse
	\begin{figure}[H]
		\centering
		\includegraphics[scale = 0.14]{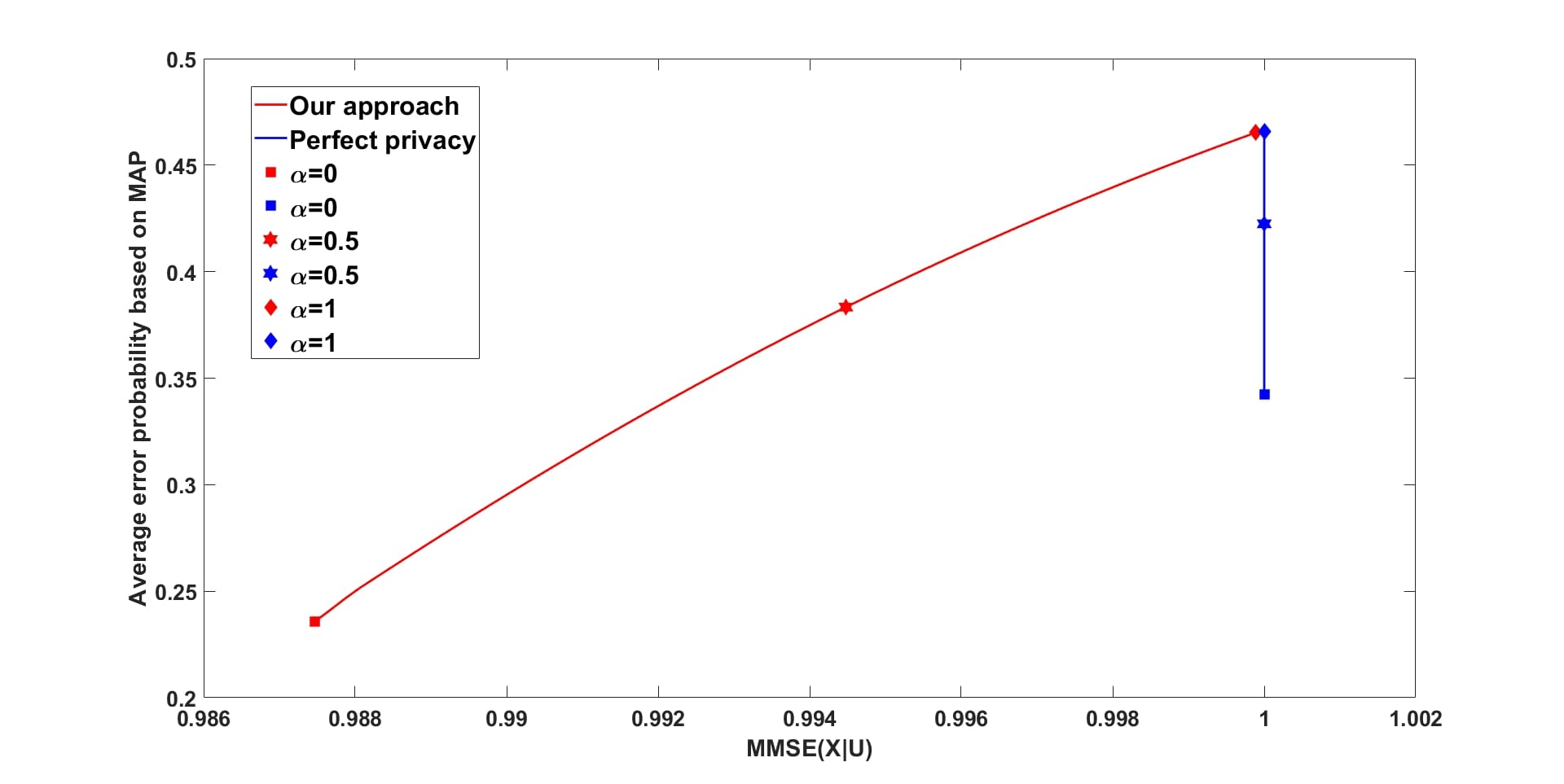}
		\caption{Privacy utility trade-off.}
		\label{farhad}
	\end{figure}
	In order to compute the privacy-utility trade-off, first we find sub-optimal kernel $P_{U|Y}$ using our method and then estimate $Y$ based on observing $U$ using M.A.P decision making rule. Finally, we find the error probability of estimation. We follow the similar procedure for second scenario using perfect privacy approach in [10]. The privacy-utility trade-off employing error probability and
	error probability is illustrated in Fig.~\ref{farhad}. As it can be seen when the leakage is increasing, the error probability of our approach decreases which means we achieve better utility, i.e., larger
	mutual information and lower error probability.\\
	\fi
	Fig.~\ref{farhad2} illustrates the privacy-utility trade-off using error probability and normalized MMSE. As it can be seen by letting small leakage we get better utility than perfect privacy approach. 
	\begin{figure}[h]
		\centering
		\includegraphics[scale=0.14]{}
		\caption{Privacy utility trade-off using probability of error as a measure of utility and normalized MMSE for measuring the privacy leakage. Red curve illustrates the trade-off using our approach and the blue curve shows the trade-off for perfect privacy.}
		\label{farhad2}
	\end{figure}

	\emph{$\frac{\text{MMSE}(Y|U)}{\text{Var}(Y)}$ v.s. $\frac{\text{MMSE}(X|U)}{\text{Var}(X)}$}:\\
	In this section, we use the normalized MMSE$(Y|U)$ as utility measure and normalized MMSE$(X|U)$ as privacy leakage measure, where
	\begin{align*}
	\text{MMSE}(Y|U)%&=\sum_u P_U(u)\mathbb{E} (\left(Y-\mathbb{E}(Y|U)\right)^2|U=u)\\
	&\!=\!\! \sum_u P_U(u)\! \left(\mathbb{E}(Y^2|U=u)\!-\!(\mathbb{E}(Y|U=u))^2\right)\!,\\
	\text{MMSE}(X|U)%&=\sum_u P_U(u)\mathbb{E} (\left(X-\mathbb{E}(X|U)\right)^2|U=u)\\
	&\!=\!\! \sum_u P_U(u)\! \left(\mathbb{E}(X^2|U=u)\!-\!(\mathbb{E}(X|U=u))^2\right)\!.
	\end{align*}
	We consider three cases where in first case we calculate $P_{U|Y}$ using our method, in second case we consider the upper bound of $\text{wESNR}_{\epsilon}$ studied in \cite[Corrolary~2]{MMSE}, and for the third case the perfect privacy approach is considered. We have
	\begin{align*}
	\text{wESNR}_{\epsilon}(X;Y)= \inf_{P_{U|Y}} &\frac{\text{MMSE}(Y|U)}{\text{var}(Y)}\\
	s.t.\ &X-Y-U,\\
	&\text{MMSE}(X|U)\geq (1-\epsilon)\text{var}(X),  
	\end{align*} 
	and from \cite[Corrolary~2]{MMSE},
	\begin{equation}
	\text{wESNR}_{\epsilon}(X;Y)\leq (1-\frac{1}{\eta_Y^2(X)})\min(\epsilon,\eta_Y^2(X)), \label{uppy}
	\end{equation}
	which is achieved by the following erasure channel
	\begin{align*}
	P_{U|Y}(u|y)=\begin{cases}
	1-\delta\ &u=y \\\delta\ &u=e,
	\end{cases}
	\end{align*}
	where $\delta=1-\frac{\epsilon}{\eta_Y^2(X)}$. We sweep $\alpha$ to illustrate the privacy-utility trade-off. In order to compare \eqref{uppy} with our results, we use $\epsilon^2$ in the upper bound.
	Fig~\ref{compare2} illustrates the privacy-utility trade-off of our method compared to the case of perfect privacy. Furthermore, the lowest upper bound in \eqref{uppy}, i.e., the best upper bound, is shown in this figure. 
	We can see that our method achieves better utility compared to the perfect privacy when a small leakage is allowed. Furthermore, the achieved utility is lower than the upper bound for all $\alpha\in[0,1]$.
	Next proposition gives a lower bound on MMSE$(X|U)$ when we use the strong $\ell_1$-privacy criterion.
	\begin{proposition}
		Let $X\in\{x_1,x_2\}$ be a binary r.v. with zero mean and $U$ satisfy the strong $\ell_1$-privacy criterion. Then we have 
		\begin{align*}
		\text{MMSE}(X|U)\geq \text{Var}(X)-\frac{1}{4}\epsilon^2(x_1-x_2)^2.
		\end{align*}
		Furthermore,
		\begin{align*}
		\text{Var}(X)\leq \frac{1}{4}(x_1-x_2)^2,
		\end{align*}
		and if $x_1=-x_2$
		\begin{align*}
		\text{Var}(X)= \frac{1}{4}(x_1-x_2)^2,
		\end{align*}
	\end{proposition}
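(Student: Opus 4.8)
The plan is to exploit the fact that a zero-mean binary random variable is completely determined by a single probability, so that both the conditional and unconditional variances collapse to the classical two-point formula. Writing $p = P_X(x_1)$ and $p_u = P_{X|U=u}(x_1)$, the zero-mean assumption reads $px_1 + (1-p)x_2 = 0$, and for any two-point distribution one has $\text{Var}(X|U=u) = p_u(1-p_u)(x_1-x_2)^2$ while $\text{Var}(X) = p(1-p)(x_1-x_2)^2$. First I would translate the strong $\ell_1$-privacy criterion into a scalar bound: since $\|P_{X|U=u}-P_X\|_1 = |p_u - p| + |(1-p_u)-(1-p)| = 2|p_u - p|$, the criterion is equivalent to $|p_u - p| \le \epsilon/2$ for every $u$.

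The key step is to show that the gap $\text{Var}(X) - \text{MMSE}(X|U)$ factors cleanly. Combining $\text{MMSE}(X|U) = \sum_u P_U(u)\,p_u(1-p_u)(x_1-x_2)^2$ with the law of total probability $\sum_u P_U(u) p_u = p$, a short computation gives
\[
\text{Var}(X) - \text{MMSE}(X|U) = (x_1-x_2)^2\left(\sum_u P_U(u)p_u^2 - p^2\right) = (x_1-x_2)^2 \sum_u P_U(u)(p_u - p)^2.
\]
The right-hand factor is precisely the variance of the posterior probability $p_u$ viewed as a random variable over $U$, and bounding it by its maximal deviation yields $\sum_u P_U(u)(p_u - p)^2 \le \max_u (p_u-p)^2 \le \epsilon^2/4$. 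Substituting this estimate and rearranging produces the first claim $\text{MMSE}(X|U) \ge \text{Var}(X) - \frac14 \epsilon^2(x_1-x_2)^2$.

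For the remaining two statements I would argue directly from the two-point variance formula. Since $p(1-p) \le 1/4$ for all $p \in [0,1]$, we obtain at once $\text{Var}(X) = p(1-p)(x_1-x_2)^2 \le \frac14(x_1-x_2)^2$, with equality exactly when $p = 1/2$. It then remains to connect $p=1/2$ to the symmetric case: substituting $x_1 = -x_2$ into the zero-mean condition gives $x_1(2p-1)=0$, and since $x_1 \ne 0$ (the two mass points are distinct and the mean vanishes) this forces $p = 1/2$, so equality holds. I do not expect a genuine obstacle here; the only point demanding care is the algebraic factorization of the variance gap into $(x_1-x_2)^2$ times the variance of $p_u$ over $U$, after which both the $\ell_1$ bound and the elementary inequality $p(1-p)\le 1/4$ plug in immediately.
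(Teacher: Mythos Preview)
Your proposal is correct and essentially the same as the paper's argument. The paper writes $P_{X|U=u}=P_X+\epsilon J_u$ and computes $\text{MMSE}(X|U)=\text{Var}(X)-\epsilon^2(x_1-x_2)^2\sum_u P_u J_u(1)^2$, then bounds via $|J_u(1)|\le 1/2$; your $p_u-p$ is precisely $\epsilon J_u(1)$, so your variance-of-the-posterior decomposition is the same identity in different notation. For the second claim the paper takes a slightly different route---using the zero-mean condition to get $\text{Var}(X)=-x_1x_2$ and then the inequality $-x_1x_2\le\frac14(x_1-x_2)^2$ (equivalently $(x_1+x_2)^2\ge 0$)---whereas you invoke $p(1-p)\le 1/4$ directly; your version is marginally cleaner in that it does not need the zero-mean hypothesis for the inequality itself, only for the equality case.
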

	\begin{proof}
		The proof is provided in Appendix~I.
	\end{proof}
	\begin{remark}
		In other words, if $x_1=-x_2$, the privacy criterion used in this work results in the privacy criterion employed in \cite{MMSE}.
	\end{remark}
%	\begin{figure}[H]
%		\centering
%		\includegraphics[scale = 0.2]{ComparingNew2.eps}
%		\caption{Privacy utility trade-off.}
%		\label{compare1}
%	\end{figure}
	\begin{figure}[H]
		\centering
		\includegraphics[scale=0.14]{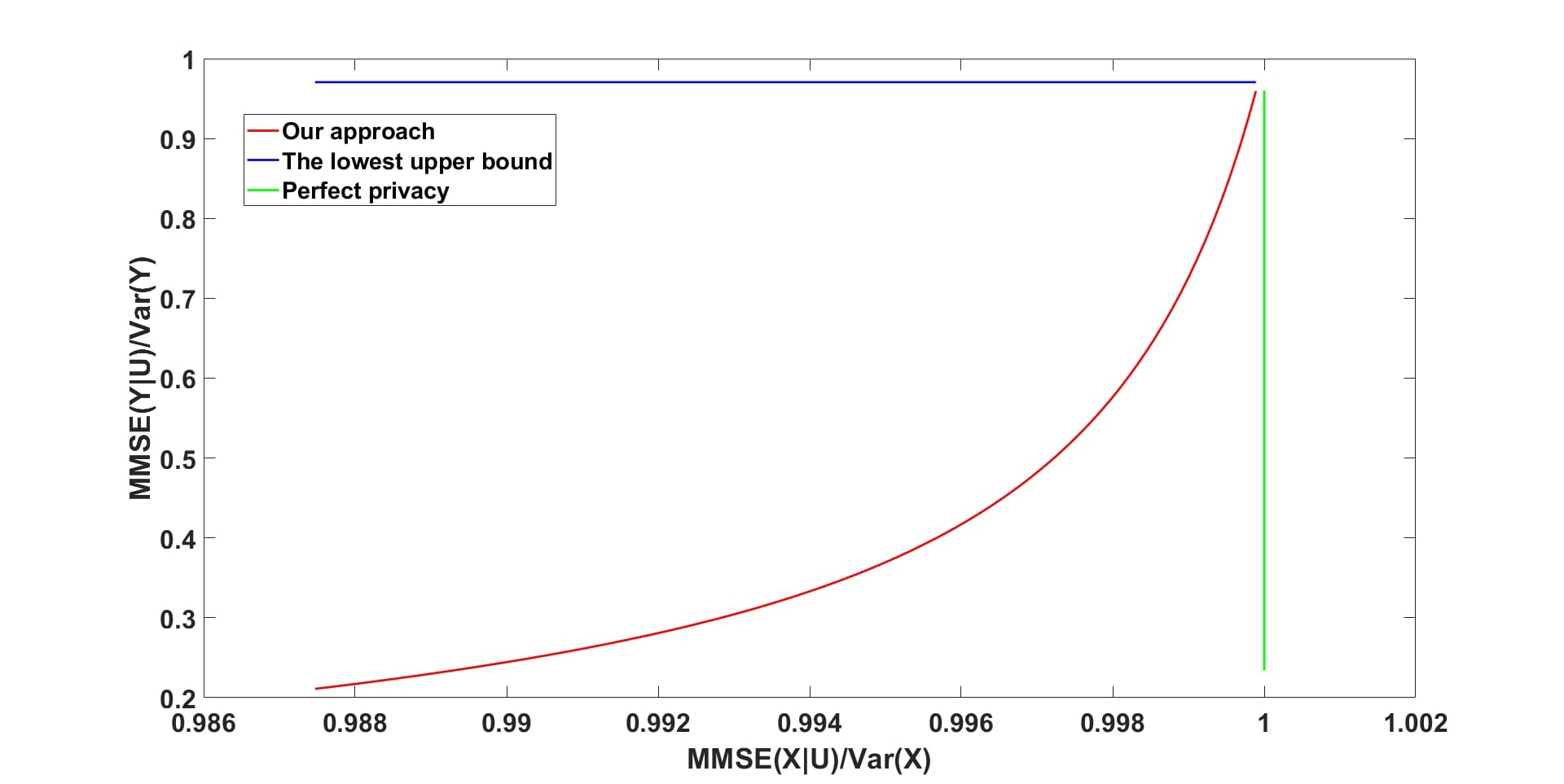}
		\caption{Privacy utility trade-off.}
		\label{compare2}
	\end{figure}
	%Fig.~\ref{compare3} illustrates the comparison of the upper bound using $\epsilon^2$ with our method and perfect privacy approach.
	%\begin{figure}[H]
	%	\centering
	%	\includegraphics[scale = 0.2]{ComparingNew.eps}
	%	\caption{Privacy utility trade-off.}
	%	\label{compare3}
	%\end{figure}
	In next part, we study the case $\alpha=1$ using the approach in \cite{khodam}.\\
	\emph{Special case for $\alpha=1$ (invertible leakage matrix)}:\\
%	In this part, we investigate the cases where $\alpha$ gets close to $1$. As it can be seen when $\alpha$ tends to one, $Y_1$ and $Y_2$ becomes similar and for $\alpha=1$ we have $Y_1=Y_2$. Now 
In this part, we solve the main problem for $\alpha=1$. We have:
	\begin{align*}
	P_{Y_1Y_2}= \begin{bmatrix}
	\frac{7}{15}\\\frac{8}{15}
	\end{bmatrix},\ P_{X|Y_1,Y_2}= \begin{bmatrix}
	\frac{3}{7} & \frac{3}{4}\\
	\frac{4}{7}& \frac{1}{4} 
	\end{bmatrix},
	\end{align*}
	and the main problem is reduced to:
	\begin{align*}
	\max_{P_{U|Y}} \ \ &I(U;Y),\\
	\text{subject to:}\ \ &X-Y-U,\\
	& \left\lVert P_{X|U=u}-P_X \right\rVert_1\leq \epsilon,\ \forall u\in\mathcal{U}.
	\end{align*}
	Note that the leakage matrix is invertible and we can not use the method of this work to solve the problem, however by using our previous result in \cite{khodam} (since the leakage matrix is invertible) we can write:
	\begin{align*}
	I(U;Y)\cong\frac{1}{2}\epsilon^2\sum_u P_U\|[\sqrt{P_Y}^{-1}]P_{X|Y}^{-1}[\sqrt{P_X}]L_u\|^2,
	\end{align*}
	where $L_u=[\sqrt{P_X}^{-1}]J_u$. The privacy constraint can be written as
	$
	\|[\sqrt{P_X}]L_u\|_1\leq1.
	$
	The only difference with the problem in \cite{khodam} is the privacy constraint. The problem can be approximated by:
	\begin{align}
	\max_{\{L_u,P_U\}} \ &\sum_u P_U(u)\|W\cdot L_u\|^2,\\
	\text{subject to:}\ &\|L_u\|_1\leq 1,\ \forall u\in\mathcal{U},\\
	&\sum_x \sqrt{P_X(x)}L_u(x)=0,\ \forall u,\\
	&\sum_u P_U(u)\sqrt{P_X(x)}L_u(x)=0,\ \forall x,
	\end{align}
	where $W=[\sqrt{P_Y}]^{-1}P_{X|Y_1}^{-1}[\sqrt{P_X}]$. The maximizer of the final problem is $L_u = \frac{L^*}{\|[\sqrt{P_X}]L^*\|_1}$ and the maximum value is $\frac{\frac{1}{2}\epsilon^2\sigma_{\max}^2}{\|[\sqrt{P_X}]L^*\|_1^2}$, where $\sigma_{\max}$ is the maximum singular value of $W$ and $L^*$ is the corresponding singular vector. %Fig.~\ref{farhad6} illustrates that when $\alpha$ tends to $1$ the solution of the present work tends to the approximate maximum value $\frac{\frac{1}{2}\epsilon^2\sigma_{\max}^2}{\|[\sqrt{P_X}]L^*\|_1^2}$.
	%\begin{figure}[h]
	%	\centering
	%	\includegraphics[width = 0.52\textwidth]{tends.eps}
	%	\caption{The approximate solution of our new method gets close the approximate solution when $\alpha=1$.}
	%	\label{farhad6}
	%\end{figure}
\section{conclusion}\label{concul1}
It has been shown that a statistical data disclosure problem can be decomposed if the leakage matrix is not of full rank. Furthermore, information geometry can be used to approximate $H(Y|U)$, which allows us to simplify the optimization problem for sufficiently small leakage. In particular, the new optimization problem can be rewritten as a linear program. In an example, we evaluated our method by comparing it with the exact solution and perfect privacy solution for different leakages. Furthermore, the privacy-utility trade-off has been studied for different measures of utility and privacy leakage. It is shown that by letting small leakage, our method achieves better utility compared to the perfect privacy approach.   %Furthermore, the range of permissible $\epsilon$ is investigated and the privacy mechanism design is summerized in an example.
%For an information-theoretic disclosure control problem we have shown that when the strong $l_1$-privacy criterion is employed to measure the privacy, for a small leakage, the optimizer vector distributions have a particular geometry, which enable us to use the concept of information geometry. We have shown that the main problem can be approximated by a standard linear programming. Furthermore, we investigated the range of permissible $\epsilon$. 
%The privacy mechanism design is summerized in an example.
%Furthermore, it is shown that when the privacy criterion is replaced by a norm-2 privacy criterion, in case of $|\mathcal{X}|=2$ again the main problem can be approximated by a linear programming.
	\section*{Appendix A}\label{appa}
	As shown in Lemma~\eqref{null2}, $P_{Y|U=u}$ must belong to the following set for every $u\in\mathcal{U}$
	\begin{align*}
	\psi\!=\!\left\{\!y\in\mathbb{R}^{|\mathcal{Y}|}|My\!=\!MP_Y\!+\!\epsilon M\!\!\begin{bmatrix}
	P_{X|Y_1}^{-1}J_u\\0
	\end{bmatrix}\!,y\geq 0,J_u\in\mathcal{J} \!\right\}\!,
	\end{align*}
	where $\mathcal{J}=\{J\in\mathbb{R}^{|\mathcal{X}|}|\left\lVert J\right\rVert_1\leq 1,\ \bm{1}^{T}\cdot J=0\}$. First we show that $\psi$ is a compact set. 
	Using Lemma~\eqref{44}, each vector inside $\psi$ is a standard probability vector and thus the set $\psi$ is bounded.
	$\mathcal{J}$ corresponds to a closed set since each conditions $\left\lVert J\right\rVert_1\leq 1$ and $\bm{1}^{T}\cdot J=0$ represent a closed set and thus their intersection is closed. Let $\{y_1,y_2,..\}$ be a convergent sequence with $\lim_{i\rightarrow \infty} y_i=y^{*}$, which each $y_i$ is inside $\psi$. Since $\{y_i\}$ is convergent it is a Cauchy sequence and thus there exist $N$ such that for all $i,j\geq N$ we have $\left\lVert y_i-y_j\right\rVert\leq \epsilon_1$. Thus,
	$\left\lVert My_i-My_j\right\rVert\leq \epsilon'$ where $\epsilon'=\left\lVert M\right\rVert\epsilon_1$. Since $y_i,y_j\in \psi$ we obtain
	\begin{align*}
	&\left\lVert\epsilon M(1:|\mathcal{X}|)P_{X|Y_1}^{-1}J_{u_i}\!\!-\!\epsilon M(1:|\mathcal{X}|)P_{X|Y_1}^{-1}J_{u_j}\right\rVert\leq \epsilon',\\
	&\Rightarrow\epsilon\left\lVert \left(M(1:|\mathcal{X}|)P_{X|Y_1}^{-1}\right)^{-1}\right\rVert^{-1}\left\lVert J_{u_i}-J_{u_j}\right\rVert\leq \epsilon',\\
	&\Rightarrow\left\lVert J_{u_i}-J_{u_j}\right\rVert\leq \epsilon'',
	\end{align*}
	where $\epsilon''=\frac{\epsilon'}{\epsilon\left\lVert\left(M(1:|\mathcal{X}|)P_{X|Y_1}^{-1}\right)^{-1}\right\rVert^{-1}}$. In last line we used the fact that $M(1:|\mathcal{X}|)$ is invertible. Thus, $\{J_{u_i}\}$ is a Cauchy sequence. Since the set $\mathcal{J}$ is bounded and closed it is compact and thus the Cauchy sequence $\{J_{u_i}\}$ is convergent. Let $J^*=\lim_{i\rightarrow \infty} J_{u_i}$, for sufficiently large $i$ we have
	\begin{align*}
	\left\lVert J^*-J_{u_i}\right\rVert \leq \epsilon_2,\\\Rightarrow
	\left\lVert\epsilon M(1:|\mathcal{X}|)P_{X|Y_1}^{-1}J^*\!\!-\!\epsilon M(1:|\mathcal{X}|)P_{X|Y_1}^{-1}J_{u_i} \right\rVert\leq \epsilon_2'\\\Rightarrow
	\left\lVert MP_Y+\epsilon M(1:|\mathcal{X}|)P_{X|Y_1}^{-1}J^*-My_i\right\rVert\leq \epsilon_2'.
	\end{align*}
	We can conclude that $My^*$ is of the form $MP_Y+\epsilon M(1:|\mathcal{X}|)P_{X|Y_1}^{-1}J^*$ and thus $\psi$ is closed and compact. We define a vector mapping $\theta:\Psi\rightarrow \mathbb{R}^{|\cal Y|}$ as follows
	\begin{align*}
	\theta_i\left(p_{Y|U(\cdot|U)}\right)&=p_{Y|U}(y_i|u),\ i\in [1:|\mathcal{Y}|-1],\\
	\theta_{|\mathcal{Y}|}&=H(Y|U=u).
	\end{align*}
	Since the mapping $\theta$ is continuous and the set $\Psi$ is compact, by using Fenchel-Eggleston-Carath\'{e}odory's Theorem \cite{el2011network} for every $U$ with p.m.f $F(u)$ there exists a random variable $U'$ with p.m.f $ F(u')$ such that $|\cal U'|\leq |\cal Y|$ and collection of conditional p.m.fs $P_{Y|U'}(\cdot|u')\in \Psi$ where
	\begin{align*}
	\int_u \theta_i(p(y|u))dF(u)=\sum_{u'\in\cal U'}\theta_i(p(y|u'))p(u').
	\end{align*}
	It ensures that by replacing $U$ by $U'$, $I(U;Y)$ and the distribution $P_Y$ are preserved. Furthermore, the condition $\sum_{u'}P_{U'}(u')J_{u'}=\bm{0}$ is satisfied since $P_Y\in \psi$ and we have
	\begin{align*}
	P_Y=\sum_{u'} P_{U'}P_{Y|U'=u'}\Rightarrow P_X=\sum_{u'} P_{U'}P_{X|U'=u'},\\
	\sum_{u'} P_{U'}(P_{X|U'=u'}-P_X)=\bm{0}\Rightarrow \sum_{u'}P_{U'}(u')J_{u'}=\bm{0}.
	\end{align*}
	Note that any point in $\Psi$ satisfies the privacy criterion, i.e., the equivalent $U'$ satisfies the per-letter privacy criterion as well. Thus, without loss of optimality we can assume $|\mathcal{U}|\leq |\mathcal{Y}|$.
	
	Let $\mathcal{A}=\{P_{U|Y}(\cdot|\cdot)|U\in\mathcal{U}, Y\in\mathcal{Y},||\cal U|\leq |\cal Y| \}$ and $\mathcal{ A}_y=\{P_{U|Y}(\cdot|y)|U\in\mathcal{U},|\cal U|\leq |\cal Y| \}$, $\forall y\in \cal Y$. $\mathcal{A}_y$ is a standard $|\mathcal {U}|-1$ simplex and since $|\cal U|\leq |\cal Y|<\infty$ it is compact. Thus, $\mathcal{A}=\cup_{y\in\mathcal{Y}}\mathcal{A}_y$ is compact. And the set $\mathcal{A}'=\{P_{U|Y}(\cdot|\cdot)\in\mathcal{A}|X-Y-U, \left\lVert P_{X|U=u}-P_X\right\rVert_1\leq \epsilon,\ \forall u\}$ is a closed subset of $\mathcal{A}$ since $||.||_1$ is closed on the interval $[0,\epsilon]$. Therefore, $\mathcal {A}'$ is compact. Since $I(U;Y)$ is a continuous mapping over $\mathcal{A}'$, the supremum is achieved. Thus, we can use the maximum instead of the supremum.
	\section*{Appendix B}
	In the following lemma two properties of  null space Null$(P_{X|Y})$ are investigated.
	\begin{lemma}\label{null}
		Let $\beta$ be a vector in $\mathbb{R}^{|\mathcal{X}|}$. $\beta\in\text{Null}(P_{X|Y})$ if and only if $\beta\in\text{Null}(M)$, where $M\in \mathbb{R}^{|\mathcal{X}|\times|\mathcal{Y}|}$ is constructed as follows:
		Let $V$ be the matrix of right eigenvectors of $P_{X|Y}$, i.e., $P_{X|Y}=U\Sigma V^T$ and $V=[v_1,\ v_2,\ ... ,\ v_{|\mathcal{Y}|}]$, then $M$ is defined as
		\begin{align*}
		M \triangleq \left[v_1,\ v_2,\ ... ,\ v_{|\mathcal{X}|}\right]^T.  
		\end{align*}  
		Furthermore, if $\beta\in\text{Null}(P_{X|Y})$, then $1^T\beta=0$.
	\end{lemma}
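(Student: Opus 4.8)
The plan is to read off the claim directly from the singular value decomposition, since $M$ is built precisely from the leading right singular vectors of $P_{X|Y}$. A preliminary remark: because both $P_{X|Y}$ and $M$ are $|\mathcal{X}|\times|\mathcal{Y}|$ matrices, their null spaces are subspaces of $\mathbb{R}^{|\mathcal{Y}|}$, so the vectors $\beta$ under consideration lie in $\mathbb{R}^{|\mathcal{Y}|}$; the goal is to show these two null spaces coincide.

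Writing $P_{X|Y}=U\Sigma V^T$ with $U\in\mathbb{R}^{|\mathcal{X}|\times|\mathcal{X}|}$ and $V\in\mathbb{R}^{|\mathcal{Y}|\times|\mathcal{Y}|}$ orthogonal, the crucial structural fact I would invoke is the full row rank assumption on $P_{X|Y}$: it guarantees that the $|\mathcal{X}|$ singular values $\sigma_1,\dots,\sigma_{|\mathcal{X}|}$ on the diagonal of the leading $|\mathcal{X}|\times|\mathcal{X}|$ block of $\Sigma$ are all strictly positive (the remaining columns of $\Sigma$ being zero). For the equivalence I would then compute, for any $\beta\in\mathbb{R}^{|\mathcal{Y}|}$, that $P_{X|Y}\beta=0 \iff \Sigma V^T\beta=0$, using invertibility of $U$. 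Since the $i$-th entry of $\Sigma V^T\beta$ equals $\sigma_i\,(v_i^T\beta)$ for $1\le i\le|\mathcal{X}|$, strict positivity of each $\sigma_i$ forces $\Sigma V^T\beta=0$ to hold if and only if $v_i^T\beta=0$ for every $i\le|\mathcal{X}|$. As the rows of $M$ are exactly $v_1^T,\dots,v_{|\mathcal{X}|}^T$, this last condition is precisely $M\beta=0$, establishing both directions at once.

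For the second claim, that $\beta\in\mathrm{Null}(P_{X|Y})$ implies $\mathbf{1}^T\beta=0$, I would use column-stochasticity: each column of $P_{X|Y}$ is a conditional law $P_{X|Y=y}$ summing to one, so $\mathbf{1}^T P_{X|Y}=\mathbf{1}^T$, the all-ones row vector in $\mathbb{R}^{|\mathcal{Y}|}$. Left-multiplying $P_{X|Y}\beta=0$ by $\mathbf{1}^T$ gives $\mathbf{1}^T\beta=(\mathbf{1}^T P_{X|Y})\beta=0$.

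I do not anticipate a substantive obstacle, as the argument is essentially a structured reading of the SVD. The one point that genuinely requires care is the appeal to full row rank: this is exactly what forces $\sigma_i>0$ for all $i\le|\mathcal{X}|$, and hence what makes the omitted singular vectors $v_{|\mathcal{X}|+1},\dots,v_{|\mathcal{Y}|}$ a basis of $\mathrm{Null}(P_{X|Y})$. Were some retained $\sigma_i$ allowed to vanish, the step identifying $\Sigma V^T\beta=0$ with $v_i^T\beta=0$ would break down, so I would state the positivity of the singular values explicitly as the hinge of the proof.
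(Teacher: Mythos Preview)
Your proposal is correct and follows essentially the same SVD-based approach as the paper: both identify $\mathrm{Null}(P_{X|Y})$ with the orthogonal complement of $\{v_1,\dots,v_{|\mathcal{X}|}\}$ via the full row rank assumption, and both derive $\mathbf{1}^T\beta=0$ from column-stochasticity $\mathbf{1}^TP_{X|Y}=\mathbf{1}^T$. The paper phrases the first part through the span of the trailing singular vectors $\{v_{|\mathcal{X}|+1},\dots,v_{|\mathcal{Y}|}\}$ and their orthogonality to the rows of $M$, while you compute $\Sigma V^T\beta$ entrywise, but the content is identical; you also correctly note that $\beta$ must live in $\mathbb{R}^{|\mathcal{Y}|}$, fixing a typo in the statement.
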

	\begin{proof}
		Since the rank of $P_{X|Y}$ is $|\mathcal{X}|$, every vector $\beta$ in Null$(P_{X|Y})$ can be written as a linear combination of $\{v_{|\mathcal{X}|+1}$, ... ,$v_{|\mathcal{Y}|}\}$ and since each vector in $\{v_{|\mathcal{X}|+1}$, ... ,$v_{|\mathcal{Y}|}\}$ is orthogonal to the rows of $M$ we conclude that $\beta\in\text{Null}(M)$. If $\beta\in\text{Null}(M)$, $\beta$ is orthogonal to the vectors $\{v_1$,...,$v_{|\mathcal{X}|}\}$ and thus, $\beta\in \text{linspan}\{v_{|\mathcal{X}|+1},...,v_{|\mathcal{Y}|}\}=\text{Null}(P_{X|Y})$. Furthermore, for every $\beta\in\text{Null}(P_{X|Y})$, we can write
		$
		1^T\beta=1^TP_{X|Y}\beta=0.
		$
	\end{proof}
	The next lemma shows that if the Markov chain $X-Y-U$ holds and $J_u$ satisfies the three properties \eqref{prop1}, \eqref{prop2} and \eqref{prop3}, then $P_{Y|U=u}$ lies in a convex polytope.
	%In last line, non-singularity of the channel matrix is used. 
	%The next proposition shows that $I(U;Y)$ can be locally approximated by a squared Euclidean metric \cite{borade,huang}.
	\begin{lemma}\label{null2}
		For sufficiently small $\epsilon>0$, for every $u\in\mathcal{U}$, the vector $P_{Y|U=u}$ belongs to the following convex polytope $\mathbb{S}_{u}$
		\begin{align*}
		\mathbb{S}_{u} = \left\{y\in\mathbb{R}^{|\mathcal{Y}|}|My=MP_Y+\epsilon M\begin{bmatrix}
		P_{X|Y_1}^{-1}J_u\\0
		\end{bmatrix},\ y\geq 0\right\},
		\end{align*}
		%where $M\in \mathcal{R}^{|\mathcal{X}|\times|\mathcal{Y}|}$ is constructed in following procedure:
		%Let $V$ be the matrix of right eigenvectors of $P_{X|Y}$, i.e., $P_{X|Y}=U\Sigma V^T$ and $V=[v_1,\ v_2,\ ... ,\ v_{|\mathcal{Y}|}]$, then we write $M$ as
		%\begin{align*}
		%M \triangleq \left[v_1,\ v_2,\ ... ,\ v_{|\mathcal{X}|}\right]^T.  
		%\end{align*}  	
		where $\begin{bmatrix}
		P_{X|Y_1}^{-1}J_u\\0
		\end{bmatrix}\in\mathbb{R}^{|\cal Y|}$ and $J_u$ satisfies \eqref{prop1}, \eqref{prop2}, and \eqref{prop3}.
	\end{lemma}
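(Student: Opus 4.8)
The plan is to verify the two defining conditions of $\mathbb{S}_u$ separately: the affine equation $M P_{Y|U=u} = M P_Y + \epsilon M \begin{bmatrix} P_{X|Y_1}^{-1}J_u \\ 0\end{bmatrix}$, and the non-negativity $P_{Y|U=u}\geq \bm{0}$. The second condition is immediate because $P_{Y|U=u}$ is a genuine conditional probability vector, so every entry is non-negative irrespective of $\epsilon$. Thus the real content is the affine equation, which I would obtain by combining the Markov structure with the null-space characterization of Lemma~\ref{null}.

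First I would exploit the Markov chain $X-Y-U$, which in matrix form reads $P_{X|U=u} = P_{X|Y}\,P_{Y|U=u}$, together with the marginal identity $P_X = P_{X|Y}\,P_Y$. Subtracting these and using the perturbation representation $P_{X|U=u} = P_X + \epsilon J_u$ yields $P_{X|Y}\,(P_{Y|U=u}-P_Y) = \epsilon J_u$, so that $P_{Y|U=u}-P_Y$ is a solution of the underdetermined system $P_{X|Y}\,z = \epsilon J_u$. The key step is to exhibit an explicit particular solution using the block decomposition $P_{X|Y}=[P_{X|Y_1},P_{X|Y_2}]$ with $P_{X|Y_1}$ invertible: setting $z_u = \epsilon\begin{bmatrix} P_{X|Y_1}^{-1}J_u \\ 0\end{bmatrix}$ gives $P_{X|Y}\,z_u = P_{X|Y_1}P_{X|Y_1}^{-1}\epsilon J_u = \epsilon J_u$. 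Consequently $P_{Y|U=u}-P_Y-z_u$ lies in $\text{Null}(P_{X|Y})$, i.e.\ one can write $P_{Y|U=u}-P_Y = z_u + n_u$ with $n_u\in\text{Null}(P_{X|Y})$.

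I would then left-multiply this identity by the matrix $M$. Invoking Lemma~\ref{null}, which identifies $\text{Null}(P_{X|Y})$ with $\text{Null}(M)$, the homogeneous part is annihilated, $M n_u = \bm{0}$, and only $M z_u = \epsilon M \begin{bmatrix} P_{X|Y_1}^{-1}J_u \\ 0\end{bmatrix}$ survives, leaving exactly $M P_{Y|U=u} = M P_Y + \epsilon M \begin{bmatrix} P_{X|Y_1}^{-1}J_u \\ 0\end{bmatrix}$. Together with the non-negativity noted above, this places $P_{Y|U=u}$ in $\mathbb{S}_u$. Finally I would remark that the three properties \eqref{prop1}--\eqref{prop3} required of $J_u$ are not extra hypotheses to check here but automatic consequences of its definition $J_u=(P_{X|U=u}-P_X)/\epsilon$: \eqref{prop1} and \eqref{prop2} follow from $\sum_x P_{X|U=u}(x)=\sum_x P_X(x)=1$ and the law of total probability $\sum_u P_U(u)P_{X|U=u}=P_X$, while \eqref{prop3} is the strong $\ell_1$-privacy criterion divided by $\epsilon$.

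The argument is essentially linear-algebraic, so no step is genuinely hard; the only point requiring care is the explicit particular solution, namely recognizing that the invertible block $P_{X|Y_1}$ allows one to solve $P_{X|Y}\,z=\epsilon J_u$ by placing the correction $\epsilon P_{X|Y_1}^{-1}J_u$ in the first $|\mathcal{X}|$ coordinates and zeros elsewhere. I would also note that the phrase ``sufficiently small $\epsilon$'' is not actually needed for the membership itself, which holds for every $\epsilon>0$ for which the perturbation decomposition is defined; it is carried along only because the surrounding local-approximation framework relies on $\epsilon$ being small.
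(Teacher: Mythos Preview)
Your proposal is correct and follows essentially the same route as the paper's proof: both use the Markov identity $P_{X|Y}(P_{Y|U=u}-P_Y)=\epsilon J_u$, produce the particular solution $\epsilon\begin{bmatrix}P_{X|Y_1}^{-1}J_u\\0\end{bmatrix}$ via the invertible block $P_{X|Y_1}$, observe that the remainder lies in $\text{Null}(P_{X|Y})=\text{Null}(M)$ by Lemma~\ref{null}, and then left-multiply by $M$. The paper presents the same decomposition by explicitly partitioning $\alpha=P_{Y|U=u}-P_Y$ into blocks and solving for $\alpha_1$ in terms of $\alpha_2$, whereas you phrase it as ``particular solution plus null-space element''; these are the same argument.
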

	\begin{proof}
		By using the Markov chain $X-Y-U$ we have
		\begin{align}
		P_{X|U=u}-P_X=P_{X|Y}[P_{Y|U=u}-P_Y]=\epsilon J_u.\label{jigar}
		\end{align}
		Let $\alpha=P_{Y|U=u}-P_Y$. By partitioning $\alpha$ into two parts $[\alpha_1\ \alpha_2]^T$ with sizes $|\cal X|$ and $|\cal Y|-|\cal X|$, respectively, from \eqref{jigar} we obtain
		\begin{align*}
		P_{X|Y_1}\alpha_1+P_{X|Y_2}\alpha_2=\epsilon J_u. 
		\end{align*}
		Using invertibilty of  $P_{X|Y_1}$, this implies that 
		\begin{align*}
		\alpha_1=\epsilon P_{X|Y_1}^{-1}J_u-P_{X|Y_1}^{-1}P_{X|Y_2}\alpha_2.
		\end{align*}
		Thus, we have
		\begin{align*}
		P_{Y|U=u} = P_Y+\epsilon\begin{bmatrix}
		P_{X|Y_1}^{-1}J_u\\0
		\end{bmatrix}
		+\begin{bmatrix}
		-P_{X|Y_1}^{-1}P_{X|Y_2}\alpha_2\\\alpha_2
		\end{bmatrix}.
		\end{align*}
		Note that the vector $\begin{bmatrix}
		-P_{X|Y_1}^{-1}P_{X|Y_2}\alpha_2\\\alpha_2
		\end{bmatrix}$ belongs to Null$(P_{X|Y})$ since we have
		\begin{align*}
		P_{X|Y}\begin{bmatrix}
		-P_{X|Y_1}^{-1}P_{X|Y_2}\alpha_2\\\alpha_2
		\end{bmatrix}&=[P_{X|Y_1}\ P_{X|Y_2}]\begin{bmatrix}
		-P_{X|Y_1}^{-1}P_{X|Y_2}\alpha_2\\\alpha_2
		\end{bmatrix}\\&=-P_{X|Y_1}P_{X|Y_1}^{-1}P_{X|Y_2}\alpha_2\!+\!P_{X|Y_2}\alpha_2\\&=\bm{0}.
		\end{align*}
		Thus, by Lemma~\ref{null} it belongs to Null$(M)$ so that we have
		\begin{align*}
		MP_{Y|U=u} = MP_Y+\epsilon M\begin{bmatrix}
		P_{X|Y_1}^{-1}J_u\\0
		\end{bmatrix}
		\end{align*}
		Consequently, if the Markov chain $X-Y-U$ holds and the perturbation vector $J_u$ satisfies \eqref{prop1}, \eqref{prop2}, and \eqref{prop3}, then we have $P_{Y|U=u}\in\mathbb{S}_u$.
	\end{proof}
	\begin{lemma}\label{44}
		Any vector $\alpha$ in $\mathbb{S}_u$ is a standard probability vector. Also, for any pair $(U,Y)$, for which $P_{Y|U=u}\in\mathbb{S}_u$, $\forall u\in\mathcal{U}$ with $J_u$ satisfying \eqref{prop1}, \eqref{prop2}, and \eqref{prop3}, we can have $X-Y-U$ and $P_{X|U=u}-P_X=\epsilon\cdot J_u$.  
	\end{lemma}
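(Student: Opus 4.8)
The plan is to prove the two assertions separately, with Lemma~\ref{null} as the main tool in both. Throughout, abbreviate the perturbation vector appearing in the definition of $\mathbb{S}_u$ by $w_u=\begin{bmatrix}P_{X|Y_1}^{-1}J_u\\0\end{bmatrix}\in\mathbb{R}^{|\mathcal{Y}|}$.

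For the first assertion, take $\alpha\in\mathbb{S}_u$; by definition $\alpha\geq\bm 0$, so it remains only to check $\bm 1^T\alpha=1$. The defining equation of $\mathbb{S}_u$ gives $M(\alpha-P_Y-\epsilon w_u)=\bm 0$, so $\alpha-P_Y-\epsilon w_u\in\mathrm{Null}(M)=\mathrm{Null}(P_{X|Y})$ by Lemma~\ref{null}, and the same lemma then forces this vector to be orthogonal to $\bm 1$. Hence $\bm 1^T\alpha=\bm 1^TP_Y+\epsilon\,\bm 1^Tw_u=1+\epsilon\,\bm 1^TP_{X|Y_1}^{-1}J_u$. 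Since each column of the conditional matrix $P_{X|Y_1}$ sums to one we have $\bm 1^TP_{X|Y_1}=\bm 1^T$, whence $\bm 1^TP_{X|Y_1}^{-1}=\bm 1^T$ and therefore $\bm 1^TP_{X|Y_1}^{-1}J_u=\bm 1^TJ_u=0$ by \eqref{prop1}. Thus $\bm 1^T\alpha=1$, and together with $\alpha\geq\bm 0$ this makes $\alpha$ a standard probability vector.

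For the second assertion, given a pair $(U,Y)$ with $P_{Y|U=u}\in\mathbb{S}_u$ for every $u$, I would define the joint law of $(X,Y,U)$ through the fixed kernel, $P_{XYU}(x,y,u)=P_{X|Y}(x|y)\,P_{Y|U}(y|u)\,P_U(u)$. By construction $X$ depends on $U$ only through $Y$, so $X-Y-U$ holds; moreover, since the $Y$-marginal of the pair equals $P_Y$, i.e.\ $\sum_uP_U(u)P_{Y|U=u}=P_Y$, the induced $(X,Y)$-marginal is exactly $P_{XY}$, and by the first assertion each $P_{Y|U=u}$ is a genuine distribution, so the construction is consistent and nonnegative.

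It then remains to evaluate $P_{X|U=u}-P_X$. In matrix form $P_{X|U=u}=P_{X|Y}P_{Y|U=u}$ and $P_X=P_{X|Y}P_Y$, so $P_{X|U=u}-P_X=P_{X|Y}(P_{Y|U=u}-P_Y)$. The crux, and essentially the only computation, is that membership in $\mathbb{S}_u$ constrains only $M\,P_{Y|U=u}$, so a priori $P_{Y|U=u}-P_Y$ can differ from $\epsilon w_u$ by an arbitrary element of $\mathrm{Null}(M)=\mathrm{Null}(P_{X|Y})$; but that element is annihilated by $P_{X|Y}$, leaving $P_{X|Y}(P_{Y|U=u}-P_Y)=\epsilon\,P_{X|Y}w_u$. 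Expanding the block product $P_{X|Y}w_u=[P_{X|Y_1}\ P_{X|Y_2}]\begin{bmatrix}P_{X|Y_1}^{-1}J_u\\0\end{bmatrix}=P_{X|Y_1}P_{X|Y_1}^{-1}J_u=J_u$ then yields $P_{X|U=u}-P_X=\epsilon J_u$, as claimed. I expect this cancellation of the null-space component, which guarantees the identity holds exactly and independently of the particular representative chosen in $\mathbb{S}_u$, to be the main point; the remaining steps are routine bookkeeping.
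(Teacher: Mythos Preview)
Your proposal is correct and follows essentially the same route as the paper: both parts hinge on Lemma~\ref{null} (equality of $\mathrm{Null}(M)$ and $\mathrm{Null}(P_{X|Y})$ together with orthogonality to $\bm 1$), the identity $\bm 1^TP_{X|Y_1}^{-1}=\bm 1^T$, and the block computation $P_{X|Y}w_u=J_u$. Your treatment of the Markov chain is in fact slightly more explicit than the paper's, which simply asserts $X-Y-U$ without spelling out the joint construction, but the underlying argument is the same.
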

	\begin{proof}
		For the first claim it is sufficient to show that for any $\gamma\in\mathbb{S}_u$, we have $1^T\gamma=1$. Since $\gamma\in\mathbb{S}_u$, we have $M(\gamma-P_Y-\epsilon\begin{bmatrix}
		P_{X|Y_1}^{-1}J_u\\0
		\end{bmatrix})=0$ and from Lemma~\ref{null} we obtain $1^T(\gamma-P_Y-\epsilon\begin{bmatrix}
		P_{X|Y_1}^{-1}J_u\\0
		\end{bmatrix})=0$ which yields $1^T\gamma=1$. In the last conclusion we used the fact that $1^T\begin{bmatrix}
		P_{X|Y_1}^{-1}J_u\\0
		\end{bmatrix}=0$. This is true since by using \eqref{prop1} we have $1_{|\mathcal{Y}|}^T\begin{bmatrix}
		P_{X|Y_1}^{-1}J_u\\ 0 \end{bmatrix}=1_{|\mathcal{X}|}^T
		P_{X|Y_1}^{-1}J_u\stackrel{(a)}{=}1^TJ_u=0$, where $(a)$ follows from $1_{|\mathcal{X}|}^T
		P_{X|Y_1}^{-1}=1_{|\mathcal{X}|}^T\Leftrightarrow 1_{|\mathcal{X}|}^T
		P_{X|Y_1}=1_{|\mathcal{X}|}^T$ and noting that columns of $P_{X|Y_1}$ are vector distributions.  
		If $P_{Y|U=u}\in\mathbb{S}_u$ for all $u\in\mathcal{U}$, then we have $X-Y-U$ and 
		\begin{align*}
		&M(P_{Y|U=u}-P_Y-\epsilon\begin{bmatrix}
		P_{X|Y_1}^{-1}J_u\\0
		\end{bmatrix})=0, \\ \Rightarrow\ &P_{X|Y}(P_{Y|U=u}-P_Y-\epsilon\begin{bmatrix}
		P_{X|Y_1}^{-1}J_u\\0
		\end{bmatrix})=0,\\
		\Rightarrow\ &P_{X|U=u}-P_X=\epsilon\cdot P_{X|Y}(\begin{bmatrix}
		P_{X|Y_1}^{-1}J_u\\0
		\end{bmatrix})=\epsilon\cdot J_u,
		\end{align*}
		where in last line we use the Markov chain $X-Y-U$. 
		Furthermore, $J_u$ satisfies \eqref{prop1}, \eqref{prop2} and \eqref{prop3} for all $u\in\mathcal{U}$. Thus, the privacy criterion defined in \eqref{local} holds.
	\end{proof}
\section*{Appendix C}
As mentioned before it is sufficient to consider $U$ such that $|\mathcal{U}|\leq|\mathcal{Y}|$. Let $\mathbb{S}^*_u$ be the set of extreme points of $\mathbb{S}_u$. Then, we show that $P^*_{Y|U=u}\in\mathbb{S}^*_u$ for all $u\in\mathcal{U}$. Assume that the minimum of $H(Y|U)$ is achieved by points $P^*_1,...,P^*_N$, where $N\leq|\mathcal{Y}|$ and $P^*_i\in \mathbb{S}_i$ for all $1\leq i\leq N$. Furthermore, we have
$\sum_i P_U(i)P^*_i=P_Y$. Now consider $P^*_i$, which belongs to $\mathbb{S}_i$ and suppose that $P^*_i\notin \mathbb{S}^*_i$ (This is a proof by contradiction). Since the set $\mathbb{S}_i$ is a convex polytope with dimension at most $|\mathcal{Y}|-1$, we can write $P^*_i$ as a convex combination of at most $|\mathcal{Y}|$ points in $\mathbb{S}^*_i$, i.e., we have
$P^*_i = \sum_{l=1}^{|\mathcal{Y}|} \lambda_l \psi_l^*$,   
with
$
\psi_l^*\in\mathbb{S}^*_i,\sum_{l=1}^{|\mathcal{Y}|}\lambda_l=1,\ \lambda_l\geq 0,\ \forall l\in\{1,..,|\mathcal{Y}|\}.
$
Since we assume that $P^*_i\notin \mathbb{S}^*_i$, at least two of $\lambda_l$'s need to be non-zero. Furthermore, by using Jensen inequality we have
$
H(P_i^*)>\sum_{l=1}^{|\mathcal{Y}|}\lambda_lH(\psi_l^*),
$
where the inequality is strict since we assumed $P^*_i\notin \mathbb{S}^*_i$. We claim that the points $\{P_1^*,...,P_{i-1}^*,\psi_1^*,...,\psi_{|\mathcal{Y}|},P_{i+1}^*,...,P_N^*\}$ with weights $\{P_U(1),...,P_U(i-1),\lambda_1P_U(i),...,\lambda_{|\mathcal{Y}|}P_U(i),P_U(i+1),...,P_U(N)\}$ achieve lower entropy compared to $P^*_1,...,P^*_N$. $\{P_1^*,..,P_{i-1}^*,\psi_1^*,..,\psi_{|\mathcal{Y}|},P_{i+1}^*,..,P_N^*\}$ is feasible since we have
\begin{align*}
\sum_{i=1}^{N} \!P_U(i)P^*_i\!=\!P_Y\! \Rightarrow \!\!\!\!\!\sum_{\begin{array}{c}\substack{l=1,\\ l\neq i}\end{array}}^{N} \!\!\!P_U(l)P^*_{l}\!+\!\!\sum_{j=1}^{|\mathcal{Y}|} P_U(i)\lambda(j)\psi^*(j)\!=\!P_Y,
\end{align*}  
furthermore,
\begin{align*}
\sum_{j=1}^{N}\!\! P_U(j)H(P_j^*)\!\!>\!\!\!\!\!\sum_{\begin{array}{c}\substack{j=1,\\ j\neq i}\end{array}}^{N}\!\!\!P_U(j)H(P_j^*)+\sum_{l=1}^{|\mathcal{Y}|}P_U(i)\lambda_lH(\psi_l^*),
\end{align*}
which is contradiction with the assumption that $P^*_1,...,P^*_N$ achieves the minimum of $H(Y|U)$. Thus, each $P^*_i$ has to belong to $\mathbb{S}_i^*$.
\section*{Appendix D}
Consider the set $\mathbb{S}=\{y\in\mathbb{R}^{|\mathcal{Y}|}|My=MP_Y\}$. Any element in $\mathbb{S}$ has sum elements equal to one since we have $M(y-P_Y)=0\Rightarrow y-P_Y\in \text{Null}(P_{X|Y})$ and from Lemma~\ref{null} we obtain $1^T(y-P_Y)=0\Rightarrow 1^Ty=1$. The basic solutions of $\mathbb{S}$ are $W_{\Omega}^*$ defined as follows: Let $\Omega = \{\omega_1,..,\omega_{|\mathcal{X}|}\}$, where $\omega_i\in\{1,..,|\mathcal{Y}|\}$, then
$
W_{\Omega}^*(\omega_i)=M_{\Omega}^{-1}MP_Y(i),
$
and other elements of $W_{\Omega}^*$ are zero. Thus, the sum over all elements of $M_{\Omega}^{-1}MP_Y$ is equal to one, since each element in $\mathbb{S}$ has to sum up to one. For the second statement consider the set $\mathbb{S}^{'}=\left\{y\in\mathbb{R}^{|\mathcal{Y}|}|My=MP_Y+\epsilon M\begin{bmatrix}
P_{X|Y_1}^{-1}J_u\\0
\end{bmatrix}\right\}$. As argued before, basic solutions of $\mathbb{S}^{'}$ are $V_{\Omega}^*$, where 
\begin{align}\label{s_u^*}
V_{\Omega}^*(\omega_i)= (M_{\Omega}^{-1}MP_Y+\epsilon M_{\Omega}^{-1}M\begin{bmatrix}
P_{X|Y_1}^{-1}J_u\\0\end{bmatrix})(i).
\end{align}
Here elements of $V_{\Omega}^*$ can be negative or non-negative. From Lemma~\ref{null2}, each element in $\mathbb{S}^{'}$ has to sum up to one. Thus, by using the first statement of this proposition, the sum over all elements of the vector $ M_{\Omega}^{-1}M\begin{bmatrix}
P_{X|Y_1}^{-1}J_u\\0\end{bmatrix}$ is equal to zero.
	\section*{Appendix E}\label{appae}
	In the following we use the notation $\cong$ which is defined as 
	\begin{align*}
	f(x) = g(x) + o(x) \rightarrow f(x)\cong g(x),
	\end{align*}
	where $g(x)$ is the first Taylor expansion of $f(x)$.
	\begin{lemma}\label{5}
		Assume $P_{XY}\in\mathcal{H}_{XY}$ and $V_{\Omega_u}^*$ is an extreme point of the set $\mathbb{S}_u$, then for $P_{Y|U=u}=V_{\Omega_u}^*$ we have
		\begin{align*}
		H(P_{Y|U=u}) &=\sum_{y=1}^{|\mathcal{Y}|}-P_{Y|U=u}(y)\log(P_{Y|U=u}(y))\\&=-(b_u+\epsilon a_uJ_u)+o(\epsilon),
		\end{align*}
		with $b_u = l_u \left(M_{\Omega_u}^{-1}MP_Y\right),\ 
		a_u = l_u\left(M_{\Omega_u}^{-1}M(1\!\!:\!\!|\mathcal{X}|)P_{X|Y_1}^{-1}\right)\in\mathbb{R}^{1\times|\mathcal{X}|},\
		l_u = %\left[\log\left(M_{\Omega_u}^{-1}M(1\!\!:\!\!|\mathcal{X}|)P_{Y}(i)\right),...,\log\left(M_{\Omega_u}^{-1}M(1\!\!:\!\!|\mathcal{X}|)P_{Y}(|\mathcal{X}|)\right)\right],
		\left[\log\left(M_{\Omega_u}^{-1}MP_{Y}(i)\right)\right]_{i=1:|\mathcal{X}|}\in\mathbb{R}^{1\times|\mathcal{X}|},
		$ and $M_{\Omega_u}^{-1}MP_{Y}(i)$ stands for $i$-th ($1\leq i\leq |\mathcal{X}|$) element of the vector $M_{\Omega_u}^{-1}MP_{Y}$. Furthermore, $M(1\!\!:\!\!|\mathcal{X}|)$ stands for submatrix of $M$ with first $|\mathcal{X}|$ columns. 
	\end{lemma}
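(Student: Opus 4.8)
The plan is to exploit the fact that an extreme point $V_{\Omega_u}^*$ of $\mathbb{S}_u$ has only $|\mathcal{X}|$ nonzero coordinates, namely those indexed by $\Omega_u$, so that the entropy sum over all $|\mathcal{Y}|$ coordinates collapses (using the convention $0\log 0=0$) to a sum of just $|\mathcal{X}|$ terms. First I would write the $i$-th nonzero entry, using \eqref{defin}, as $V_{\Omega_u}^*(\omega_i)=p_i+\epsilon\,\delta_i$, where $p_i=(M_{\Omega_u}^{-1}MP_Y)(i)$ is the unperturbed part and $\delta_i=(M_{\Omega_u}^{-1}M(1\!:\!|\mathcal{X}|)P_{X|Y_1}^{-1}J_u)(i)$ is the perturbation. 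Here I use the identity $M\begin{bmatrix}P_{X|Y_1}^{-1}J_u\\0\end{bmatrix}=M(1\!:\!|\mathcal{X}|)P_{X|Y_1}^{-1}J_u$, which holds because the trailing zeros annihilate the last $|\mathcal{Y}|-|\mathcal{X}|$ columns of $M$.

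The key enabling observation is that, since $P_{XY}\in\mathcal{H}_{XY}$ and $V_{\Omega_u}^*$ is a feasible extreme point, the unperturbed vector $M_{\Omega_u}^{-1}MP_Y$ has strictly positive entries $p_i>0$ by Remark~\ref{ann}. This strict positivity is exactly what makes each $\log p_i$ finite and justifies a first-order Taylor expansion of the scalar map $t\mapsto -t\log t$ about $t=p_i$ for sufficiently small $\epsilon$. Expanding each term I would obtain
\begin{align*}
-V_{\Omega_u}^*(\omega_i)\log V_{\Omega_u}^*(\omega_i)=-p_i\log p_i-\epsilon\,\delta_i\,(1+\log p_i)+o(\epsilon).
\end{align*}

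Summing over $i=1,\dots,|\mathcal{X}|$, the linear-in-$\delta$ contribution $-\epsilon\sum_i\delta_i$ vanishes because $\sum_i\delta_i=1^T\!\left(M_{\Omega_u}^{-1}M\begin{bmatrix}P_{X|Y_1}^{-1}J_u\\0\end{bmatrix}\right)=0$ by Proposition~\ref{kos}. What remains is $-\sum_i p_i\log p_i-\epsilon\sum_i\delta_i\log p_i+o(\epsilon)$. Recognizing the row vector $l_u=[\log p_i]_{i=1:|\mathcal{X}|}$, the first sum is precisely $b_u=l_u\left(M_{\Omega_u}^{-1}MP_Y\right)$, while the second is $l_u\,\delta=l_u M_{\Omega_u}^{-1}M(1\!:\!|\mathcal{X}|)P_{X|Y_1}^{-1}J_u=a_uJ_u$, yielding $H(V_{\Omega_u}^*)=-(b_u+\epsilon\,a_uJ_u)+o(\epsilon)$ as claimed.

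The main obstacle, and the only genuinely delicate point, is ensuring that the perturbed entries stay strictly positive and bounded away from $0$ so that the Taylor remainder is truly $o(\epsilon)$ rather than blowing up; this is guaranteed precisely by the $\mathcal{H}_{XY}$ hypothesis together with the smallness of $\epsilon$ asserted in Remark~\ref{ann}, which transfers positivity of $M_{\Omega_u}^{-1}MP_Y$ to the full vector $V_{\Omega_u}^*$. Everything else is a routine expansion followed by a rearrangement into the stated matrix notation, so I expect the bulk of the effort to lie in carefully invoking Proposition~\ref{kos} and Remark~\ref{ann} rather than in the calculus itself.
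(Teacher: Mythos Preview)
Your proposal is correct and follows essentially the same approach as the paper's proof: both write the nonzero entries as $p_i+\epsilon\delta_i$, perform a first-order Taylor expansion (the paper expands $\log(1+x)$ inside $\log(p_i+\epsilon\delta_i)$, you equivalently expand $-t\log t$ directly), and then invoke $\sum_i\delta_i=0$ to kill the extra linear term before collecting into $b_u$ and $a_uJ_u$. The paper re-derives $\sum_i\delta_i=0$ inside the proof, whereas you simply cite Proposition~\ref{kos}; either way the argument is the same.
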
 
\begin{proof}
	Let $V_{\Omega_u}^*$ be defined as in \eqref{defin} and let $V_{|\mathcal{X}|\times1}=M_{\Omega_u}^{-1}MP_Y(i)+\epsilon M_{\Omega_u}^{-1}M(1:|\mathcal{X}|)P_{X|Y_1}^{-1}J_u$, thus we have
	\begin{align*}
	&H(P_{Y|U=u})=\sum_i V(i)\log(V(i))=-\sum_iV(i)\times\\  &\left(\log(M_{\Omega_u}^{-1}MP_Y(i))\!+\!\log(1\!+\!\epsilon\frac{M_{\Omega_u}^{-1}M(1\!:\!|\mathcal{X}|)P_{X|Y_1}^{-1}J_u(i)}{M_{\Omega_u}^{-1}MP_Y(i)})\right)\\
	&\stackrel{(a)}{=}-\sum_i V(i)\times\\&\left(\log(M_{\Omega_u}^{-1}MP_Y(i))+\epsilon\frac{M_{\Omega_u}^{-1}M(1\!:\!|\mathcal{X}|)P_{X|Y_1}^{-1}J_u(i)}{M_{\Omega_u}^{-1}MP_Y(i)} \right)\\
	&\stackrel{(b)}{=}-\left[\sum_i \log(M_{\Omega_u}^{-1}MP_Y(i))M_{\Omega_u}^{-1}MP_Y(i)\right]\\&-\!\epsilon\!\!\left[\sum_i M_{\Omega_u}^{-1}M(1\!:\!|\mathcal{X}|)P_{X|Y_1}^{-1}J_u(i)\log(M_{\Omega_u}^{-1}MP_Y(i)) \right]\!\!+\!o(\epsilon)\\&=-(a_u+\epsilon b_uJ_u)+o(\epsilon),
	\end{align*}
	where $a_u$ and $b_u$ are defined in Lemma~\ref{5}. Furthermore, in (a) we used the first order Taylor expansion of $\log(1+x)$ and in (b) we used the fact that $\sum_i M_{\Omega_u}^{-1}M(1\!:\!|\mathcal{X}|)P_{X|Y_1}^{-1}J_u(i)=0$ which can be proved as follows.
	Let $K_u=M_{\Omega_u}^{-1}M(1\!:\!|\mathcal{X}|)P_{X|Y_1}^{-1}J_u(i)$ and let $\Omega_u=\{\omega_1,..,\omega_{|\mathcal{X}|}\}$. For $i\in\{1,..,|\mathcal{X}|\}$, we define $K_u'\in{\mathbb{R}^{|\mathcal{Y}|}}$ as 
	$
	K_u'(\omega_i) = K_u(i), 
	$
	and other elements are set to be zero. We want to show that $1^TK_u=0$. It can be seen that  
	$
	MK_u'=M_{\Omega_u}K_u=M(1\!:\!|\mathcal{X}|)P_{X|Y_1}^{-1}J_u$
	which implies
	$M(K_u'-\begin{bmatrix}
	P_{X|Y_1}^{-1}J_u\\0
	\end{bmatrix})=0.
	$
	Thus, by using Lemma~\ref{null}, we obtain 
	$
	1^TK_u=1^TK_u'=1^T\begin{bmatrix}
	P_{X|Y_1}^{-1}J_u\\0
	\end{bmatrix}=1^TJ_u=0.
	$
	\end{proof}
\section*{Appendix F}
First, we find the sets $\mathbb{S}_u^*$ for $u\in\{1,2,3,4\}$. By using the SVD of the leakage matrix we have
\begin{align*}
M = \begin{bmatrix}
-0.5 \ -.5 \ -0.5 \ -0.5\\0.5345\ -0.8018\ 0\ 0.2673
\end{bmatrix}.
\end{align*}
Since $|\mathcal{X}|=2$, possible sets of $\Omega$ are $\Omega_{u_1}=\{1,2\}$, $\Omega_{u_2}=\{1,3\}$, $\Omega_{u_3}=\{1,4\}$, $\Omega_{u_4}=\{2,4\}$, $\Omega_{u_5}=\{2,3\}$ and $\Omega_{u_6}=\{3,4\}$. For these sets we calculate $M_{\Omega_{u_i}}^{-1}MP_Y$ as follows
\begin{align*}
&M_{\Omega_{u_1}}^{-1}MP_Y =\begin{bmatrix}
0.675\\0.375
\end{bmatrix},\ M_{\Omega_{u_2}}^{-1}MP_Y =\begin{bmatrix}
0.1875\\0.8125
\end{bmatrix}, \\
&M_{\Omega_{u_3}}^{-1}MP_Y =\begin{bmatrix}
-0.625\\1.625
\end{bmatrix},\ M_{\Omega_{u_4}}^{-1}MP_Y =\begin{bmatrix}
-0.125\\1.125
\end{bmatrix},\\
&M_{\Omega_{u_5}}^{-1}MP_Y =\begin{bmatrix}
0.1563\\0.8437
\end{bmatrix},\ M_{\Omega_{u_6}}^{-1}MP_Y =\begin{bmatrix}
0.625\\0.375
\end{bmatrix}. 
\end{align*}
The sets $\Omega_{u_3}$ and $\Omega_{u_4}$ produce negative elements and thus we only consider the sets $\Omega_{u_1}$, $\Omega_{u_2}$, $\Omega_{u_5}$ and $\Omega_{u_6}$ to construct the extreme points of $\mathbb{S}_u$. Let $J_u=\begin{bmatrix}
J_u^1\\J_u^2
\end{bmatrix}$ for $u\in\{1,2,3,4\}$. By using \eqref{prop1} $J_u^1+J_u^2=0$ we can show $J_u$ by $\begin{bmatrix}
-J_u^2\\J_u^2
\end{bmatrix}$. The sets $\mathcal{S}_u^*$ for $u\in\{1,2,3,4\}$ are obtained as (using \eqref{defin})
\begin{align*}
&V_{\Omega_{u_1}}^* = \begin{bmatrix}
0.675+\epsilon2J_u^2\\0.325-\epsilon2J_u^2\\0\\0
\end{bmatrix},\
V_{\Omega_{u_2}}^* = \begin{bmatrix}
0.1875+\epsilon 5J_u^2\\0\\0.8125-\epsilon 5J_u^2\\0
\end{bmatrix}\\
& V_{\Omega_{u_5}}^* = \begin{bmatrix}
0\\0.1563-\epsilon 2.5J_u^2\\0\\0.8437+\epsilon 2.5J_u^2
\end{bmatrix},\
V_{\Omega_{u_6}}^* = \begin{bmatrix}
0\\0\\0.6251-\epsilon 10J_u^2\\0\\0.3749+\epsilon 10J_u^2
\end{bmatrix},
\end{align*}
\iffalse
\begin{align*}
\mathcal{S}_u^*= \left\{\begin{bmatrix}
0.675-\epsilon2J_u^2\\0.325+\epsilon2J_u^2\\0\\0
\end{bmatrix},\begin{bmatrix}
0.1875+\epsilon5.002J_u^1\\0.325+\epsilon2J_u^2\\0\\0
\end{bmatrix},\right\}
\end{align*}
\fi
thus, we have $\mathbb{S}_u^*=\{V_{\Omega_{u_1}}^*,V_{\Omega_{u_2}}^*,V_{\Omega_{u_5}}^*,V_{\Omega_{u_6}}^*\}$ for $u\in\{1,2,3,4\}$. Since for each $u$, $|\mathbb{S}_u^*|=4$, $4^4$ optimization problems need to be considered while not all are feasible. In Step 2, we choose first element of $\mathbb{S}_1^*$, second element of $\mathbb{S}_2^*$, third element of $\mathbb{S}_3^*$ and fourth element of $\mathbb{S}_4^*$. Thus we have the following quadratic problem
\begin{align*}
\min\ &P_10.9097+P_20.6962+P_30.6254+P_40.9544\\-&P_1\epsilon J_1^2 2.1089+P_2\epsilon J_2^2 10.5816-P_3 \epsilon J_3^2 6.0808 \\+&P_4\epsilon J_4^2 7.3747\\
&\text{s.t.}\begin{bmatrix}
\frac{1}{2}\\\frac{1}{4}\\ \frac{1}{8}\\ \frac{1}{8}
\end{bmatrix} = P_1 \begin{bmatrix}
0.675+\epsilon2J_1^2\\0.325-\epsilon2J_1^2\\0\\0
\end{bmatrix}+ P_2\begin{bmatrix}
0.1875+\epsilon 5J_2^2\\0\\0.8125-\epsilon 5J_2^2\\0
\end{bmatrix}\\&+P_3\begin{bmatrix}
0\\0.1563-\epsilon 2.5J_3^2\\0\\0.8437+\epsilon 2.5J_3^2
\end{bmatrix}+P_4\begin{bmatrix}
0\\0\\0.6251-\epsilon 10J_4^2\\0.3749+\epsilon 10J_4^2
\end{bmatrix},\\
&P_1J_1^2+P_2J_2^2+P_3J_3^2+P_4J_4^2=0,\ P_1,P_2,P_3,P_4\geq 0,\\
&|J_1^2|\leq \frac{1}{2},\ |J_2^2|\leq \frac{1}{2},\ |J_3^2|\leq \frac{1}{2},\ |J_4^2|\leq \frac{1}{2},
\end{align*}  
where the minimization is over $P_u$ and $J_u^2$ for $u\in\{1,2,3,4\}$. Now we convert the problem to a linear program. We have
\begin{align*}
\eta_1&= \begin{bmatrix}
0.675P_1+\epsilon2P_1J_1^2\\0.325P_1-\epsilon2P_1J_1^2
\end{bmatrix}=\begin{bmatrix}
\eta_1^1\\\eta_1^2
\end{bmatrix},\\
\eta_2&=\begin{bmatrix}
0.1875P_2+\epsilon 5P_2J_2^2\\0.8125P_2-\epsilon 5P_2J_2^2
\end{bmatrix}=\begin{bmatrix}
\eta_2^1\\\eta_2^2
\end{bmatrix},\\
\eta_3&= \begin{bmatrix}
0.1563P_3-\epsilon 2.5P_3J_3^2\\0.8437P_3+\epsilon 2.5P_3J_3^2
\end{bmatrix}=\begin{bmatrix}
\eta_3^1\\\eta_3^2
\end{bmatrix},\\
\eta_4&= \begin{bmatrix}
0.3749P_4-\epsilon 10P_4J_4^2\\0.6251P_4+\epsilon 10P_4J_4^2
\end{bmatrix}=\begin{bmatrix}
\eta_3^1\\\eta_3^2
\end{bmatrix}.
\end{align*}
\begin{align*}
\min\ &0.567\eta_1^1+1.6215\eta_1^2+2.415\eta_2^1+0.2995\eta_2^2\\
&2.6776\eta_3^1+0.2452\eta_3^2+0.6779\eta_4^1+1.4155\eta_4^2\\
&\text{s.t.}\begin{bmatrix}
\frac{1}{2}\\\frac{1}{4}\\ \frac{1}{8}\\ \frac{1}{8}
\end{bmatrix} = \begin{bmatrix}
\eta_1^1+\eta_2^1\\\eta_1^2+\eta_3^2\\\eta_2^2+\eta_4^1\\\eta_3^2+\eta_4^2
\end{bmatrix},\ \begin{cases}
&\eta_1^1+\eta_1^2\geq 0\\
&\eta_2^1+\eta_2^2\geq 0\\
&\eta_3^1+\eta_3^2\geq 0\\
&\eta_4^1+\eta_4^2\geq 0
\end{cases}\\
&\frac{0.325\eta_1^1-0.675\eta_1^2}{2}+\frac{0.8125\eta_2^1-0.1875\eta_2^2}{5}+\\
&\frac{0.1563\eta_3^2-0.8437\eta_3^1}{2.5}+\frac{0.6251\eta_4^2-0.3749\eta_4^1}{10}=0,\\
&\frac{|0.325\eta_1^1-0.675\eta_1^2|}{\eta_1^1+\eta_1^2}\!\leq \epsilon,\ \!\frac{|0.8125\eta_2^1-0.1875\eta_2^2|}{\eta_2^1+\eta_2^2}\!\leq 2.5\epsilon\\
&\frac{|0.1563\eta_3^2-0.8437\eta_3^1|}{\eta_3^1+\eta_3^2}\leq 1.125\epsilon,\\ &\frac{|0.6251\eta_4^2-0.3749\eta_4^1|}{\eta_4^1+\eta_4^2}\leq 5\epsilon.
\end{align*}

The solution to the obtained problem is as follows (we assumed $\epsilon = 10^{-2}$)
\begin{align*}
&P_U = \begin{bmatrix}
0.7048 \\ 0.1492 \\ 0.146 \\ 0
\end{bmatrix},\
J_1 = \begin{bmatrix}
-0.0023 \\ 0.0023
\end{bmatrix},\ J_2 = \begin{bmatrix}
0.5 \\ -0.5
\end{bmatrix}\\
&J_3 = \begin{bmatrix}
-0.5 \\ 0.5
\end{bmatrix},\
J_4 = \begin{bmatrix}
0 \\ 0
\end{bmatrix},\ \min \text{(cost)} = 0.8239,
\end{align*}
thus, for this combination we obtain $I(U;Y)\cong H(Y)-0.8239=1.75-0.8239 = 0.9261$. If we try all possible combinations we get the minimum cost as $0.8239$ so that we have $\max I(U;Y)\cong 0.9261$. For feasibility of $\sum_{u=1}^{|\mathcal{Y}|} P_uV_{\Omega_u}^*=P_Y$, consider the following combination: We choose first extreme point from each set $\mathbb{S}_u$ for $u\in\{1,2,3,4\}$. Considering the condition $\sum_{u=1}^{|\mathcal{Y}|} P_uV_{\Omega_u}^*=P_Y$, we have
\begin{align*}
\begin{bmatrix}
\frac{1}{2}\\\frac{1}{4}\\ \frac{1}{8}\\ \frac{1}{8}
\end{bmatrix} = P_1 \begin{bmatrix}
0.675+\epsilon2J_1^2\\0.325-\epsilon2J_1^2\\0\\0
\end{bmatrix}+ P_2\begin{bmatrix}
0.675+\epsilon2J_2^2\\0.325-\epsilon2J_2^2\\0\\0
\end{bmatrix}
\end{align*}
\begin{align*}+P_3\begin{bmatrix}
0.675+\epsilon2J_3^2\\0.325-\epsilon2J_3^2\\0\\0
\end{bmatrix}+P_4\begin{bmatrix}
0.675+\epsilon2J_4^2\\0.325-\epsilon2J_4^2\\0\\0
\end{bmatrix},
\end{align*} 
which is not feasible.% since there is no $P_U$ and $J_u$ satisfying this constraint. Thus, not all combinations are feasible. %Furthermore, if we set $\epsilon=0$ we obtain the solution in \cite{deniz6}. The mutual information in \cite{deniz6} equals to $0.9063$ which is less than our result.
\section*{Appendix G}
Let $P_{XY}\in \mathcal{H}_{XY}$ and let $\Omega^1$ be the set of all $\Omega_i\in\{1,..,|\mathcal{Y}|\},\ |\Omega|=|\cal X|$, such that each $\Omega_i$ produces a valid standard distribution vector $M_{\Omega_i}^{-1}MP_Y$, i.e., all elements in the vector $M_{\Omega_i}^{-1}MP_Y$ are positive. Let $\Omega^2$ be the set of all $\Omega_j\in\{1,..,|\mathcal{Y}|\},\ |\Omega|=|\cal X|$, such that for each $\Omega_j$, the vector $M_{\Omega_j}^{-1}MP_Y$ contains at least one negative element. Clearly, since $P_{XY}\in \mathcal{H}_{XY}$, $\Omega^1\cup \Omega^2$ are all subsets of $\{1,..,|\cal Y|\}$ with cardinality $|\cal X|$. In Theorem~2 the first order Taylor expansion of $\log(1+x)$ in order to approximate the conditional entropy $H(Y|U)$ is used. Therefore, we must have $|\epsilon\frac{M_{\Omega}^{-1}M(1:|\mathcal{X}|)P_{X|Y_1}^{-1}J_u(y)}{M_{\Omega}^{-1}MP_Y(y)}|<1$ for all $y\in\{1,..,|\cal X|\}$ and $\Omega\in\Omega^1$. One sufficient condition for $\epsilon$ to satisfy this inequality is to have $\epsilon< \frac{\min_{y,\Omega\in \Omega^1} M_{\Omega}^{-1}MP_Y(y)}{\max_{\Omega\in \Omega^1} |\sigma_{\max} (H_{\Omega})|}$. In this case for all $y$ and $\Omega\in\Omega^1$ we have
\begin{align*}
\epsilon^2|H_{\Omega}J_u(y)|^2&\leq \epsilon^2\sigma_{\max}^2(H_{\Omega})\lVert J_u\rVert_2^2 \stackrel{(a)}{\leq} \epsilon^2 \max_{\Omega\in \Omega^1} \sigma^2_{\max} (H_{\Omega})\\ &< (\min_{y,\Omega\in \Omega^1} M_{\Omega}^{-1}MP_Y(y))^2\leq (M_{\Omega}^{-1}MP_Y(y))^2,
\end{align*}
which implies
\begin{align*}
\epsilon|M_{\Omega}^{-1}M(1:|\mathcal{X}|)P_{X|Y_1}^{-1}J_u(y)|< M_{\Omega}^{-1}MP_Y(y).
\end{align*}
The step (a) follows from $\lVert J_u\rVert_2^2\leq\lVert J_u\rVert_1^2\leq 1$. Note that $H_{\Omega}$ is an invertible matrix so that $|\sigma_{\max}|>0$.
Furthermore, we should ensure that not all negative elements of the vectors $M_{\Omega}^{-1}MP_Y$ for $\Omega\in \Omega^2$ can become positive by adding $M_{\Omega}^{-1}M(1:|\mathcal{X}|)P_{X|Y_1}^{-1}J_u$. One sufficient condition for $\epsilon$ is to have $\epsilon< \frac{\min_{\Omega\in \Omega^2} \max_{y:M_{\Omega}^{-1}MP_Y(y)<0} |M_{\Omega}^{-1}MP_Y(y)|}{\max_{\Omega\in \Omega^2} |\sigma_{\max} (H_{\Omega})|}$, since in this case for all $y$ and $\Omega\in\Omega^2$ we have
\begin{align*}
\epsilon^2|H_{\Omega}J_u(y)|^2&\leq \epsilon^2\sigma_{\max}^2(H_{\Omega})\lVert J_u\rVert_2^2 \leq \epsilon^2 \max_{\Omega\in \Omega^2} \sigma^2_{\max} (H_{\Omega})\\ &< \left(\min_{\Omega\in \Omega^2} \max_{y:M_{\Omega}^{-1}MP_Y(y)<0} |M_{\Omega}^{-1}MP_Y(y)|\right)^2,
\end{align*}
which implies
\begin{align*}
\epsilon|M_{\Omega}^{-1}M(1:|\mathcal{X}|)P_{X|Y_1}^{-1}J_u(y)|< \!\!\!\!\!\!\max_{\begin{array}{c}\substack{y:\\M_{\Omega}^{-1}MP_Y(y)<0}\end{array}}\!\!\!\!\!\! |M_{\Omega}^{-1}MP_Y(y)|,
\end{align*}
for all $y$ and $\Omega\in\Omega^2$.
Last inequality ensures that for all $\Omega\in \Omega^2$ the vector $M_{\Omega}^{-1}MP_Y+\epsilon M_{\Omega}^{-1}M\begin{bmatrix}
P_{X|Y_1}^{-1}J_u\\0\end{bmatrix}$ contains at least one negative element which makes it an infeasible vector distribution. 
%Let $\epsilon^1 = \frac{\min_{\Omega\in \Omega^2} \max_{y:M_{\Omega}^{-1}MP_Y(y)<0} |M_{\Omega}^{-1}MP_Y(y)|}{\max_{\Omega\in \Omega^2} |\sigma_{\max} (H_{\Omega})|}$ and $\epsilon^2=\frac{\min_{y,\Omega\in \Omega^1} M_{\Omega}^{-1}MP_Y(y)}{\max_{\Omega\in \Omega^1} |\sigma_{\max} (H_{\Omega})|}$.
Thus, an upper bound for a valid $\epsilon$ in our work is as follows
$\epsilon < \min\{\epsilon_1,\epsilon_2\}$.
In summary the upper bound $\epsilon_2$ ensures that none of the sets $\Omega_j\in\Omega^2$ can produce basic feasible solution of $\mathbb{S}_u$ and the upper bound $\epsilon_1$ ensures that the approximation used in this paper is valid.
\section*{Appendix H}
For the first part we have
\begin{align*}
\text{MMSE}(X|U)&=E(X^2)-\sum_u P_u[\sum_x xP_{x|u}]^2\\&=E(X^2)-\sum_u P_u[\sum_x xP_x+\epsilon\sum xJ_u(x)]^2\\
&\stackrel{(a)}{=}\text{Var}(X)-\epsilon^2\sum_u P_u(\sum xJ_u)^2\\&\stackrel{(b)}{=}\text{Var}(X)-\epsilon^2\sum_u P_u(J_u(1))^2(x_1-x_2)^2\\&\stackrel{(c)}{\geq} \text{Var}(X)-\frac{1}{4}\epsilon^2(x_1-x_2)^2,
\end{align*}
where (a) follows from $\mathbb{E}(X)=0$, (b) follows from $J_u(1)+J_u(2)=0$ and (c) comes from the fact that $|J_u(1)|\leq \frac{1}{2}$.
For second part we have
\begin{align*}
\text{Var}(X)&=p_{x_1}x_1^2+p_{x_2}x_2^2=p_{x_1}(x_1^2-x_2^2)+x_2^2\\&\stackrel{(a)}{=}-x_2(x_1+x_2)+x_2^2=-x_1x_2\leq \frac{1}{4}(x_1-x_2)^2,
\end{align*}
and the equality holds for $x_1=-x_2$. Step (a) follows from $p_{x_1}(x_1-x_2)=-x_2$.
\bibliographystyle{IEEEtran}
\bibliography{IEEEabrv,Amir}
\end{document}